\DeclareMathOperator*{\argmin}{argmin}
\DeclareMathOperator*{\argmax}{argmax}
\newcommand{\mc}[1]{\mathcal{#1}}
\newcommand{\Sp}[1]{\left(#1\right)}
\newcommand{\Mp}[1]{\left[#1\right]}
\newcommand{\Bp}[1]{\left\{#1\right\}}
\newcommand{\abs}[1]{\left|#1\right|}
\newcommand{\Norm}[1]{\left\|#1\right\|}
\newcommand{\inner}[1]{\left\langle#1\right\rangle}
\newcommand{\ov}{\overline{V}}
\newcommand{\uv}{\underline{V}}
\newcommand{\oq}{\overline{Q}}
\newcommand{\uq}{\underline{Q}}
\newcommand{\A}{\mathcal{A}}
\newcommand{\E}{\mathbb{E}}
\newcommand{\G}{\mathcal{G}}
\renewcommand{\P}{\mathbb{P}}
\renewcommand{\S}{\mathcal{S}}
\renewcommand{\a}{\bm{a}}
\renewcommand{\r}{\bm{r}}
\newcommand{\R}{\mathbb{R}}
\newcommand{\N}{\mathcal{N}}
\newcommand{\M}{\mathcal{M}}
\newcommand{\F}{\mathcal{F}}
\newcommand{\pik}{\pi^k}
\newcommand{\pikni}{\pi^k_{-i}}
\newcommand{\hatphk}{\widehat{\mathbb{P}}_h^k}
\newcommand{\shk}{s_h^k}
\newcommand{\shkf}{s_h^{k, f}}
\newcommand{\ahk}{\bm{a}_h^k}
\newcommand{\Nhkf}{N_h^{k, f}}
\newcommand{\tq}{\widetilde{Q}}
\newcommand{\tv}{\widetilde{V}}
\newcommand{\bhk}{b_h^k}
\newcommand{\tmco}{\widetilde{\mc{O}}}
\newcommand{\pv}{\mathrm{pv}}
\newcommand{\reward}{\mathrm{r}}
\newcommand{\Ahik}{A_{h, i}^k}
\newcommand{\td}{\tilde{d}}
\newcommand{\wtthe}{\widetilde{\theta}}
\newcommand{\RNum}[1]{\uppercase\expandafter{\romannumeral #1\relax}}
\newtheorem{theorem}{Theorem}
\newtheorem{lemma}{Lemma}
\theoremstyle{definition}
\newtheorem{definition}{Definition}
\theoremstyle{remark}
\newtheorem{example}{Example}
\newtheorem{remark}{Remark}
\definecolor{ForestGreen}{rgb}{0.1333,0.5451,0.1333}
\definecolor{Gray}{gray}{0.85}
\newcommand{\revision}[1]{{#1}}
\title{Learning in Congestion Games with Bandit Feedback}
\author{Qiwen Cui\footnote{Equal contribution}\\ \url{qwcui@cs.washington.edu} \and Zhihan Xiong\footnotemark[\value{footnote}] \\ \url{zhihanx@cs.washington.edu} \and Maryam Fazel \\ \url{mfazel@uw.edu} \and Simon S. Du \\\url{ssdu@cs.washington.edu}}
\date{}
\begin{document}
\maketitle

\renewcommand*{\thefootnote}{\arabic{footnote}}

\begin{abstract}
In this paper, we investigate Nash-regret minimization in congestion games, a class of games with benign theoretical structure and broad real-world applications.
We first propose a centralized algorithm based on the optimism in the face of uncertainty principle for congestion games with (semi-)bandit feedback, and obtain finite-sample guarantees. 
Then we propose a decentralized algorithm via a novel combination of the Frank-Wolfe method and G-optimal design.
By exploiting the structure of the congestion game, we show the sample complexity of both algorithms
depends only 
polynomially on the number of players and the number of facilities, but not the size of the action set, which can be exponentially large in terms of the number of facilities.
We further define a new problem class, Markov congestion games, which allows us to 
model the non-stationarity in congestion games. 
We propose a centralized algorithm for Markov congestion games, whose sample complexity again has only polynomial dependence on all relevant problem parameters, but not the size of the action set.

\end{abstract}

\setcounter{footnote}{0} 
\section{Introduction}

Nash equilibrium (NE) is a widely adopted concept in game theory community, 
used to describe the behavior of multi-agent systems with selfish players \citep{roughgarden2010algorithmic}. At the Nash equilibrium, no player has the incentive to change its own strategy unilaterally, which implies it is a steady state of the game dynamics. 
For a general-sum game, computing the Nash equilibrium is PPAD-hard \citep{daskalakis2013complexity} and the query complexity is exponential in the number of players \citep{rubinstein2016settling}. To help address these issues, a natural approach is to consider games with special structures. In this paper, we focus on congestion games.

Congestion games are general-sum games with \emph{facilities} (resources) shared among players \citep{rosenthal1973class}. 
During the game, each player will decide what combination of facilities to utilize, and popular facilities will become congested, which results in a possibly higher cost on each user. One example of congestion game is the routing game \citep{fotakis2002structure}, where each player needs to travel from a given starting point to a destination point through some shared routes. These routes are represented as a traffic graph and the facilities are the edges. Each player will decide her path to go, and the more players use the same edge, 
the longer the edge travel time will be.
Congestion games also have wide applications in electrical grids \citep{ibars2010distributed}, internet routing \citep{al2017congestion} and rate allocation \citep{johari2004efficiency}. In many real-world scenarios, players can only have  (semi-)bandit feedback, i.e., players know only the payoff of the facilities they choose. This kind of learning under uncertainty has been widely studied in bandits and in reinforcement learning for the single-agent setting, while theoretical understanding for the multi-agent case is still largely missing. 

There are two types of algorithms in multi-agent systems, namely centralized algorithms and decentralized algorithms.
For centralized algorithms, there exists a central authority that can control and receive feedback from all players in the game. As we have global coordination, centralized algorithms usually have favorable performance. On the other hand, such a central authority may not always be available in practice, and thus people turn to decentralized algorithms, i.e., each player makes decisions individually and can only observe her own feedback. However, decentralized algorithms are vulnerable to \emph{nonstationarity} because each player is making decisions in a nonstationary environment as others' strategies are changing \citep{zhang2021multi}. In this paper, we will study both centralized and decentralized algorithms in congestion games with bandit feedback, and we will provide motivating scenarios for both algorithms in Section \ref{sec:example}.

The main challenge in designing algorithms for $m$-player congestion games with bandit feedback is the curse of exponential action set, i.e., the number of actions can be exponential in the number of facilities $F$ because every subset of facilities can be an action. As a result, an efficient algorithm should have sample complexity polynomial in $m$ and $F$ and has no dependence on the size of the action space. One closely related type of general-sum game is the potential game, in which each individual's payoff changes, resulting from strategy modification, can be quantified by a common potential function.
It is well-known that all congestion games are potential games, and each potential game has an equivalent congestion game formulation \citep{monderer1996potential}. However, existing algorithms designed for potential games all have sample complexity scaling at least linearly in the number of actions \citep{leonardos2021global, ding2022independent}, which is inefficient for congestion games. This motivates the following question:
\begin{center}
    \emph{Can we design provably \revision{sample-efficient} centralized and decentralized \revision{learning} algorithms \revision{for} congestion games with bandit feedback?}
\end{center}
We provide an affirmative answer to this question. \revision{To be precise, we use Nash-regret minimization (formally defined in Section \ref{sec:preliminaries}) as our objective for learning in congestion games. This regret-like objective commonly appears in the literature of online learning and reinforcement learning \citep{orabona2019modern, ding2022independent, liu2021sharp}, which focuses on finite-time analysis and accumulative rewards throughout the learning process instead of the asymptotic behavior. In general, a sublinear Nash regret implies a best-iterate convergence, meaning that the algorithm has reached the approximate Nash equilibrium at least once, while the converse does not hold.
}

We highlight our contributions below and compare our results with previous algorithms in Table \ref{tab:comparisons}. \revision{
Our algorithms are shaded and we prove sublinear Nash regrets for all of them. In Table \ref{tab:comparisons}, 
sample complexity refers to the number of samples required to reach best-iterate convergence to an $\epsilon$-approximate Nash equilibrium and the results are obtained by standard online-to-batch conversion as in Section 3.1 of \citep{jin2018q}. }

\subsection{Main Novelties and Contributions}

\begin{table}[!t]
\centering
\begin{tabular}{ c|c|c|c } 
 \hline
 Algorithms & Sample complexity & Nash regret & Decentralized\\ 
 \hline
 Nash-VI \citep{liu2021sharp} & $(\prod_{i=1}^mA_i)F/\epsilon^2$ & $\sqrt{(\prod_{i=1}^m A_i)FT}$ & No \\ 
 V-learning \citep{jin2021v}& $A_{\max}F/\epsilon^2$ \;(CCE) & NA & Yes \\ 
 IPPG \citep{leonardos2021global} & $A_{\max} mF/\epsilon^6$ & NA & Yes\\
 IPGA \citep{ding2022independent}& $A_{\max}^2m^3F^5/\epsilon^{5}$ & $mF^{4/3}\sqrt{A_{\max}}T^{4/5}$ & Yes\\
 \cellcolor{Gray} Nash-UCB \RNum{1} & $mF^2/\epsilon^2$ & $F\sqrt{mT}$ & No\\
 \cellcolor{Gray} Nash-UCB \RNum{2} & $m^2F^3/\epsilon^2$ & $mF^{3/2}\sqrt{T}$ & No \\
 \cellcolor{Gray} Frank-Wolfe with Exploration \RNum{1}& $m^{12}F^9/\epsilon^6$ & $m^2F^{3/2}T^{5/6}$ & Yes \\
 \cellcolor{Gray} Frank-Wolfe with Exploration \RNum{2}& $m^{12}F^{12}/\epsilon^{6}$ & $m^{2}F^2T^{5/6}$ & Yes \\
 \hline
\end{tabular}
\caption{Comparison of algorithms for congestion games in terms of sample complexity and Nash regret, where ``IPPG'' stands for ``independent projected policy gradient'', ``IPGA'' stands for ``independent policy gradient ascent'', ``\RNum{1}'' represents the setting of semi-bandit feedback and ``\RNum{2}'' represents the setting of bandit feedback. Bandit feedback is assumed for algorithms from previous work. Here, $A_i$ is the size of player $i$'s action space, $m$ is the number of players, $A_{\max}=\max_{i\in[m]}A_i$, $F$ is the number of facilities and $T$ is the number of samples collected.
Our algorithms are shaded.}
\label{tab:comparisons}
\end{table}

\textbf{1. Centralized algorithm for congestion game.} We adapt the principle of optimism in the face of uncertainty in stochastic bandits to ensure sufficient exploration in congestion games. We begin with congestion games with semi-bandit feedback, in which each player can observe the reward of every facility in the action. Instead of estimating the action reward as in stochastic multi-armed bandits, we estimate the facility rewards directly, which \emph{removes the dependence on the size of action space}. Furthermore, we consider congestion games with bandit feedback, in which each player can only observe the overall reward. In this setting, we borrow ideas from linear bandits to estimate the reward function and analyze the algorithm. The algorithm is provably sample efficient in both cases.

\noindent
\textbf{2. Decentralized algorithm for congestion game.} Our decentralized algorithm is a Frank-Wolfe method with exploration, in which each player only observes her own actions and rewards. To efficiently explore in the congestion game, we utilize G-optimal design allocation for bandit feedback and a specific distribution for semi-bandit feedback. As a result, the sample complexity does not depend on the number of actions. In addition, the $L_1$ smoothness parameter of the potential function does not depend on the number of actions, which is exploited by the Frank-Wolfe method. 
With the help of these two specific algorithmic designs for congestion games, 
we give the first decentralized algorithm for both semi-bandit feedback and bandit feedback that has no dependence on the size of the action space in congestion games.

\noindent
\textbf{3. Centralized algorithm for independent Markov congestion game.} We extend the formulation of congestion game into a Markov setting and propose the independent Markov congestion game (IMCG), in which each facility has its own internal state and state transition happens independently among all the facilities. 
In Section~\ref{sec:example}, we give some examples that fit in this model.
By  utilizing techniques from factored MDPs, we extend our centralized algorithms for congestion games to efficiently solve IMCGs, with both semi-bandit and bandit feedback.

\subsection{Motivating Examples}\label{sec:example}

We provide an exmple here to motivate our proposed models. See Section \ref{sec:preliminaries} for the formal definition of (semi-)bandit feedback and (Markov) congestion games and Appendix \ref{sec:additional_examples} for additional examples. 

\begin{example}[{\bf Routing Games}]
For a routing game, there are multiple players in a traffic graph travelling from starting points to destination points, and the facilities are the edges (roads). The cost of each edge is the waiting time, which depends on the number of players using that edge. 

\noindent $\bullet$ \textbf{Centralized algorithm for routing games:} Imagine each player is using Google Maps to navigate. Then Google Maps can serve as a center that knows the starting points and the destination points, as well as the real-time feedback of the waiting time on each edge of all the players. Google Maps itself also has the incentive to assign paths according to the Nash equilibrium strategy as then each player will find out that deviating from the navigation has no benefit and thus sticks to the app. 
    
\noindent $\bullet$ \textbf{Decentralized algorithm for routing games:} Consider the case where players are still using Google Maps but due to privacy concerns or limited bandwidth, they only use the offline version, which has access only to the information of each single user. Then Google Maps needs to use decentralized algorithms so that it can still assign Nash equilibrium strategy to each user after repeated plays.
    
\noindent $\bullet$ \textbf{Markov routing games:} For Markov routing games, the time cost on each edge will change between different timesteps, which is a more accurate model of the real-world. For instance, some roads are prone to car accidents, which will result in an increasing cost on the next timestep, and the chance of accidents also depends on the number of players using that edge currently. This is modeled by the Markovian facility state transition in independent Markov congestion games.

\end{example}

\section{Related Work}
\textbf{Potential Games.}
Potential games are general-sum games that admit a common potential function to quantify the changes in individual's payoff \citep{monderer1996potential}. Algorithmic game theory community has studied how different dynamics converge to the Nash equilibium, e.g., best response dynamics \citep{durand2018analysis,swenson2018best} and no-regret dynamics \citep{heliou2017learning,cheung2020chaos}, while usually they provide only asymptotic convergence, with either full information setting or bandit feedback setting. Recently, reinforcement learning community studied Markov potential games with bandit feedback, which can be applied to standard potential games. See the Markov Games part below for more details.

\noindent
\textbf{Congestion Games.}
Congestion games are developed in the seminal work \citep{rosenthal1973class}, and later \citet{monderer1996potential} builds a close connection between congestion games and potential games. Congestion games are divided into atomic 
and non-atomic congestion games depending on whether each player is separable. Many papers consider non-atomic congestion games with non-decreasing cost function, which implies a convex potential function \citep{roughgarden2004bounding}. We consider the more difficult atomic congestion game where the potential function can be non-convex. For online non-atomic case, \citep{krichene2015online} considers partial information setting while they provide convergence in the sense of Cesaro means. \citep{kleinberg2009multiplicative,krichene2014convergence} show that some no-regret online learning algorithms asymptotically converges to Nash equilibrium. \citep{chen2015playing,chen2016generalized} are two closely related works that consider bandit feedback in atomic congestion games and provide non-asymptotic convergence. However, they still assume a convex potential function and the sample complexity has exponential dependence on the number of facilities, which is far from ideal.

\noindent
\textbf{Markov Games.}
Markov games are widely studied since the seminal work \citep{shapley1953stochastic}. Recently, the topic has received much attention due to advances in reinforcement learning theory. \citet{liu2021sharp} provides a centralized algorithm for learning the Nash equilibrium in general-sum Markov games, and \citep{jin2021v,song2021can} provide decentralized algorithms for learning the (coarse) correlated equilibrium. One closely related line of research is on Markov potential games \citep{leonardos2021global,zhang2021gradient,fox2021independent,cen2022independent,ding2022independent}. However, applying their algorithms to congestion games leads to explicit dependence on the number of actions, which would be exponentially worse than our algorithms.
See Table~\ref{tab:comparisons} for comparisons.
Our independent Markov congestion game is motivated by the state-based potential games studied in \citet{marden2012state} and \citet{macua2018learning}, and its transition kernel is closely related to the factored MDPs, for which single agent algorithms are studied in \citep{osband2014near, chen2020efficient,xu2020reinforcement,tian2020towards,rosenberg2021oracle}.

\noindent
\revision{
\textbf{Learning in Games.} Different from our paper, learning in games in traditional literature of game theory mainly considers players' asymptotic behavior \citep{leslie2005individual, cominetti2010payoff, coucheney2015penalty}. In early literature, \citet{leslie2004reinforcement} investigates actor-critic learning and $Q$-learning algorithms in games with bandit feedback and their connection to best-response dynamics. \citet{leslie2005individual} proposes individual $Q$-learning algorithm and shows that it converges to the NE almost surely in two-player zero-sum game and \citet{leslie2006generalised} studies learning the NE from the perspective of a fictitious play-like process. Later, \citet{cominetti2010payoff} considers payoff-based learning rules and shows convergence to NE in traffic games, while another payoff-based learning model for continuous games is developed in \cite{bervoets2020learning}.  \citet{coucheney2015penalty} derives a new penalty-regulated dynamics and proposes a corresponding learning algorithms that converges to NE in potential games with bandit feedback. \citet{bravo2018bandit} proposes that in monotone games with bandit feedback, as long as all players are using some no-regret learning algorithm, the dynamics will converge to the NE, and an improved analysis of the same derivative-free algorithm is given in \citet{drusvyatskiy2021improved}. In contrast, our learning objective focuses on finite-time cumulative rewards, which is more widely used in current multi-agent reinforcement learning literature \citep{ding2022independent, liu2021sharp}.}

\section{Preliminaries}
\label{sec:preliminaries}

\textbf{General-sum Matrix Games.}
We consider the model of general-sum matrix games, defined by the tuple $\G=\Sp{\Bp{\A_i}_{i=1}^{m}, R}$, where $m$ is the number of players, $\A_i$ is the action space of player $i$ and $R(\cdot|\a)$ is the reward distribution on $[0,r_\mathrm{max}]^m$ with mean $\r(\a)$. Let $\A=\A_1\times\dots\times\A_m$ be the whole action space and denote an element as $\a=(a_1, \dots, a_m)\in\A$. After all players take actions $\a\in\A$, a reward vector is sampled $\r\sim R(\cdot|\a)$ and player $i$ will receive reward $r_i\in[0,r_\mathrm{max}]$ with mean $r_i(\a)$. Each player's objective is to maximize her own reward.

A general policy $\pi$ is defined as a vector in $\Delta(\A)$, the probability simplex over the action space $\A$. A product policy $\pi=(\pi_1, \dots, \pi_m)$ is defined as a tuple in $\Delta(\A_1)\times\dots\times\Delta(\A_m)$, in which $\a=(a_1, \dots, a_m)\sim\pi$ represents $a_i\overset{\mathrm{i.i.d.}}{\sim}\pi_i$. The value of policy $\pi$ for player $i$ is $V_i^\pi=\E_{\a\sim\pi}[r_i(\a)]$.

\noindent
\textbf{Nash Equilibrium and Nash Regret.}
Given a general policy $\pi$, let $\pi_{-i}$ be the marginal joint policy of players $1, \dots, i-1, i+1, \dots, m$. Then, the best response of player $i$ under policy $\pi$ is $\pi_i^{\dagger}=\argmax_{\mu\in\Delta(\A_i)}V_i^{\mu, \pi_{-i}}$ and the corresponding value is $V_i^{\dagger, \pi_{-i}}:=V_i^{\pi_i^\dagger, \pi_{-i}}$. Our goal is to find the approximate Nash equilibrium of the matrix game, which is defined below.
\begin{definition}
A product policy $\pi$ is an $\epsilon$\textit{-approximate Nash equilibrium} if
$\max_{i}(V_i^{\dagger, \pi_{-i}}-V_i^\pi)\leq\epsilon$.
\end{definition}
An $\epsilon$-approximate Nash equilibrium can be obtained by achieving a sublinear Nash regret, which is defined below. See Section 3 in \citet{ding2022independent} for a more detailed discussion. 
\begin{definition}
With $\pi^k$ being the policy at $k$-th episode, the \textit{Nash regret} after $K$ episodes is define as
$$\text{Nash-Regret}(K)=\sum_{k=1}^{K}\max_{i\in[m]}\Sp{V_i^{\dagger, \pikni}-V_i^{\pik}}.$$
\end{definition}
\revision{
\begin{remark}
Here, if we replace $\max_{i\in[m]}$ by $\sum_{i=1}^{m}$ in the definition of Nash regret, the single-step Nash regret at episode $k$ will become the Nikaido-Isoda (NI) function evaluated at $\pi^k$, which is a popular objective for equilibrium computation \citep{nikaido1955note, raghunathan2019game}. Replacing $\max_{i\in[m]}$ by $\sum_{i=1}^{m}$ will 
multiply our regret bounds by a factor of $m$, while our conclusion will not be affected.
\end{remark}
}

\noindent
\textbf{Potential Games.} A potential game is a general-sum game such that there exists a potential function $\Phi:\Delta(\A)\rightarrow[0,\Phi_\mathrm{max}]$ such that for any player $i\in[m]$ and policies $\pi_i$, $\pi_i'$, $\pi_{-i}$, it satisfies
$$\Phi(\pi_i,\pi_{-i})-\Phi(\pi'_i,\pi_{-i})=V_i^{\pi_i,\pi_{-i}}-V_i^{\pi'_i,\pi_{-i}}.$$
We can immediately see that a policy that maximizes the potential function is a Nash equilibrium.


\noindent
\textbf{Congestion Games.}
A congestion game is defined by $\G=(\F, \Bp{\A_i}_{i=1}^m, \Bp{R^f}_{f\in\mc{F}})$, where $\F=[F]$ is called the facility set and $R^f(\cdot|n)\in [0, 1]$ is the reward distribution for facility $f$ with mean $r^f(n)$, where $n\in[m]$. 
Each action $a_i\in\A_i$ is a subset of $\F$ (i.e., $a_i\subseteq\F$). Suppose the joint action chosen by all the players is $\a\in\A$, then a random reward is sampled $r^f\sim R^f(\cdot|n^f(\a))$ for each facility $f$, where $n^f(\a)=\sum_{i=1}^m\mathds{1}\Bp{f\in a_i}$ is the number of players using facility $f$. The reward collected by player $i$ is $r_i=\sum_{f\in a_i}r^f$ with mean $r_i(\a)=\sum_{f\in a_i}r^f(n^f(\a))\in[0,F]$.

\noindent
\textbf{Connection to Potential Games \citep{monderer1996potential}.}
As a special class of potential game, all congestion games have the potential function: $\Phi(\a)=\sum_{f\in\F}\sum_{i=1}^{n^f(\a)}r^f(i)$.
To see this, we can easily verify that
$\Phi(a_i,a_{-i})-\Phi(a'_i,a_{-i})=r_i(a_i,a_{-i})-r_i(a'_i,a_{-i})$ holds. Then, by defining $\Phi(\pi)=\E_{\a\sim\pi}[\Phi(\a)]$, we can have
$\Phi(\pi_i,\pi_{-i})-\Phi(\pi'_i,\pi_{-i})=V^{\pi_i,\pi_{-i}}_i-V^{\pi'_i,\pi_{-i}}_i$.

\noindent
\textbf{Types of feedback.}
There are in general two types of reward feedback for the congestion games, semi-bandit feedback and bandit feedback, both of which are reasonable under different scenarios. In semi-bandit feedback, after taking the action, player $i$ will receive reward information $r^f$ for each $f\in a_i$; in bandit feedback, after taking the action, player $i$ will only receive the reward $r_i=\sum_{f\in a_i}r^f$ with no knowledge about each $r^f$. In this paper, we will address both of them, with more focus on the bandit feedback, which can be directly generalized to semi-bandit feedback.

\section{Centralized Algorithms for Congestion Games}
\label{sec:centralized_algo_mg}
In this section, we introduce two centralized algorithms for congestion games -- one for the semi-bandit feedback and one for the bandit feedback. We will see that both of them can achieve sublinear Nash regret with polynomial dependence on both $m$ and $F$.

\subsection{Algorithm for Semi-bandit Feedback}
Summarized in Algorithm \ref{algo:nash_vi_matrix_game}, Nash upper confidence bound (Nash-UCB) for congestion games is developed based on optimism in the face of uncertainty. In particular, the algorithm estimates the reward matrices optimistically in line \ref{line:nash_vi_est}, computes its Nash equilibrium policy in line \ref{line:compute_nash_mg} and then follows this policy. 

For convenience, we define the empirical counter $N^{k, f}(n)=\sum_{k'=1}^{k}\mathds{1}\Bp{n^f(\bm{a}^{k'})=n}$ and $\Tilde{\iota} = 2\log(4(m+1)K/\delta)$. Then, the reward estimator for $f$ and the bonus term are defined as
\begin{equation}
    \label{equ:r_estimate_semi}
    \hat{r}^{k, f}(n)=\frac{\sum_{k'=1}^k r^{k', f}\mathds{1}\Bp{n^f(\bm{a}^{k'})=n}}{N^{k, f}(n)\vee 1},\quad b_i^{k, \reward}(\a)=\sum_{f\in a_i}\sqrt{\frac{\tilde{\iota}}{N^{k, f}(n^f(\a))\vee 1}},
\end{equation}
where $r^{k, f}\in[0, 1]$ is the random reward realization of $r^f(n^f(\a^k))$. Naturally, the reward estimator for player $i$ is $\hat{r}^k_i(\a)=\sum_{f\in a_i}\hat{r}^{k, f}(n^f(\a))$.

\begin{algorithm}[ht]
    \caption{Nash-UCB for Congestion Games}
    \label{algo:nash_vi_matrix_game}
    \begin{algorithmic}[1]
    \STATE {\bf Input:} $\epsilon$, accuracy parameter for Nash equilibrium computation
    \FOR{episode $k=1, \dots, K$}
        \FOR{player $i=1, \dots, m$}
            \STATE $\overline{Q}^k_{i}(\bm{a})\leftarrow\hat{r}^k_{i}(\a)+b_i^{k, \reward}(\bm{a})$ for all $\bm{a}\in\mc{A}$\label{line:nash_vi_est}
        \ENDFOR
        \STATE $\pi^k\leftarrow $ $\epsilon$-$\textsc{Nash}(\overline{Q}^k_{1}(\cdot), \cdots, \overline{Q}^k_{m}(\cdot))$ (Algorithm \ref{algo:eps NE}) \label{line:compute_nash_mg}
        \STATE Take action $\bm{a}^k\sim\pi^k$ and observe reward $r^{k, f}$
        \STATE Update reward estimators $\hat{r}^{k}_i$ and bonus term $b_i^{k, \reward}$
    \ENDFOR
    \end{algorithmic}
\end{algorithm}

Algorithm \ref{algo:nash_vi_matrix_game} is motivated by the Nash-VI algorithm in \citep{liu2021sharp} plus a deliberate utilization of the special reward structure in the congestion games. Moreover, notice that a matrix game with reward functions $\oq^k_1(\cdot), \dots, \oq^k_m(\cdot)$ forms a potential game (see Lemma \ref{lmm:q_potential_game}). As a result, in line \ref{line:compute_nash_mg}, we can \emph{efficiently compute} the $\epsilon$-approximate Nash equilibrium $\pi^k$ for that matrix game by utilizing Algorithm \ref{algo:eps NE}, (see Lemma \ref{lmm:compute_epsilon_nash}). It is a simple greedy algorithm such that in each round, it modifies one player's policy whose modification can increase the potential function most. In addition, Algorithm \ref{algo:eps NE} always outputs a deterministic product policy.

\begin{algorithm}[ht]
    \caption{$\epsilon$-approximate Nash Equilibrium for Potential Games}
    \label{algo:eps NE}
    \begin{algorithmic}[1]
    \STATE {\bf Input:} $\epsilon$, accuracy parameter; full information potential game $\Sp{\Bp{\A_i}_{i=1}^{m}, \Bp{r_i}_{i=1}^{m}}$ such that $r_i\in[0,r_\mathrm{max}]$ for all $i\in[m]$
    \STATE {\bf Initialize:} $\pi^1=\a^1$, arbitrary deterministic product policy
    \FOR{round $k=1, \dots, \left\lceil\frac{mr_\mathrm{max}}{\epsilon}\right\rceil$}
        \FOR{player $i=1, \dots, m$}
            \STATE $\Delta_i=\max_{a_i\in\A_i} r_i(a_i,\pi_{-i}^k)-r_i(\pi^k)$
            \STATE $a_i^{k+1}=\argmax_{a\in\A_i} r_i(a_i,\pi_{-i}^k)-r_i(\pi^k)$
        \ENDFOR
        \IF{$\max_{i\in[m]}\Delta_i\leq\epsilon$}
            \RETURN $\pi^k$
        \ENDIF
        \STATE $j=\argmax_{i\in[m]}\Delta_i$
        \STATE $\pi^{k+1}(j)=a_j^{k+1}$, $\pi^{k+1}(i)=\pi^{k}(i)$, for all $i\neq j$
    \ENDFOR
    \end{algorithmic}
\end{algorithm}

\subsection{Algorithm for Bandit Feedback}\label{sec:centralized bandit}
When the players can only receive bandit feedback, estimating $\hat{r}^{k, f}$ directly for each $f\in\mc{F}$ is no longer feasible. However, notice that the reward function $r_i(\a)=\sum_{f\in a_i}r^f(n^f(\a))$ can be seen as an inner product between vectors characterized by action $\a$ and reward function $r^f(\cdot)$. Therefore, under bandit feedback, we can treat it as a linear bandit and use ridge regression to build the reward estimator $\tilde{r}^{k}_i$ and corresponding bonus term $\tilde{b}^{k, \reward}$, whose index $i$ is dropped since it is the same for all players. The new algorithm will use these two terms to replace $\hat{r}^k_i$ and $b^{k, \reward}_i$ in line \ref{line:nash_vi_est} of Algorithm \ref{algo:nash_vi_matrix_game}.

In particular, define $\theta\in[0, 1]^{\td}$ with $\td=mF$ to be the vector such that $r^f(n)=\theta_{n+m(f-1)}$. Meanwhile, for player $i\in[m]$, define $A_i:\mc{A}\mapsto\Bp{0, 1}^{\td}$ to be the vector-valued function such that 
$$[A_i(\a)]_j=\mathds{1}\Bp{j=n+m(f-1), f\in a_i, n=n^f(\a)}.$$
In other words, $A_i(\a)$ is a 0-1 vector with element 1 only at indices corresponding to those in $\theta$ that represents $r^f(n)$ for $f\in a_i$ and $n=n^f(\a)$. Now, with these definitions, the reward function can be written as $r_i(\a)=\inner{A_i(\a), \theta}$. Then, we build the reward estimator and the bonus term through ridge regression and corresponding confidence bound, which are defined as the following:
\begin{align}
    \tilde{r}^k_{i}(\a)=\inner{A_i(\a), \widehat{\theta}^k},\quad \tilde{b}^{k, \reward}(\a)=\max_{i\in[m]}\Norm{A_i(\a)}_{\Sp{V^k}^{-1}}\sqrt{\tilde{\beta}_k},\label{equ:r_estimate_bandit}
\end{align}
where $\widehat{\theta}^k=\Sp{V^k}^{-1}\sum_{k'=1}^{k-1}\sum_{i=1}^{m}A_i(\a^{k'})r_{i}^{k'}$, $V^k=I+\sum_{k'=1}^{k-1}\sum_{i=1}^{m}A_i(\a^{k'})A_i(\a^{k'})^\top$ and $\sqrt{\tilde{\beta}_k}=\sqrt{\td}+\sqrt{F\td\log\Sp{1+\frac{mkF}{\td}}+F\tilde{\iota}}$. 
Note that we cannot bound the sum of this bonus terms by directly applying the elliptical potential lemma. We instead prove its variant in Lemma \ref{lmm:elliptical_potential_variant}.

\subsection{Regret Analysis}
The Nash regret bounds for the two versions of Algorithm \ref{algo:nash_vi_matrix_game} are formally presented in Theorem \ref{theo:ne_mg}. The proof details are deferred to Appendix \ref{sec:ne_proof_mg}.

\begin{theorem}
\label{theo:ne_mg}
Let $\epsilon=1/K$. For congestion games with semi-bandit feedback, by running Algorithm \ref{algo:nash_vi_matrix_game} with reward estimator and bonus term in \eqref{equ:r_estimate_semi}, with probability at least $1-\delta$, we can achieve that
$$\text{Nash-Regret}(K)\leq\tmco\Sp{F\sqrt{mK}}.$$
Furthermore, if we only have bandit feedback, then by running Algorithm \ref{algo:nash_vi_matrix_game} with reward estimator and bonus term in \eqref{equ:r_estimate_bandit}, with probability at least $1-\delta$, we can achieve that
$$\text{Nash-Regret}(K)\leq\tmco\Sp{mF^{3/2}\sqrt{K}}.$$
\end{theorem}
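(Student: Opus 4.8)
The plan is to follow the optimism-in-the-face-of-uncertainty recipe, tailored to the congestion structure. The backbone is identical for the two feedback models; only the concentration step and the final summation of bonuses differ, so I would prove both bounds in parallel and branch where needed.

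\textbf{Step 1 (optimism).} First I would fix a high-probability event $\mc{E}$ (of probability $\ge 1-\delta$) on which the optimistic payoffs dominate the truth, $r_i(\a)\le\oq^k_i(\a)$ for all $k,i,\a$, with the overestimation controlled by the bonus, $\oq^k_i(\a)-r_i(\a)\le 2b_i^{k,\reward}(\a)$ (resp.\ $\le 2\tilde b^{k,\reward}(\a)$). For semi-bandit feedback this is a facility-wise argument: conditioned on $n^f(\a^{k'})=n$, the observed $r^{k',f}$ is an i.i.d.\ draw from $R^f(\cdot\mid n)$, so a Hoeffding/time-uniform bound for each of the at most $Fm$ pairs $(f,n)$, union-bounded, gives $\abs{\hat r^{k,f}(n)-r^f(n)}\le\sqrt{\tilde\iota/(N^{k,f}(n)\vee1)}$, and summing over $f\in a_i$ yields the claim. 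For bandit feedback I would instead invoke a self-normalized (Abbasi-Yadkori-type) tail bound for the ridge estimator $\widehat\theta^k$; the key quantitative point is that the regression noise $r_i^{k'}-\inner{A_i(\a^{k'}),\theta}=\sum_{f\in a_i}\Sp{r^{k',f}-r^f(n^f)}$ is a sum of at most $F$ bounded terms and hence $\sqrt F$-sub-Gaussian, which is exactly what produces the $\sqrt{F\td}$ scaling inside $\sqrt{\tilde\beta_k}$ rather than $F\sqrt{\td}$.

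\textbf{Step 2 (regret decomposition).} On $\mc{E}$, fix episode $k$ and let $i$ be the player attaining the per-step maximum. Writing $\ov$ for the value under the optimistic rewards $\oq^k_i$, I would chain $V_i^{\dagger,\pikni}\le\ov_i^{\dagger,\pikni}\le\ov_i^{\pik}+\epsilon\le V_i^{\pik}+\epsilon+2\,\E_{\a\sim\pik}\Mp{\text{bonus}}$, where the first inequality is optimism, the second uses that $\pik$ is an $\epsilon$-approximate Nash equilibrium of the optimistic game—legitimate because $\Sp{\oq^k_i}_{i}$ is again a potential game (Lemma~\ref{lmm:q_potential_game}) whose $\epsilon$-Nash is correctly returned by Algorithm~\ref{algo:eps NE} (Lemma~\ref{lmm:compute_epsilon_nash})—and the third uses Step 1 again. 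Summing over $k$ and setting $\epsilon=1/K$ makes the $K\epsilon$ term $O(1)$, leaving $\text{Nash-Regret}(K)\le O(1)+2\sum_k\E_{\a\sim\pik}\Mp{\text{bonus}}$. Since $i$ and the bonus are determined before $\a^k$ is drawn, the terms $\E_{\a\sim\pik}\Mp{\text{bonus}}-\text{bonus}(\a^k)$ form a bounded martingale-difference sequence, so Azuma-Hoeffding lets me replace the expected bonus by the realized trajectory bonus up to an additive term that is $\tmco(F\sqrt K)$ (semi-bandit) or $\tmco(F^{3/2}\sqrt{mK})$ (bandit), lower-order in $m$ than the main term in each case.

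\textbf{Step 3 (summing realized bonuses).} For semi-bandit feedback, group $\sum_k\sum_{f\in a^k_{i}}\sqrt{\tilde\iota/N^{k,f}(n^f(\a^k))}$ by the pair $(f,n)$: within each group the counts run through $1,2,\dots$, so $\sum 1/\sqrt{N}\le 2\sqrt{N^{K,f}(n)}$, and restricting to the maximizing player's facilities only drops terms. Cauchy-Schwarz over the $\le Fm$ pairs together with $\sum_{f,n}N^{K,f}(n)\le FK$ gives $\tmco(F\sqrt{mK})$, the first bound. For bandit feedback the realized bonus is $\sqrt{\tilde\beta_K}\sum_k\max_i\Norm{A_i(\a^k)}_{(V^k)^{-1}}$; here the design matrix $V^k$ is updated by \emph{all} $m$ players' features while the bonus is the \emph{single-player} maximum norm, so I would invoke the elliptical-potential variant (Lemma~\ref{lmm:elliptical_potential_variant}) to bound $\sum_k\max_i\Norm{A_i(\a^k)}^2_{(V^k)^{-1}}=\tmco(\td)=\tmco(mF)$, and then Cauchy-Schwarz with $\sqrt{\tilde\beta_K}=\tmco(F\sqrt m)$ yields $\tmco\Sp{F\sqrt m\cdot\sqrt{mFK}}=\tmco(mF^{3/2}\sqrt K)$.

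\textbf{Main obstacle.} The crux is precisely Lemma~\ref{lmm:elliptical_potential_variant}. A black-box elliptical potential lemma fails for two reasons. First, the bonus norm is taken against the start-of-episode matrix $V^k$, whereas the telescoping determinant identity naturally produces norms against the intermediate matrices $V^k+\sum_{j<i}A_jA_j^\top\succeq V^k$, i.e.\ the wrong direction for an upper bound. Second, the naive truncation $\max_i\Norm{A_i}^2_{(V^k)^{-1}}\le F\cdot\min\Sp{1,\max_i\Norm{A_i}^2_{(V^k)^{-1}}}$ inserts a spurious factor of $F$, which would yield $\tmco(mF^2)$ in place of the desired $\tmco(mF)$ and degrade the final bound from $mF^{3/2}\sqrt K$ to $mF^{2}\sqrt K$. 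The substance of the argument is therefore to certify that the determinant of $V^k$ grows fast enough to absorb all $m$ simultaneous rank-one updates while still giving an $\tmco(\td)$ bound on the summed squared maximum norms; pinning down the exponent of $F$ in this accounting is what separates the stated bound from a looser one.
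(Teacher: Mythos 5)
Your proposal matches the paper's proof essentially step for step: optimism via facility-wise Hoeffding bounds (semi-bandit) and a $\sqrt F$-sub-Gaussian ridge-regression confidence bound (bandit), the same decomposition through the $\epsilon$-approximate Nash equilibrium of the optimistic potential game, Azuma--Hoeffding for the policy-to-realization martingale, and the per-$(f,n)$ pigeonhole argument respectively the multi-rank-one-update elliptical-potential lemma for summing the realized bonuses. The one obstacle you flag but leave open---avoiding a spurious factor of $F$ when bounding $\sum_k\max_{i}\Norm{A_i(\a^k)}^2_{(V^k)^{-1}}$---is closed in the paper exactly along the lines you anticipate: the per-step regret is first capped at $F$ (since values are bounded by $F$), and because $\tilde\beta_k\geq F^2$ the Cauchy--Schwarz step then only requires bounding $\sum_k\min\Bp{\max_i\Norm{A_i(\a^k)}^2_{(V^k)^{-1}},1}$, which Lemma \ref{lmm:elliptical_potential_variant} (built on $\det(I+\sum_i y_iy_i^\top)\geq 1+\max_i\Norm{y_i}_2^2$ to handle the $m$ simultaneous updates against the start-of-episode matrix) bounds by $\tmco\Sp{mF}$.
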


\begin{remark}
Since each action is a subset of $\mc{F}$, the size of each player's action space can be $2^F$. As a result, directly applying Nash-VI in \citep{liu2021sharp} leads to a regret bound exponential in $F$. 
\end{remark}

\begin{remark}
Note that we assume $r^f\in[0, 1]$, which implies $r_i\in[0, F]$ for each player $i\in[m]$.
\end{remark}
\section{Decentralized Algorithms for Congestion Games}
\label{sec:decentralized_algo}

In this section, we present a decentralized algorithm for congestion games. Due to limited space, we only introduce the version of bandit feedback as in Section \ref{sec:centralized bandit}. The algorithmic details for the semi-bandit feedback setting are deferred into Appendix \ref{sec:fw_semi_bandit}. We will show that under both settings, even though each player can only observe her own actions and rewards, our decentralized algorithm still enjoys sublinear Nash regret with polynomial dependence on $m$ and $F$.

We first define the vector-valued function $\phi_i:\A_i\mapsto\Bp{0, 1}^{F_i}$ to be the feature map of player $i$ such that $[\phi_i(a_i)]_f=\mathds{1}\Bp{f\in a_i}$ for $a_i\in\A_i$ and $f\in\bigcup_{a_i\in\A_i}a_i$. Here, $F_i$ is the size of $\bigcup_{a_i\in\A_i}a_i\subseteq\F$ and we can immediately see that $F_i\leq F$ for any $i\in[m]$. 

The core idea of our algorithm is that the Nash equilibrium can be found by reaching the stationary points of the potential function since all congestion games are potential games. \revision{Here, the UCB-like algorithms used in the centralized setting are not applicable because their policy computation requires value functions for all players (e.g., line \ref{line:compute_nash_mg} of Algorithm \ref{algo:nash_vi_matrix_game}), which are not available in the decentralized setting.} Summarized in Algorithm \ref{algo:FW}, the decentralized algorithm is developed based on the Frank-Wolfe method and has the following three major components.

\begin{algorithm}[!t]
    \caption{Frank-Wolfe with Exploration for Congestion Game}
    \label{algo:FW}
    \begin{algorithmic}[1]
    \STATE {\bf Input:} $\gamma, \nu$, mixture weights; $\pi_i^1$, initial policy.
    \STATE {\bf Initialize:} $\rho_i$, the G-optimal design for player $i$, defined in \eqref{equ:G_optimal}. 
    \FOR{episode $k=1,\cdots,K$}
        \FOR{round $t=1,\cdots,\tau$}
            \STATE Each player takes action $a_i^{k,t}\sim\pi_i^{k}$, observes reward $r^{k,t}_i$. \label{line:fw_sample}
        \ENDFOR
        \FOR{player $i=1,\cdots,m$}
            \STATE Compute $\widehat{\nabla}^k_i\Phi(a_i)$ by the formula in \eqref{equ:grad_estimator} for all $a_i\in\A_i$ \label{line:fw_gradient}
            \STATE Compute $\widetilde{\pi}_i^{k+1}\leftarrow \argmax_{\pi_i\in\Delta(\A_i)} \inner{\pi_i,\widehat{\nabla}^{k}_i\Phi}$ \label{line:fw_argmax}
            \STATE Update $\pi_i^{k+1}\leftarrow (1-\gamma)(\nu\widetilde{\pi}_i^{k+1}+(1-\nu)\pi_i^{k})+\gamma \rho_i$ \label{line:fw_update}
        \ENDFOR
    \ENDFOR
    \end{algorithmic}
\end{algorithm}

\paragraph{Gradient Estimator.}
In line \ref{line:fw_gradient}, the algorithm builds the estimator $\widehat{\nabla}^k_i\Phi$ defined in \eqref{equ:grad_estimator} by using the $\tau$ reward samples collected from line \ref{line:fw_sample}. Here, $\widehat{\nabla}^k_i\Phi$ estimates the gradient of potential function $\Phi$ with respect to the policy $\pi_i^k$. 
Recall that for a congestion game, we have $\Phi(\a)=\sum_{f\in\F}\sum_{i=1}^{n^f(\a)}r^f(i)$ and $\Phi(\pi)=\E_{\a\sim\pi}\Mp{\Phi(\a)}$. Then we can define $\nabla_i\Phi:=\nabla_{\pi_i}\Phi$ as a vector of dimension $\abs{\A_i}$. For the component indexed by some $a_i\in\A_i$, we can see that $\Phi(\pi)=\pi_i(a_i)\E_{a_{-i}\sim\pi_{-i}}\Mp{r_i(a_i, a_{-i})}+\mathrm{const}$, where const does not depend on $\pi_i(a_i)$. Therefore, we have
\begin{equation}
    \label{equ:phi_grad}
    \nabla_i\Phi(a_i)=\E_{a_{-i}\sim\pi_{-i}}\Mp{r_i(a_i, a_{-i})}=\E_{a_{-i}\sim\pi_{-i}}\Mp{\sum_{f\in a_i}r^f(n^f(a_i, a_{-i}))}=\inner{\phi_i(a_i), \theta_i(\pi)},
\end{equation}
where $[\theta_i(\pi)]_f=\E_{a_{-i}\sim\pi_{-i}}\Mp{r^f(n^f(a_{-i})+1)}$. Meanwhile, the mean of the $t$-th reward that player $i$ received at episode $k$ satisfies
$$\E\Mp{r_i^{k, t}\mid \a^{k, t}}=r_i(\a^{k,t})=\sum_{f\in a_i^{k, t}}r^{f}(n^f(\a^{k,t}))=\inner{\phi_i(a_i^{k,t}),\theta^{k,t}_i(a^{k, t}_{-i})},$$
where $[\theta^{k,t}_i(a^{k, t}_{-i})]_f=r^{f}(n^f(a^{k,t}_{-i})+1)$ and its mean is $[\theta_i(\pi^k)]_f$. Therefore, we can use linear regression to estimate $\theta_i(\pi^k)$. In particular, we have
$\widehat{\theta}^k_i(\pi^k)=\frac{1}{\tau}\sum_{t=1}^{\tau}\Sp{\Sigma^k_i}^{-1}\phi_i(a_i^{k, t})r_i^{k, t},$
with the covariance matrix $\Sigma^k_i=\E_{a_i\sim\pi_i^k}\Mp{\phi_i(a_i)\phi_i(a_i)^\top}$.
Then, we have the unbiased gradient estimate
\begin{equation}
    \label{equ:grad_estimator}
    \widehat{\nabla}^k_i\Phi(a_i)=\inner{\phi_i(a_i), \widehat{\theta}_i^k(\pi^k)}=\frac{1}{\tau}\sum_{t=1}^{\tau}\phi_i(a_i)^\top\Sp{\Sigma^k_i}^{-1}\phi_i(a_i^{k, t})r_i^{k, t}.
\end{equation}

\revision{
\begin{remark}
One difference between Algorithm \ref{algo:FW} (decentralized) and Algorithm \ref{algo:nash_vi_matrix_game} (centralized) is that in the decentralized algorithm, each player is required to play the same policy for $\tau$ times before an update can be applied. An episode is thus defined for convenience as the time period during which the players' policies are fixed. We make this artificial design mainly for controlling the variance of the gradient estimator $\widehat{\nabla}^k_i\Phi(a_i)$. However, we conjecture that with more careful design and analysis, it should be possible to improve Algorithm \ref{algo:FW} so that only one sample is required per episode \citep{zhang2020one}.
\end{remark}
}
 
\paragraph{G-optimal Design.}   
In line \ref{line:fw_argmax} and \ref{line:fw_update}, the algorithm performs standard Frank-Wolfe update and mixes the updated policy with an exploration policy $\rho_i$, which is defined as the G-optimal allocation for features $\Bp{\phi_i(a_i)}_{a_i\in\A_i}$. To be specific, we have
\begin{equation}
    \label{equ:G_optimal}
    \rho_i=\argmin_{\lambda\in\Delta(\A_i)}\max_{a_i\in\A_i}\Norm{\phi_i(a_i)}^2_{\E_{a_i'\sim\lambda}\Mp{\phi_i(a_i')\phi_i(a_i')^\top}^{-1}}.
\end{equation}
Here $\rho_i$ guarantees that $\Sigma^k_i$ is invertible and the variance of $\widehat{\nabla}_i^k\Phi(a_i)=\inner{\phi_i(a_i), \widehat{\theta}_i^k(\pi^k)}$ depends only on $F$ instead of the size of action space (Lemma \ref{lemma:variance bound}) because by the famous Kiefer-Wolfowitz theorem, we have $\max_{a_i\in\A_i}\Norm{\phi_i(a_i)}^2_{\E_{a_i'\sim\rho_i}\Mp{\phi_i(a_i')\phi_i(a_i')^\top}^{-1}}=F_i\leq F$ \citep{lattimore2020bandit}.

\paragraph{Frank-Wolfe Update.}   

Finally, we emphasize that it is crucial to use Frank-Wolfe update because it is compatible with \emph{$L_1$ norm} and we can show that $\Phi$ is $mF$-smooth with respect to the $L_1$ norm (Lemma \ref{lmm:smooth}). In contrast, its smoothness for $L_2$ norm will depend on the size of the action space.

Before the game starts, each player $i$ can compute her $\rho_i$ based on her own action set $\A_i$. During the game, all players only have access to their own actions and rewards, which means that Algorithm \ref{algo:FW} is fully decentralized. The Nash regret bound for this algorithm is formally stated in Theorem \ref{theo:dec_ne} and the proof details are given in Appendix \ref{sec:explore_dist} and \ref{sec:fw_proof}.

\begin{theorem}
\label{theo:dec_ne}
Let $T=K\tau$. For congestion game with bandit feedback, by running Algorithm \ref{algo:FW} with gradient estimator $\widehat{\nabla}^k_i\Phi$ in \eqref{equ:grad_estimator} and exploration distribution $\rho_i$ in \eqref{equ:G_optimal}, if $K\geq\frac{2F}{m}$, then with probability at least $1-\delta$, we have
$$\text{Nash-Regret}(T):= \sum_{k=1}^{K}\tau\max_{i\in[m]}\Sp{V_i^{\dagger, \pikni}-V_i^{\pik}} \leq\tmco\Sp{m^2F^2T^{5/6}+m^3F^3T^{2/3}}.$$
For congestion game with semi-bandit feedback, by running Algorithm \ref{algo:FW} with gradient estimator $\widetilde{\nabla}^k_i\Phi(a_i)$ and exploration distribution $\tilde{\rho}_i$ defined in Appendix \ref{sec:fw_semi_bandit}, if $K\geq\frac{2\sqrt{F}}{m}$, then with probability at least $1-\delta$, we have
$$\text{Nash-Regret}(T) \leq\tmco\Sp{m^2F^{3/2}T^{5/6}+m^3F^2T^{2/3}}.$$
\end{theorem}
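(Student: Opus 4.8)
The plan is to run a Frank--Wolfe convergence analysis on the potential function $\Phi$, turning the cumulative Nash regret into a telescoping sum of potential increments while paying separately for gradient-estimation error, exploration bias, and smoothness. \emph{First}, I would establish the exact identity that each per-player Nash gap is the Frank--Wolfe gap of $\Phi$ at $\pi^k$. Since $\nabla_i\Phi(a_i)=\E_{a_{-i}\sim\pi^k_{-i}}[r_i(a_i,a_{-i})]=V_i^{a_i,\pi^k_{-i}}$ by \eqref{equ:phi_grad}, linearity gives $V_i^{\pi_i,\pi^k_{-i}}=\inner{\pi_i,\nabla_i\Phi(\pi^k)}$, hence $V_i^{\dagger,\pikni}-V_i^{\pik}=\max_{\pi_i\in\Delta(\A_i)}\inner{\pi_i-\pi_i^k,\nabla_i\Phi(\pi^k)}=:g_i^k\ge 0$. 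Because every $g_i^k$ is nonnegative we have $\max_i g_i^k\le\sum_i g_i^k$, so it suffices to bound $\tau\sum_k\sum_i g_i^k$.

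\emph{Next} I would bound the per-episode potential increase. Writing the update as $\pi_i^{k+1}-\pi_i^k=\eta(\widetilde\pi_i^{k+1}-\pi_i^k)+\gamma(\rho_i-\pi_i^k)$ with $\eta=(1-\gamma)\nu$, and invoking the $mF$-smoothness of $\Phi$ in the $L_1$ norm (Lemma \ref{lmm:smooth}), I get $\Phi(\pi^{k+1})-\Phi(\pi^k)\ge\sum_i[\eta\inner{\nabla_i\Phi,\widetilde\pi_i^{k+1}-\pi_i^k}+\gamma\inner{\nabla_i\Phi,\rho_i-\pi_i^k}]-\tfrac{mF}{2}\Norm{\pi^{k+1}-\pi^k}_1^2$. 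Since $\widetilde\pi_i^{k+1}$ maximizes the \emph{estimated} gradient, comparing it to the true best response yields $\inner{\nabla_i\Phi,\widetilde\pi_i^{k+1}-\pi_i^k}\ge g_i^k-4\Norm{\widehat\nabla^k_i\Phi-\nabla_i\Phi}_\infty$; the exploration term is at least $-2\gamma F$ using $\Norm{\nabla_i\Phi}_\infty\le F$; and $\Norm{\pi^{k+1}-\pi^k}_1\le 2m(\eta+\gamma)$. Summing over $k$, using $\Phi(\pi^{K+1})-\Phi(\pi^1)\le\Phi_{\mathrm{max}}\le mF$, and dividing by $\eta$ gives $\sum_k\sum_i g_i^k\le\tfrac{mF}{\eta}+4K\sum_i\max_k\Norm{\widehat\nabla^k_i\Phi-\nabla_i\Phi}_\infty+\tfrac{2\gamma mFK}{\eta}+\tfrac{2m^3FK(\eta+\gamma)^2}{\eta}$.

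The crux is the estimation error, which must \emph{not} depend on $\abs{\A_i}$. I would control it by observing $\widehat\nabla^k_i\Phi(a_i)-\nabla_i\Phi(a_i)=\inner{\phi_i(a_i),\widehat\theta_i^k(\pi^k)-\theta_i(\pi^k)}$, so the error is governed by the deviation of $\widehat\theta_i^k$ in the $F_i\le F$-dimensional facility space rather than over the exponentially large action set; then the variance bound (Lemma \ref{lemma:variance bound})—which uses the G-optimal design through $\Sigma_i^k\succeq\gamma\,\E_{a_i\sim\rho_i}[\phi_i\phi_i^\top]$ together with the Kiefer--Wolfowitz identity $\max_{a_i}\Norm{\phi_i(a_i)}^2_{(\cdot)^{-1}}=F_i\le F$—gives per-episode variance of order $\mathrm{poly}(F)/(\gamma\tau)$ and hence a high-probability error of order $\widetilde O(\mathrm{poly}(F)/\sqrt{\gamma\tau})$. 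Multiplying the displayed bound by $\tau$ (with $T=K\tau$) turns the four contributions into $\tfrac{\tau mF}{\eta}$, $\widetilde O(Tm\,\mathrm{poly}(F)/\sqrt{\gamma\tau})$, $\tfrac{\gamma mFT}{\eta}$, and $\tfrac{m^3FT(\eta+\gamma)^2}{\eta}$. Choosing $\eta\asymp T^{-1/6}$, $\gamma\asymp T^{-1/3}$, $\tau\asymp T^{2/3}$ (so $K\asymp T^{1/3}$, consistent with the hypothesis $K\ge 2F/m$) balances the Frank--Wolfe, estimation, exploration, and leading smoothness terms all at rate $T^{5/6}$, while the $2\eta\gamma$ cross term inside $(\eta+\gamma)^2$ yields the subleading $T^{2/3}$ piece.

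I expect the main obstacle to be this three-way parameter tension. The estimator variance inflates as $1/(\gamma\tau)$ precisely because the exploration weight $\gamma$ is what keeps $\Sigma_i^k$ invertible, so shrinking $\gamma$ to suppress the exploration bias of the second step simultaneously degrades the estimation error, while enlarging the per-episode budget $\tau$ to tame the variance reduces the number $K$ of Frank--Wolfe updates and slows convergence; reconciling these competing demands is exactly what forces the suboptimal $T^{5/6}$ rate, and extracting the correct polynomial-in-$(m,F)$ constants requires careful bookkeeping through Lemma \ref{lemma:variance bound}. For the semi-bandit setting the only structural change is that direct per-facility observations admit a lower-variance estimator $\widetilde\nabla^k_i\Phi$ and a tailored exploration distribution $\widetilde\rho_i$, which improve the facility dependence and produce the stated $F^{3/2}$ leading term.
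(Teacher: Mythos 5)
Your proposal is correct and follows essentially the same route as the paper's proof in Appendix E: bounding the Nash gap by the Frank--Wolfe gap of $\Phi$ (your per-player sum $\sum_i g_i^k$ equals the paper's joint gap $G(\pi^k)$), applying the $L_1$-smoothness inequality to telescope the potential, controlling the gradient estimator via the G-optimal design and a Bernstein bound with variance $O(\mathrm{poly}(F)/(\gamma\tau))$, and tuning $\nu\asymp K^{-1/2}$, $\gamma\asymp K^{-1}$, $\tau=K^2$ to get the $T^{5/6}$ rate. The only cosmetic deviations are that you concentrate $\widehat{\theta}^k_i$ in the $F$-dimensional feature space where the paper union-bounds over $|\A_i|\leq 2^F$ actions (a logarithmic difference), and your accounting of which term produces the subleading $T^{2/3}$ piece differs slightly from the paper's, but neither affects the stated bound.
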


\section{Extension to Independent Markov Congestion Games}
\label{sec:centralized_algo_markov}

In this section, we propose and analyze a Markov extension of the congestion games, called the independent Markov congestion games (IMCGs). 

\subsection{Problem Formulation}

\textbf{General-sum Markov Games.} A finite-horizon time-inhomogeneous tabular general-sum Markov game is defined by $\M=\{\S,\Bp{\A_i}_{i=1}^m,H,P,R, s_0\}$, where $\S$ is the state space, $m$ is the number of players, $\A_i$ is the action space of player $i$, $\A=\A_1\times\dots\times\A_m$ is the whole action space, $H$ is the time horizon, $s_0$ is the initial state\footnote{An episode is defined as running $H$ steps from the initial state $s_0$, which is common for the episodic MDP.}, $P=(P_1,P_2,\cdots,P_H)$ with $P_h\in[0, 1]^{S\times A\times S}$ as the transition kernel at timestep $h$, $R=\{R_{h}(\cdot|s_h,\a_h)\}_{h=1}^{H}$ with $R_h(\cdot|s_h,\a_h)$ as the reward distribution on $[0,r_\mathrm{max}]^m$ with mean $\r_h(s_h,a_h)\in[0,r_\mathrm{max}]^m$ at timestep $h\in[H]$. At timestep $h$, all players choose their actions simultaneously and a reward vector is sampled $\r_h\sim R_h(\cdot|s_h,\a_h)$, where $s_h$ is the current state and $\a_h=(a_{h,1},a_{h,2},\cdots,a_{h,m})$ is the joint action. Each player $i$ receives reward $r_{h,i}$ and the state transits to $s_{h+1}\sim P_h(\cdot|s_h,\a_h)$. The objective for each player is to maximize her own total reward. We assume that the initial state $s_1$ is fixed.
 
A (Markov) policy $\pi$ is a collection of $H$ functions $\Bp{\pi_h:\S\mapsto\Delta(\A)}_{h=1}^{H}$, each of which maps a state to a distribution over the action space. $\pi$ is a product policy if $\pi_h(\cdot\mid s)$ is a product policy for each $(h, s)\in[H]\times\S$. The value function and $Q$-value function of player $i$ at timestep $h$ under policy $\pi$ are defined as
\fontsize{9}{9}
\begin{align*}
    V^{\pi}_{h, i}(s)=\E_{\pi}\Mp{\sum_{h'=h}^{H}r_{h', i}(s_{h'}, \a_{h'})\mid s_h=s},\  Q^{\pi}_{h, i}(s, \a)=\E_{\pi}\Mp{\sum_{h'=h}^{H}r_{h', i}(s_{h'}, \a_{h'})\mid s_h=s, \a_h=a}.
\end{align*}
\normalsize

The best responses and Nash regret can be defined similarly as those for matrix games. In particular, given a policy $\pi$, player $i$'s best response policy is $\pi^\dagger_{h, i}(\cdot\mid s)=\argmax_{\mu\in\Delta(\A_i)}V^{\mu, \pi_{-i}}_{h, i}(s)$ and the corresponding value function is denoted as $V^{\dagger, \pi_{-i}}_{h, i}$.

\begin{definition}
With $\pi^k$ being the policy at $k$th episode, the \textit{Nash regret} after $K$ episodes is define as
$$\text{Nash-Regret}(K)=\sum_{k=1}^{K}\max_{i\in[m]}\Sp{V_{1, i}^{\dagger, \pikni}-V_{1, i}^{\pik}}(s_1).$$
\end{definition}


\noindent
\textbf{Independent Markov Congestion Game.}
A general-sum Markov game is an independent Markov congestion game (IMCG) if there exists a facility set $\F$ such that $a_i\subseteq\F$ for any $a_i\in\A_i$, a state space $\S=\prod_{f\in\F}\S^f$, a set of facility reward distributions $\{R_h^f\}_{h\in[H],f\in\F}$ such that if the joint action at $s_h$ is $\a$, we have
$r_{h,i}=\sum_{f\in a_i}r_h^f$, 
where $r_h^f\sim R_h^f(\cdot|s_h,n^f(\a))$ with support on $[0,1]$ and mean $r_h^f(s_h,n^f(\a))$, and a set of transition matrices $\{P^f_h\}_{h\in[H], f\in\F}$ such that
$P_h(s'|s,\a)=\prod_{f\in\F}P^f_h(s'^f|s^f,n^f(\a))$.
In other words, at each timestep $h$ and state $s\in\S$, the players are in a congestion game. Meanwhile, each facility has its own state and independent state transition, which only depends on its current state and number of players using that facility. This transition kernel can be viewed as a special case of that in factored MDPs \citep{szita2009optimistic}. The IMCG also admits two types of feedback, semi-bandit feedback and bandit feedback, just like the congestion game. In this paper, we will consider both types of feedback.

\subsection{Theoretical Guarantee}

Summarized in Algorithm \ref{algo:nash_vi}, our centralized algorithm for IMCGs is naturally extended from the Nash-UCB (Algorithm \ref{algo:nash_vi_matrix_game}) by incorporating transition kernel estimators, corresponding bonus terms and Bellman backward update. The key idea is to utilize the independent transition structure to remove the dependence on the exponential size of the state space $S=\prod_{f\in\F}S^f$. We tackle this issue by adapting technique from factored MDP \citep{chen2020efficient}. The algorithmic details for both types of feedback are deferred into Appendix \ref{sec:markov_algo_details}. The Nash regret bounds for the two versions of Algorithm \ref{algo:nash_vi} are stated in Theorem \ref{theo:ne} and the proof details are deferred to Appendix \ref{sec:ne_proof}.

\begin{theorem}
\label{theo:ne}
For independent Markov congestion game with semi-bandit feedback, by running the centralized Algorithm \ref{algo:nash_vi}, with probability at least $1-\delta$, we can achieve that
$$\text{Nash-Regret}(K)\leq \tmco\Sp{\sum_{f\in\mc{F}}FS^f\sqrt{mH^3T}} + \tmco\Sp{m^2H^2F\sum_{f\neq f'}\Sp{S^fS^{f'}}^2}.$$

Furthermore, if we only have bandit feedback, then by running Algorithm \ref{algo:nash_vi} with reward estimator and bonus term in \eqref{equ:r_tilde} and \eqref{equ:reward_bonus_2}, with probability at least $1-\delta$, we can achieve that
$$\text{Nash-Regret}(K)\leq \tmco\Sp{\sum_{f\in\mc{F}}FS^f\sqrt{m^2H^3T}} + \tmco\Sp{m^2H^2F\sum_{f\neq f'}\Sp{S^fS^{f'}}^2}.$$
\end{theorem}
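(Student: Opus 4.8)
The plan is to run the optimistic backward-induction argument behind Nash-VI \citep{liu2021sharp}—already instantiated for the stateless case in the proof of Theorem \ref{theo:ne_mg}—on top of the factored-transition estimation technique of \citep{chen2020efficient}, so that at no point does the analysis incur the exponentially large action space $\A$ or the exponentially large joint state space $\S=\prod_{f\in\F}\S^f$. Algorithm \ref{algo:nash_vi} maintains at each episode $k$ and timestep $h$ an optimistic estimate $\oq^k_{h,i}$ assembled from two ingredients: an optimistic facility-reward estimate (the counting estimator of \eqref{equ:r_estimate_semi} under semi-bandit feedback, or the ridge-regression estimator of \eqref{equ:r_tilde} and \eqref{equ:reward_bonus_2} under bandit feedback, now indexed by the per-facility context $(h,s^f,n)$ rather than the global $(h,s,\a)$) and a factored transition estimate $\widehat P^k_h(s'\mid s,\a)=\prod_{f\in\F}\widehat P^{k,f}_h(s'^f\mid s^f,n^f(\a))$ carrying its own transition bonus. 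Because for any fixed reward table the one-step game is again a potential game (Lemma \ref{lmm:q_potential_game}), the Nash subroutine invoked inside Algorithm \ref{algo:nash_vi} can be solved by the greedy Algorithm \ref{algo:eps NE} (Lemma \ref{lmm:compute_epsilon_nash}), so the per-episode computation never enumerates $\A$.

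First I would establish the concentration event on which the whole analysis is conditioned, which splits cleanly into a reward part and a transition part. The reward part is identical to the matrix-game case applied context-by-context: under semi-bandit feedback each facility-level mean $r^f_h(s^f,n)$ concentrates at rate $1/\sqrt{N}$ by Hoeffding, while under bandit feedback the linear-bandit concentration from the proof of Theorem \ref{theo:ne_mg} (together with the elliptical-potential variant of Lemma \ref{lmm:elliptical_potential_variant}) applies verbatim, which is precisely why bandit feedback costs only an extra $\sqrt m$ factor in the leading term. The transition part is the genuinely new object: I would invoke the factored concentration of \citep{chen2020efficient} to control each $\widehat P^{k,f}_h(\cdot\mid s^f,n)$, whose context count is $S^f m$, and then feed these into a backward induction establishing $\ov^{\pik}_{1,i}(s_1)\ge V^{\dagger,\pikni}_{1,i}(s_1)$ and $\uv^{\pik}_{1,i}(s_1)\le V^{\pik}_{1,i}(s_1)$ via the usual optimism/pessimism monotonicity.

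Given the sandwich, the single-episode Nash gap is bounded by $\max_{i}\Sp{\ov^{\pik}_{1,i}-\uv^{\pik}_{1,i}}(s_1)$, and I would unroll this difference through the Bellman recursion, which expresses it as an accumulation of (a) reward bonuses and (b) one-step transition errors $\Sp{\widehat P^k_h-P_h}\ov^k_{h+1}$ propagated along the trajectory distribution of $\pik$. Summing the reward bonuses over $k,h$ via the standard pigeonhole/elliptical-potential argument produces the leading term $\tmco\bigl(\sum_{f}FS^f\sqrt{mH^3T}\bigr)$ (with the extra $\sqrt m$ under bandit feedback), where the factor $\sum_f S^f$ arises from summing the per-facility estimation errors independently across facilities and the remaining dependence reflects the $O(HF)$ range of the value functions and the $H$ steps per episode.

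The hard part, and the source of the second summand, is controlling the accumulated factored transition error. The estimator $\prod_f\widehat P^{k,f}_h$ is a \emph{biased} estimator of $\prod_f P^f_h$, so, unlike in flat MDPs, its error does not reduce to a single martingale plus bonus. I would telescope the product one factor at a time, writing $\prod_f\widehat P^f-\prod_f P^f$ as a sum over $f$ of terms in which factor $f$ is swapped while the others are held at either $\widehat P$ or $P$; the diagonal/first-order pieces behave like ordinary transition errors and fold into the leading $\sqrt T$ term, but the cross pieces pair two distinct factors $f\ne f'$ and, after bounding each factor's $\ell_1$ deviation by its inverse-count and summing the two resulting count series, contribute a $T$-independent burn-in cost. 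Tracking the $S^f m$ contexts for each of the two paired factors, the horizon, and the $O(HF)$ value range, these cross terms aggregate to $\tmco\bigl(m^2H^2F\sum_{f\ne f'}(S^fS^{f'})^2\bigr)$. Proving that the product bias decomposes this way, and that each pairwise count series is summable independently of $T$, is the main technical obstacle; the bandit-versus-semi-bandit distinction touches only the reward bonuses and hence only the leading term, which is why both feedback models share the identical second summand.
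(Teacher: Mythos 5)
Your proposal follows essentially the same route as the paper's proof: optimism via per-facility reward concentration plus a factored transition bonus, a backward-induction sandwich $\uv^k_{h,i}\le V^{\pik}_{h,i}$ and $\ov^k_{h,i}\ge V^{\dagger,\pikni}_{h,i}-O(H\epsilon)$, a martingale-plus-bonus decomposition of $\sum_k\tv^k_1(s_1)$, pigeonhole/elliptical-potential sums for the leading $\sqrt{T}$ term, and the decomposition of $\prod_f\widehat P^{k,f}_h-\prod_f P^f_h$ (Lemma E.1 of \citep{chen2020efficient}) into first-order terms folding into the leading term plus pairwise cross terms whose inverse-count products sum to the $T$-independent $\tmco\bigl(m^2H^2F\sum_{f\ne f'}(S^fS^{f'})^2\bigr)$ burn-in; your observation that the feedback model only affects the reward bonuses and hence only the leading term also matches the paper's Step 3.

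One concrete misstatement: you claim the one-step game at each $(h,s)$ is again a potential game by Lemma \ref{lmm:q_potential_game} and can therefore be solved by the greedy Algorithm \ref{algo:eps NE}. The paper explicitly remarks the opposite for Algorithm \ref{algo:nash_vi}: once the continuation values $\widehat{\mathbb{P}}^k_h\ov^k_{h+1,i}$ are added, the payoffs $\oq^k_{h,1}(s,\cdot),\dots,\oq^k_{h,m}(s,\cdot)$ in general no longer share a potential (different players carry different value functions through the factored transition), so the $\epsilon$-\textsc{Nash} subroutine is not computationally efficient in the Markov setting. This does not damage the regret bound, since the analysis only uses that $\pi^k_h(\cdot\mid s)$ is an $\epsilon$-approximate equilibrium of those payoffs regardless of how it is computed, but you should drop the efficiency claim.
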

The regret bound in \citep{liu2021sharp} is $\widetilde{O}(\sqrt{H^3S^2(\Pi_{i=1}^mA_i)T})$, where both $A_i$ and $S=\prod_{f\in\F}S^f$ can be exponential in $F$. Our bounds have polynomial dependence on all the parameters. 


\section{Conclusion}
\label{sec:conclusion}

In this paper, we study sample-efficient learning in congestion games by utilizing the special reward structure. We propose both centralized and decentralized algorithms for congestion games with two types of feedback, 
all achieving sample complexities only polynomial in the number of facilities. 
To the best of our knowledge, each one of them is the first sample-efficient learning algorithm for congestion games in its own setting. 
We further define the independent Markov congestion game (IMCG) as a natural extension of the congestion game into the Markov setting together with a sample-efficient centralized algorithm for both types of feedback. 

One promising future direction is to find a sample-efficient decentralized algorithm such that from each player's own perspective, the algorithm is still no-regret. In other words, diminishing regret is guaranteed for the player by running this algorithm even though other players may use policies from different algorithms. Another important future direction is to find \revision{sample-efficient centralized/decentralized algorithms that can explicitly find an approximate Nash equilibrium policy.}
\section*{Acknowledgements}

We sincerely thank Jing Dong for pointing out a mistake in the initial draft of this paper. This work was supported in part by NSF TRIPODS II-DMS 2023166, NSF CCF 2007036, NSF IIS 2110170, NSF DMS 2134106, NSF CCF 2212261, NSF IIS 2143493, NSF CCF 2019844.

\bibliography{References}

\begin{thebibliography}{53}
\providecommand{\natexlab}[1]{#1}
\providecommand{\url}[1]{\texttt{#1}}
\expandafter\ifx\csname urlstyle\endcsname\relax
  \providecommand{\doi}[1]{doi: #1}\else
  \providecommand{\doi}{doi: \begingroup \urlstyle{rm}\Url}\fi

\bibitem[Al-Kashoash et~al.(2017)Al-Kashoash, Hafeez, and
  Kemp]{al2017congestion}
Hayder~AA Al-Kashoash, Maryam Hafeez, and Andrew~H Kemp.
\newblock Congestion control for 6lowpan networks: A game theoretic framework.
\newblock \emph{IEEE internet of things journal}, 4\penalty0 (3):\penalty0
  760--771, 2017.

\bibitem[Bai and Jin(2020)]{bai2020provable}
Yu~Bai and Chi Jin.
\newblock Provable self-play algorithms for competitive reinforcement learning.
\newblock In \emph{International conference on machine learning}, pages
  551--560. PMLR, 2020.

\bibitem[Bervoets et~al.(2020)Bervoets, Bravo, and Faure]{bervoets2020learning}
Sebastian Bervoets, Mario Bravo, and Mathieu Faure.
\newblock Learning with minimal information in continuous games.
\newblock \emph{Theoretical Economics}, 15\penalty0 (4):\penalty0 1471--1508,
  2020.

\bibitem[Bravo et~al.(2018)Bravo, Leslie, and Mertikopoulos]{bravo2018bandit}
Mario Bravo, David Leslie, and Panayotis Mertikopoulos.
\newblock Bandit learning in concave n-person games.
\newblock \emph{Advances in Neural Information Processing Systems}, 31, 2018.

\bibitem[Cen et~al.(2022)Cen, Chen, and Chi]{cen2022independent}
Shicong Cen, Fan Chen, and Yuejie Chi.
\newblock Independent natural policy gradient methods for potential games:
  Finite-time global convergence with entropy regularization.
\newblock \emph{arXiv preprint arXiv:2204.05466}, 2022.

\bibitem[Chen and Lu(2015)]{chen2015playing}
Po-An Chen and Chi-Jen Lu.
\newblock Playing congestion games with bandit feedbacks.
\newblock In \emph{AAMAS}, pages 1721--1722, 2015.

\bibitem[Chen and Lu(2016)]{chen2016generalized}
Po-An Chen and Chi-Jen Lu.
\newblock Generalized mirror descents in congestion games.
\newblock \emph{Artificial Intelligence}, 241:\penalty0 217--243, 2016.

\bibitem[Chen et~al.(2020)Chen, Hu, Li, and Wang]{chen2020efficient}
Xiaoyu Chen, Jiachen Hu, Lihong Li, and Liwei Wang.
\newblock Efficient reinforcement learning in factored mdps with application to
  constrained rl.
\newblock \emph{arXiv preprint arXiv:2008.13319}, 2020.

\bibitem[Cheung and Piliouras(2020)]{cheung2020chaos}
Yun~Kuen Cheung and Georgios Piliouras.
\newblock Chaos, extremism and optimism: Volume analysis of learning in games.
\newblock \emph{Advances in Neural Information Processing Systems},
  33:\penalty0 9039--9049, 2020.

\bibitem[Cominetti et~al.(2010)Cominetti, Melo, and Sorin]{cominetti2010payoff}
Roberto Cominetti, Emerson Melo, and Sylvain Sorin.
\newblock A payoff-based learning procedure and its application to traffic
  games.
\newblock \emph{Games and Economic Behavior}, 70\penalty0 (1):\penalty0 71--83,
  2010.

\bibitem[Coucheney et~al.(2015)Coucheney, Gaujal, and
  Mertikopoulos]{coucheney2015penalty}
Pierre Coucheney, Bruno Gaujal, and Panayotis Mertikopoulos.
\newblock Penalty-regulated dynamics and robust learning procedures in games.
\newblock \emph{Mathematics of Operations Research}, 40\penalty0 (3):\penalty0
  611--633, 2015.

\bibitem[Daskalakis(2013)]{daskalakis2013complexity}
Constantinos Daskalakis.
\newblock On the complexity of approximating a nash equilibrium.
\newblock \emph{ACM Transactions on Algorithms (TALG)}, 9\penalty0
  (3):\penalty0 1--35, 2013.

\bibitem[Ding et~al.(2022)Ding, Wei, Zhang, and
  Jovanović]{ding2022independent}
Dongsheng Ding, Chen-Yu Wei, Kaiqing Zhang, and Mihailo~R. Jovanović.
\newblock Independent policy gradient for large-scale markov potential games:
  Sharper rates, function approximation, and game-agnostic convergence, 2022.

\bibitem[Drusvyatskiy et~al.(2022)Drusvyatskiy, Fazel, and
  Ratliff]{drusvyatskiy2021improved}
Dmitriy Drusvyatskiy, Maryam Fazel, and Lillian~J Ratliff.
\newblock Improved rates for derivative free gradient play in strongly monotone
  games.
\newblock In \emph{Proc. {IEEE} Conference on Decision and Control}, 2022.

\bibitem[Durand(2018)]{durand2018analysis}
Stéphane Durand.
\newblock \emph{Analysis of Best Response Dynamics in Potential Games}.
\newblock PhD thesis, Universit{\'e} Grenoble Alpes, 2018.

\bibitem[Fotakis et~al.(2002)Fotakis, Kontogiannis, Koutsoupias, Mavronicolas,
  and Spirakis]{fotakis2002structure}
Dimitris Fotakis, Spyros Kontogiannis, Elias Koutsoupias, Marios Mavronicolas,
  and Paul Spirakis.
\newblock The structure and complexity of nash equilibria for a selfish routing
  game.
\newblock In \emph{International Colloquium on Automata, Languages, and
  Programming}, pages 123--134. Springer, 2002.

\bibitem[Fox et~al.(2021)Fox, McAleer, Overman, and
  Panageas]{fox2021independent}
Roy Fox, Stephen McAleer, Will Overman, and Ioannis Panageas.
\newblock Independent natural policy gradient always converges in markov
  potential games.
\newblock \emph{arXiv preprint arXiv:2110.10614}, 2021.

\bibitem[Heliou et~al.(2017)Heliou, Cohen, and
  Mertikopoulos]{heliou2017learning}
Am{\'e}lie Heliou, Johanne Cohen, and Panayotis Mertikopoulos.
\newblock Learning with bandit feedback in potential games.
\newblock \emph{Advances in Neural Information Processing Systems}, 30, 2017.

\bibitem[Ibars et~al.(2010)Ibars, Navarro, and Giupponi]{ibars2010distributed}
Christian Ibars, Monica Navarro, and Lorenza Giupponi.
\newblock Distributed demand management in smart grid with a congestion game.
\newblock In \emph{2010 First IEEE International Conference on Smart Grid
  Communications}, pages 495--500. IEEE, 2010.

\bibitem[Jin et~al.(2018)Jin, Allen-Zhu, Bubeck, and Jordan]{jin2018q}
Chi Jin, Zeyuan Allen-Zhu, Sebastien Bubeck, and Michael~I Jordan.
\newblock Is q-learning provably efficient?
\newblock \emph{Advances in neural information processing systems}, 31, 2018.

\bibitem[Jin et~al.(2021{\natexlab{a}})Jin, Liu, Wang, and Yu]{jin2021v}
Chi Jin, Qinghua Liu, Yuanhao Wang, and Tiancheng Yu.
\newblock V-learning--a simple, efficient, decentralized algorithm for
  multiagent rl.
\newblock \emph{arXiv preprint arXiv:2110.14555}, 2021{\natexlab{a}}.

\bibitem[Jin et~al.(2021{\natexlab{b}})Jin, Liu, Wang, and
  Yu]{jin2021vlearning}
Chi Jin, Qinghua Liu, Yuanhao Wang, and Tiancheng Yu.
\newblock V-learning -- a simple, efficient, decentralized algorithm for
  multiagent rl, 2021{\natexlab{b}}.

\bibitem[Johari and Tsitsiklis(2004)]{johari2004efficiency}
Ramesh Johari and John~N Tsitsiklis.
\newblock Efficiency loss in a network resource allocation game.
\newblock \emph{Mathematics of Operations Research}, 29\penalty0 (3):\penalty0
  407--435, 2004.

\bibitem[Kleinberg et~al.(2009)Kleinberg, Piliouras, and
  Tardos]{kleinberg2009multiplicative}
Robert Kleinberg, Georgios Piliouras, and {\'E}va Tardos.
\newblock Multiplicative updates outperform generic no-regret learning in
  congestion games.
\newblock In \emph{Proceedings of the forty-first annual ACM symposium on
  Theory of computing}, pages 533--542, 2009.

\bibitem[Krichene et~al.(2014)Krichene, Drigh{\`e}s, and
  Bayen]{krichene2014convergence}
Walid Krichene, Benjamin Drigh{\`e}s, and Alexandre Bayen.
\newblock On the convergence of no-regret learning in selfish routing.
\newblock In \emph{International Conference on Machine Learning}, pages
  163--171. PMLR, 2014.

\bibitem[Krichene et~al.(2015)Krichene, Drigh{\`e}s, and
  Bayen]{krichene2015online}
Walid Krichene, Benjamin Drigh{\`e}s, and Alexandre~M Bayen.
\newblock Online learning of nash equilibria in congestion games.
\newblock \emph{SIAM Journal on Control and Optimization}, 53\penalty0
  (2):\penalty0 1056--1081, 2015.

\bibitem[Lattimore and Szepesv{\'a}ri(2020)]{lattimore2020bandit}
Tor Lattimore and Csaba Szepesv{\'a}ri.
\newblock \emph{Bandit algorithms}.
\newblock Cambridge University Press, 2020.

\bibitem[Leonardos et~al.(2021)Leonardos, Overman, Panageas, and
  Piliouras]{leonardos2021global}
Stefanos Leonardos, Will Overman, Ioannis Panageas, and Georgios Piliouras.
\newblock Global convergence of multi-agent policy gradient in markov potential
  games, 2021.

\bibitem[Leslie(2004)]{leslie2004reinforcement}
David~S Leslie.
\newblock \emph{Reinforcement learning in games}.
\newblock PhD thesis, University of Bristol, 2004.

\bibitem[Leslie and Collins(2005)]{leslie2005individual}
David~S Leslie and Edmund~J Collins.
\newblock Individual q-learning in normal form games.
\newblock \emph{SIAM Journal on Control and Optimization}, 44\penalty0
  (2):\penalty0 495--514, 2005.

\bibitem[Leslie and Collins(2006)]{leslie2006generalised}
David~S Leslie and Edmund~J Collins.
\newblock Generalised weakened fictitious play.
\newblock \emph{Games and Economic Behavior}, 56\penalty0 (2):\penalty0
  285--298, 2006.

\bibitem[Liu et~al.(2021)Liu, Yu, Bai, and Jin]{liu2021sharp}
Qinghua Liu, Tiancheng Yu, Yu~Bai, and Chi Jin.
\newblock A sharp analysis of model-based reinforcement learning with
  self-play.
\newblock In \emph{International Conference on Machine Learning}, pages
  7001--7010. PMLR, 2021.

\bibitem[Macua et~al.(2018)Macua, Zazo, and Zazo]{macua2018learning}
Sergio~Valcarcel Macua, Javier Zazo, and Santiago Zazo.
\newblock Learning parametric closed-loop policies for markov potential games.
\newblock \emph{arXiv preprint arXiv:1802.00899}, 2018.

\bibitem[Marden(2012)]{marden2012state}
Jason~R Marden.
\newblock State based potential games.
\newblock \emph{Automatica}, 48\penalty0 (12):\penalty0 3075--3088, 2012.

\bibitem[Monderer and Shapley(1996)]{monderer1996potential}
Dov Monderer and Lloyd~S Shapley.
\newblock Potential games.
\newblock \emph{Games and economic behavior}, 14\penalty0 (1):\penalty0
  124--143, 1996.

\bibitem[Nikaid{\^o} and Isoda(1955)]{nikaido1955note}
Hukukane Nikaid{\^o} and Kazuo Isoda.
\newblock Note on non-cooperative convex games.
\newblock \emph{Pacific Journal of Mathematics}, 5\penalty0 (S1):\penalty0
  807--815, 1955.

\bibitem[Orabona(2019)]{orabona2019modern}
Francesco Orabona.
\newblock A modern introduction to online learning.
\newblock \emph{arXiv preprint arXiv:1912.13213}, 2019.

\bibitem[Osband and Van~Roy(2014)]{osband2014near}
Ian Osband and Benjamin Van~Roy.
\newblock Near-optimal reinforcement learning in factored mdps.
\newblock In Z.~Ghahramani, M.~Welling, C.~Cortes, N.~Lawrence, and K.Q.
  Weinberger, editors, \emph{Advances in Neural Information Processing
  Systems}, volume~27. Curran Associates, Inc., 2014.

\bibitem[Raghunathan et~al.(2019)Raghunathan, Cherian, and
  Jha]{raghunathan2019game}
Arvind Raghunathan, Anoop Cherian, and Devesh Jha.
\newblock Game theoretic optimization via gradient-based nikaido-isoda
  function.
\newblock In \emph{International Conference on Machine Learning}, pages
  5291--5300. PMLR, 2019.

\bibitem[Rosenberg and Mansour(2021)]{rosenberg2021oracle}
Aviv Rosenberg and Yishay Mansour.
\newblock Oracle-efficient regret minimization in factored mdps with unknown
  structure.
\newblock \emph{Advances in Neural Information Processing Systems}, 34, 2021.

\bibitem[Rosenthal(1973)]{rosenthal1973class}
Robert~W Rosenthal.
\newblock A class of games possessing pure-strategy nash equilibria.
\newblock \emph{International Journal of Game Theory}, 2\penalty0 (1):\penalty0
  65--67, 1973.

\bibitem[Roughgarden(2010)]{roughgarden2010algorithmic}
Tim Roughgarden.
\newblock Algorithmic game theory.
\newblock \emph{Communications of the ACM}, 53\penalty0 (7):\penalty0 78--86,
  2010.

\bibitem[Roughgarden and Tardos(2004)]{roughgarden2004bounding}
Tim Roughgarden and {\'E}va Tardos.
\newblock Bounding the inefficiency of equilibria in nonatomic congestion
  games.
\newblock \emph{Games and economic behavior}, 47\penalty0 (2):\penalty0
  389--403, 2004.

\bibitem[Rubinstein(2016)]{rubinstein2016settling}
Aviad Rubinstein.
\newblock Settling the complexity of computing approximate two-player nash
  equilibria.
\newblock In \emph{2016 IEEE 57th Annual Symposium on Foundations of Computer
  Science (FOCS)}, pages 258--265. IEEE, 2016.

\bibitem[Shapley(1953)]{shapley1953stochastic}
Lloyd~S Shapley.
\newblock Stochastic games.
\newblock \emph{Proceedings of the national academy of sciences}, 39\penalty0
  (10):\penalty0 1095--1100, 1953.

\bibitem[Song et~al.(2021)Song, Mei, and Bai]{song2021can}
Ziang Song, Song Mei, and Yu~Bai.
\newblock When can we learn general-sum markov games with a large number of
  players sample-efficiently?
\newblock \emph{arXiv preprint arXiv:2110.04184}, 2021.

\bibitem[Swenson et~al.(2018)Swenson, Murray, and Kar]{swenson2018best}
Brian Swenson, Ryan Murray, and Soummya Kar.
\newblock On best-response dynamics in potential games.
\newblock \emph{SIAM Journal on Control and Optimization}, 56\penalty0
  (4):\penalty0 2734--2767, 2018.

\bibitem[Szita and L{\H{o}}rincz(2009)]{szita2009optimistic}
Istv{\'a}n Szita and Andr{\'a}s L{\H{o}}rincz.
\newblock Optimistic initialization and greediness lead to polynomial time
  learning in factored mdps.
\newblock In \emph{Proceedings of the 26th annual international conference on
  machine learning}, pages 1001--1008, 2009.

\bibitem[Tian et~al.(2020)Tian, Qian, and Sra]{tian2020towards}
Yi~Tian, Jian Qian, and Suvrit Sra.
\newblock Towards minimax optimal reinforcement learning in factored markov
  decision processes.
\newblock \emph{Advances in Neural Information Processing Systems},
  33:\penalty0 19896--19907, 2020.

\bibitem[Xu and Tewari(2020)]{xu2020reinforcement}
Ziping Xu and Ambuj Tewari.
\newblock Reinforcement learning in factored mdps: Oracle-efficient algorithms
  and tighter regret bounds for the non-episodic setting.
\newblock \emph{Advances in Neural Information Processing Systems},
  33:\penalty0 18226--18236, 2020.

\bibitem[Zhang et~al.(2021{\natexlab{a}})Zhang, Yang, and
  Basar]{zhang2021multi}
Kaiqing Zhang, Zhuoran Yang, and Tamer Basar.
\newblock Multi-agent reinforcement learning: A selective overview of theories
  and algorithms.
\newblock \emph{Handbook of Reinforcement Learning and Control}, pages
  321--384, 2021{\natexlab{a}}.

\bibitem[Zhang et~al.(2020)Zhang, Shen, Mokhtari, Hassani, and
  Karbasi]{zhang2020one}
Mingrui Zhang, Zebang Shen, Aryan Mokhtari, Hamed Hassani, and Amin Karbasi.
\newblock One sample stochastic frank-wolfe.
\newblock In \emph{International Conference on Artificial Intelligence and
  Statistics}, pages 4012--4023. PMLR, 2020.

\bibitem[Zhang et~al.(2021{\natexlab{b}})Zhang, Ren, and Li]{zhang2021gradient}
Runyu Zhang, Zhaolin Ren, and Na~Li.
\newblock Gradient play in stochastic games: stationary points, convergence,
  and sample complexity.
\newblock \emph{arXiv preprint arXiv:2106.00198}, 2021{\natexlab{b}}.

\end{thebibliography}
\bibliographystyle{plainnat}


\newpage
\appendix
\tableofcontents

\section{Additional Motivating Examples}
\label{sec:additional_examples}
In this section, we present two additional motivating examples of our proposed models.
\begin{example}[{\bf Web Advertisements}]
Consider a set of websites as the facility set and companies who want to advertise their products as the players. Due to budget constraints, each company may only choose some of these websites to put its product ad. For each website, the probability that a user will click on a certain ad (and then buy the product) depends on how many ads are put on the website. If a website receives too many ads, the probability that a user can see a certain ad will decrease, thus making it congested.\footnote{Although the website's intelligent recommendation system may more or less mitigate this effect, it can be considered as a part of the reward function's property.} 
The reward each company will receive is measured by the amount of products sold during certain period of time, which is bandit feedback. 
\end{example}


\begin{example}[{\bf Server Usage}]
Consider a set of servers in a company as the facility set and server users as the players. Each user needs to request several servers to finish her computation task and the cost triggered from each server depends on the number of users requesting that server. 
Each user will try to minimize the total cost incurred from the servers she requested. As each user can see the cost from all the servers she requested, this is semi-bandit feedback. 
\end{example}

\section{Compute $\epsilon$-approximate Nash Equilibrium in Potential Games}
\label{sec:potential_game}

In this section, we show that the $\epsilon\textsc{-Nash}(\cdot)$ operation in Algorithm \ref{algo:nash_vi_matrix_game} can be computed efficiently by using Algorithm \ref{algo:eps NE}.

In particular, we first show that the matrix game with reward functions $\oq^k_1(\cdot), \dots, \oq^k_m(\cdot)$ used in Algorithm \ref{algo:nash_vi_matrix_game} is a potential game in Lemma \ref{lmm:q_potential_game}. Then, we show that Algorithm \ref{algo:eps NE} can efficiently compute an $\epsilon$-approximate Nash equilibrium for potential games and output a product policy as shown in Lemma \ref{lmm:compute_epsilon_nash}.

\begin{lemma}
    \label{lmm:q_potential_game}
    In line \ref{line:compute_nash_mg} of Algorithm \ref{algo:nash_vi_matrix_game}, the matrix game with reward functions $\oq^k_1(\cdot), \dots, \oq^k_m(\cdot)$ forms a potential game for both settings of semi-bandit feedback and bandit feedback.
\end{lemma}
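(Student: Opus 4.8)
The plan is to reduce both feedback settings to the already-established fact (the ``Connection to Potential Games'' paragraph) that any congestion game is a potential game, combined with two elementary closure properties.

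\textbf{Semi-bandit feedback.} First I would unfold the definitions in \eqref{equ:r_estimate_semi} to write, for every player $i$,
$$\oq^k_i(\a) = \hat{r}^k_i(\a) + b_i^{k,\reward}(\a) = \sum_{f\in a_i}\Sp{\hat{r}^{k,f}(n^f(\a)) + \sqrt{\frac{\tilde{\iota}}{N^{k,f}(n^f(\a))\vee 1}}} = \sum_{f\in a_i}\tilde{c}^{k,f}(n^f(\a)),$$
where $\tilde{c}^{k,f}(n) := \hat{r}^{k,f}(n) + \sqrt{\tilde{\iota}/(N^{k,f}(n)\vee 1)}$. This is exactly the reward structure of a congestion game whose facility-reward functions are $\Bp{\tilde{c}^{k,f}}_{f\in\F}$ in place of $\Bp{r^f}_{f\in\F}$: player $i$'s payoff is a sum over the facilities she uses of a quantity depending only on the facility and on the congestion count $n^f(\a)$. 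Invoking the connection to potential games, this matrix game is therefore a potential game, with explicit potential $\Phi^k(\a) = \sum_{f\in\F}\sum_{n=1}^{n^f(\a)}\tilde{c}^{k,f}(n)$ and $\Phi^k(\pi)=\E_{\a\sim\pi}[\Phi^k(\a)]$ at the policy level.

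\textbf{Bandit feedback.} Here the verbatim reduction fails, and this is the one genuinely new point: the bonus $\tilde{b}^{k,\reward}(\a)=\max_{i\in[m]}\Norm{A_i(\a)}_{\Sp{V^k}^{-1}}\sqrt{\tilde{\beta}_k}$ from \eqref{equ:r_estimate_bandit} is \emph{not} separable over facilities, so the congestion-game form is lost. I would instead decompose $\oq^k_i = \tilde{r}^k_i + \tilde{b}^{k,\reward}$ and handle the two summands separately. For the estimator term, unfolding $A_i(\a)$ gives $\tilde{r}^k_i(\a) = \inner{A_i(\a),\widehat{\theta}^k} = \sum_{f\in a_i}[\widehat{\theta}^k]_{n^f(\a)+m(f-1)} = \sum_{f\in a_i}\hat{r}^{k,f}(n^f(\a))$ with $\hat{r}^{k,f}(n):=[\widehat{\theta}^k]_{n+m(f-1)}$, which is again a congestion-game reward and hence a potential game with potential $\Psi^k(\a)=\sum_{f\in\F}\sum_{n=1}^{n^f(\a)}\hat{r}^{k,f}(n)$. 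The crucial observation is that $\tilde{b}^{k,\reward}(\a)$ carries no free player index: it is the \emph{same} function of $\a$ for every player. A game in which all players share one common payoff function is an identical-interest game, which is trivially a potential game whose potential is that common payoff, i.e. $\tilde{b}^{k,\reward}$ itself.

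\textbf{Combining.} Finally I would invoke the standard closure of potential games under addition: if two games on the same action space admit potentials $\Psi^k$ and $\tilde{b}^{k,\reward}$, then the game with payoffs $\tilde{r}^k_i + \tilde{b}^{k,\reward}$ admits the potential $\Psi^k + \tilde{b}^{k,\reward}$, which follows by adding the two defining telescoping identities. Hence in the bandit setting $(\oq^k_1,\dots,\oq^k_m)$ is a potential game with potential $\Phi^k(\a)=\Psi^k(\a)+\tilde{b}^{k,\reward}(\a)$. The main obstacle is precisely the non-separable bonus in the bandit case; once one recognizes it is player-independent, the argument collapses to the identical-interest observation plus closure under sums, and the remaining steps are routine unfoldings of the definitions.
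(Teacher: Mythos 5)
Your proof is correct and follows essentially the same route as the paper's: in the semi-bandit case you absorb the facility-wise bonus into the facility rewards and invoke the congestion-to-potential reduction, and in the bandit case you split $\oq^k_i$ into the separable estimator part (with potential $\sum_{f\in\F}\sum_{n=1}^{n^f(\a)}\hat{r}^{k,f}(n)$) plus the player-independent bonus, arriving at exactly the paper's potential $\widetilde{\Phi}^k(\a)+\tilde{b}^{k,\reward}(\a)$. Your explicit framing of the bonus term as an identical-interest game combined with closure of potential games under addition is a slightly more modular way of stating what the paper verifies directly via the telescoping identity, but the substance is identical.
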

\begin{proof}
    In the setting of semi-bandit feedback, since $\oq^k_i(\a)=\sum_{f\in a_i}(\hat{r}^{k, f}+b^{k, f, \reward})(\a)$, the reward functions $\oq^k_1(\cdot), \dots, \oq^k_m(\cdot)$ form a congestion game, which we know is a potential game \citep{monderer1996potential}.

    In the setting of bandit feedback, notice that by defining $\tilde{r}^{k, f}(i)=\widehat{\theta}^k_{i+m(f-1)}$ for $(i, f)\in[m]\times\F$, we can have $\tilde{r}^k_i(\a)=\inner{A_i(\a), \widehat{\theta}^k}=\sum_{f\in a_i}\tilde{r}^{k, f}(n^f(\a))$. Therefore, we claim that the desired potential function is
    $$\Phi^k(\a)=\widetilde{\Phi}^k(\a)+\tilde{b}^{k, \reward}(\a),\quad\text{where}\quad \widetilde{\Phi}^k(\a)=\sum_{f\in\mc{F}}\sum_{i=1}^{n^f(\a)}\tilde{r}^{k, f}(i).$$ 
    To see this, by referring to the definition of potential function in congestion game \citep{monderer1996potential}, since $\tilde{r}^k_i(\a)=\sum_{f\in a_i}\tilde{r}^{k, f}(n^f(\a))$, we have that
    $$\widetilde{\Phi}^k(a_i, a_{-i})-\widetilde{\Phi}^k(a_i', a_{-i})=\tilde{r}_i(a_i, a_{-i})-\tilde{r}_i(a_i', a_{-i}).$$
    As a result, we have
    \begin{align*}
        &\Phi^k(a_i, a_{-i})-\Phi^k(a_i', a_{-i})\\
        =&\Sp{\tilde{r}_i(a_i, a_{-i})+\tilde{b}^{k, \reward}(a_i, a_{-i})}-\Sp{\tilde{r}_i(a_i', a_{-i})+\tilde{b}^{k, \reward}(a_i', a_{-i})}\\
        =&\oq^k_i(a_i, a_{-i})-\oq^k_i(a_i', a_{-i}),
    \end{align*}
    which means that $\oq^k_1(\cdot), \dots, \oq^k_m(\cdot)$ form a potential game.
\end{proof}



\begin{lemma}
\label{lmm:compute_epsilon_nash}
Algorithm \ref{algo:eps NE} can output an $\epsilon$-approximate Nash equilibrium. 
\end{lemma}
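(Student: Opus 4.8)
The plan is to track the potential function $\Phi$ of the input potential game along the sequence of policies $\pi^1, \pi^2, \ldots$ produced by Algorithm \ref{algo:eps NE}, and to argue that every non-terminating round increases $\Phi$ by strictly more than $\epsilon$, so that the bounded range of $\Phi$ forces the termination condition to fire within the allotted $\lceil m r_\mathrm{max}/\epsilon\rceil$ rounds. First I would observe that whenever the algorithm does \emph{not} return at round $k$ (i.e.\ $\max_{i}\Delta_i>\epsilon$), it modifies only the coordinate $j=\argmax_{i}\Delta_i$ to its best response $a_j^{k+1}$ while holding all other players fixed. Applying the defining property of the potential function to this single-player deviation gives
$$\Phi(\pi^{k+1})-\Phi(\pi^k)=V_j^{\pi^{k+1}}-V_j^{\pi^k}=r_j(a_j^{k+1},\pi_{-j}^k)-r_j(\pi^k)=\Delta_j>\epsilon,$$
so each update strictly increases $\Phi$ by more than $\epsilon$.

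Next I would bound the total range of $\Phi$ over deterministic product policies, the only policies Algorithm \ref{algo:eps NE} ever visits. Since $r_i\in[0,r_\mathrm{max}]$, for any deterministic product policy we have $V_i^{\pi}=r_i(\pi)\in[0,r_\mathrm{max}]$. Interpolating between any two deterministic product policies one coordinate at a time and invoking the potential property at each step, the value of $\Phi$ changes by at most $r_\mathrm{max}$ per coordinate (a difference of two quantities in $[0,r_\mathrm{max}]$), hence $\Phi$ varies over a range of at most $m r_\mathrm{max}$. Combining this with the per-update increase: if the algorithm performed $\lceil m r_\mathrm{max}/\epsilon\rceil$ updates without returning, $\Phi$ would have grown by more than $\lceil m r_\mathrm{max}/\epsilon\rceil\cdot\epsilon\geq m r_\mathrm{max}$, exceeding its range --- a contradiction. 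Therefore the condition $\max_{i}\Delta_i\leq\epsilon$ must trigger at some round $k\leq\lceil m r_\mathrm{max}/\epsilon\rceil$, and the algorithm returns $\pi^k$.

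Finally I would verify that the returned $\pi^k$ is a genuine $\epsilon$-approximate Nash equilibrium. Because $\pi^k$ is a deterministic product policy, each player's best-response value over the simplex $\Delta(\A_i)$ is attained at a pure action, so $V_i^{\dagger,\pi_{-i}^k}=\max_{a_i\in\A_i}r_i(a_i,\pi_{-i}^k)$; together with the return condition $\Delta_i=\max_{a_i}r_i(a_i,\pi_{-i}^k)-r_i(\pi^k)\leq\epsilon$ for every $i$, this yields $\max_{i}\bigl(V_i^{\dagger,\pi_{-i}^k}-V_i^{\pi^k}\bigr)\leq\epsilon$, which is exactly the definition of an $\epsilon$-approximate Nash equilibrium.

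The main thing to get right is the $m r_\mathrm{max}$ range bound on $\Phi$, since it is what matches the round budget $\lceil m r_\mathrm{max}/\epsilon\rceil$ exactly: one must invoke the potential identity only through single-coordinate deviations (so the telescoping interpolation is legitimate) and restrict to deterministic product policies so that each value lies in $[0,r_\mathrm{max}]$. Once that bound is established, the monotone-improvement-plus-pigeonhole argument and the final best-response-at-a-vertex observation are both routine.
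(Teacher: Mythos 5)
Your proposal is correct and follows essentially the same route as the paper's proof: a telescoping single-coordinate interpolation to bound the range of $\Phi$ by $mr_\mathrm{max}$, the potential identity to show each non-terminating round increases $\Phi$ by $\Delta_j>\epsilon$, and a pigeonhole contradiction against the round budget $\lceil mr_\mathrm{max}/\epsilon\rceil$. Your final step verifying that the termination condition implies an $\epsilon$-approximate Nash equilibrium (via best responses over $\Delta(\A_i)$ being attained at pure actions) is slightly more explicit than the paper, which simply asserts it, but the argument is the same.
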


\begin{proof}
Note that if at round $k$, we have $\max_{i\in[m]}\Delta_i\leq\epsilon$, then $\pi^k$ is an $\epsilon$-approximate Nash equilibrium. So we only need to prove that $\max_{i\in[m]}\Delta_i\leq\epsilon$ is satisfied at some round $k\in\{1, \dots, \left\lceil\frac{mr_\mathrm{max}}{\epsilon}\right\rceil\}$. 

Suppose the potential game $\Sp{\Bp{\A_i}_{i=1}^{m}, \Bp{r_i}_{i=1}^{m}}$ is associated with potential function $\Phi\in[0,\Phi_\mathrm{max}]$. Set $\pi^*=\argmax_{\pi\in\prod_{i\in[m]}\Delta(\A_i)}\Phi(\pi)$. Then for any $\pi\in\prod_{i\in[m]}\Delta(\A_i)$, we have
\begin{align*}
    \Phi(\pi^*)-\Phi(\pi)=&\sum_{i\in[m]}\Sp{\Phi(\pi^*_{1:i},\pi_{i+1:m})-\Phi(\pi^*_{1:i-1},\pi_{i:m})}\\
    =&\sum_{i\in[m]}\Sp{V_i^{\pi^*_{1:i},\pi_{i+1:m}}-V_i^{\pi^*_{1:i-1},\pi_{i:m}}}\\
    \leq& mr_\mathrm{max}. 
\end{align*}
As a result, we can set $\Phi_\mathrm{max}=mr_\mathrm{max}$. On the other hand, if $j=\argmax_{i\in[m]}\Delta_i$ for round $k$, we have
\begin{align*}
    \Phi(\pi^{k+1})-\Phi(\pi^k)=&\Phi(\pi_j^{k+1},\pi^k_{-j})-\Phi(\pi^k)\\
    =&V_j^{\pi_j^{k+1},\pi^k_{-j}}-V_j^{\pi^k}\\
    =&r_j(a_j^{k+1},\pi^k_{-j})-r_j(\pi^k)\tag{$\pi^k$ is deterministic}\\
    =&\Delta_j\\
    =&\max_{i\in[m]}\Delta_i. 
\end{align*}
So there must exist $k\in\{1, \dots, \left\lceil\frac{mr_\mathrm{max}}{\epsilon}\right\rceil\}$ such that $\max_{i\in[m]}\Delta_i\leq\epsilon$, otherwise $\Phi(\pi^k)$ increase at least $\epsilon$ at each round, which contradicts $\Phi\in[0,mr_\mathrm{max}]$. 
\end{proof}


\section{Analysis for Algorithm \ref{algo:nash_vi_matrix_game}}
\label{sec:ne_proof_mg}
Recall that the update rule in Algorithm \ref{algo:nash_vi_matrix_game} is $\oq^k_i(\a)=\hat{r}^k_i(\a)+b_i^{k, \reward}(\a)$, where we have
$$b_i^{k, \reward}(\a)=\sum_{f\in a_i}b^{k, f, \reward}(\a),\quad\text{and}\quad b^{k, f, \reward}(\a)=\sqrt{\frac{\tilde{\iota}}{N^{k, f}(n^f(\a))\vee 1}}.$$
For proof convenience, we define auxiliary value functions
\begin{align*}
    &\underline{Q}^k_{i}(\bm{a})=\hat{r}^k_{i}(\a)-b_i^{k, \reward}(\bm{a}),\\
    &\ov^k_i=\E_{\a\sim\pi^k}[\oq^k_i(\a)]\quad\text{and}\quad\uv^k_i=\E_{\a\sim\pi^k}[\uq^k_i(\a)].
\end{align*}
With these definitions, we now begin to prove Theorem \ref{theo:ne_mg}.

\begin{proof}[Proof of Theorem \ref{theo:ne_mg}]
    \textbf{Semi-bandit Feedback.} By the update rules in Algorithm \ref{algo:nash_vi_matrix_game}, in the setting of semi-bandit feedback, with probability at least $1-\delta$, simultaneously for all $(k, i, \a)\in[K]\times[m]\times\A$, we have
    $$\oq^k_i(\a)-r_i(\a)=\sum_{f\in a_i} \Mp{(\hat{r}^{k, f} - r^f)(\a) + b^{k, f, \reward}(\a)}\geq 0.$$
    The second inequality above is obtained by using standard Hoeffding's inequality and union bound, Therefore, we have $\oq^k_i(\a)\geq r_i(\a)$.
    
    Then, since $\pi^k$ is the $\epsilon$-approximate Nash equilibrium policy of $\oq^k_1, \dots, \oq^k_m$, we have
    \begin{align*}
      \ov^k_i=\E_{\a\sim\pi^k}[\oq^k_i(\a)]=&\max_{\nu\in\Delta(\A_i)}\E_{\a\sim(\nu, \pi^k_{-i})}[\oq^k_i(\a)]-\epsilon\\
      \geq& \max_{\nu\in\Delta(\A_i)}\E_{\a\sim(\nu, \pi^k_{-i})}[r_i(\a)]-\epsilon=V^{\dagger, \pi^{k}_{-i}}_i-\epsilon.  
    \end{align*}
    Meanwhile, by definition of $\uq^k_i(\a)$ and $\uv^k_i$, we can similarly show that $\uq^k_i(\a)\leq r_i(\a)$ and $\uv^k_i\leq V^{\pi^k}_i$. Therefore, we can have $V^{\dagger, \pikni}_i-V^{\pi^k}_i\leq\ov^k_i-\uv^k_i+\epsilon$.
    
     Now, we define $\tq^k(\a)=\max_{i\in[m]}2b_i^{k, \reward}(\a)$ and $\tv^k=\E_{\a\sim\pi^k}[\tq^k(\a)]$. Then, we can notice that
    $$\max_{i\in[m]}(\oq^k_i-\uq^k_i)(\a)\leq\max_{i\in[m]}2b_i^{k, \reward}(\a)=\tq^k(\a),$$
    $$\max_{i\in[m]}(\ov^k_i-\uv^k_i)\leq\E_{\a\sim\pi^k}\Mp{\max_{i\in[m]}(\oq^k_i-\uq^k_i)(\a)}\leq \E_{\a\sim\pi^k}[\tq^k(\a)]=\tv^k.$$
    
    We further define 
    $\M^k=\E_{\a\sim\pi^k}\Mp{\tq^k(\a)}-\tq^k(\a^k)=\tv^k-\tq^k(\a^k)$.
    It is not hard to verify that $\M^k$ is a martingale difference sequence with respect to the history from episode $1$ to $k-1$. 
    Meanwhile, since $\abs{b^{k, \reward}(\a)}=\sum_{f\in\F}\sqrt{\frac{\tilde{\iota}}{N^{k, f}(n^f(\a))\vee 1}}\leq F\sqrt{\tilde{\iota}}$. 
    Thus, by Azuma-Hoeffding inequality, we have $\sum_{k=1}^K\M^k=\tmco\Sp{F\sqrt{K}}$. Therefore, we have
    \begin{align*}
        \text{Nash-Regret}(K)=&\sum_{k=1}^{K}\max_{i\in[m]}\Sp{V^{\dagger, \pikni}_i-V^{\pi^k}_i}\\
        = & \sum_{k=1}^{K}\min\Bp{\max_{i\in[m]}\Sp{V^{\dagger, \pikni}_i-V^{\pi^k}_i}, F}\tag{Since the value is always bounded by $F$.}\\
        \leq & \sum_{k=1}^K\min\Bp{\max_{i\in[m]}\Sp{\ov^k_i-\uv^k_i}, F}+K\epsilon\\
        \leq & \sum_{k=1}^K\min\Bp{\tv^k, F}+K\epsilon\\
        = & \sum_{k=1}^K\Sp{\min\Bp{\tq^k(\a^k), F}+\M^k}+K\epsilon\\
        \leq & \tmco\Sp{F\sqrt{K}} + 2\sum_{k=1}^K\Bp{\max_{i\in[m]}b_i^{k, \reward}(\a^k), F}\tag{By taking $\epsilon=1/K$.}\\
        \leq & \tmco\Sp{F\sqrt{K}} + 2\sum_{f\in\F}\sum_{k=1}^K\sqrt{\frac{\tilde{\iota}}{N^{k, f}(n^f(\a^k))\vee 1}}\\
        \leq & \tmco\Sp{F\sqrt{mK}}\tag{By Lemma \ref{lmm:sum_sqrtn_mg}.}
    \end{align*}
    
    \textbf{Bandit Feedback.} By using Lemma \ref{lmm:least_square_bound_mg}, which guarantees optimistic estimation, we can similarly show that
    $$\text{Nash-Regret}(K)\leq\sum_{k=1}^K\M^k+\sum_{k=1}^K\min\Bp{2\tilde{b}^{k, \reward}(\a^k), F}+K\epsilon.$$
    To have an upper bound on $\M^k$ here, recall that $\tilde{b}^{k, \reward}(\a)=\max_{i\in[m]}\Norm{A_i(\a)}_{\Sp{V^k}^{-1}}\sqrt{\tilde{\beta}_k}$ and $\sqrt{\tilde{\beta}_K}=\tmco\Sp{\sqrt{F\tilde{d}}}=\tmco\Sp{F\sqrt{m}}$. Meanwhile, we have $\Norm{A_i(\a)}_{\Sp{V^k}^{-1}}\leq\Norm{A_i(\a)}_{I}=\Norm{A_i(\a)}_2\leq \sqrt{F}$. Thus, we have $\abs{\M^k}\leq\tmco\Sp{\sqrt{mF^3}}$, which by Azuma-Hoeffding inequality implies $\sum_{k=1}^K\M^k=\tmco\Sp{\sqrt{mF^3K}}$.
    
    Then the sum of the bonus terms can be bounded by using Lemma \ref{lmm:elliptical_potential_variant}. In particular, with $\epsilon=1/K$, we have
    \begin{align*}
        \text{Nash-Regret}(K)\leq&\tmco\Sp{\sqrt{mF^3K}} + 2\sum_{k=1}^K\min\Bp{\max_{i\in[m]}\Norm{A_i(\a^k)}_{\Sp{V^k}^{-1}}\sqrt{\tilde{\beta}_k}, F}\\
        \leq & \tmco\Sp{\sqrt{mF^3K}} + 2\sqrt{K\sum_{k=1}^{K}\min\Bp{\max_{i\in[m]}\Norm{A_i(\a^k)}_{\Sp{V^k}^{-1}}^2\tilde{\beta}_k, F^2}}\\
        \leq & \tmco\Sp{\sqrt{mF^3K}} + \sqrt{\tmco\Sp{mF^2K}\sum_{k=1}^K\min\Bp{\max_{i\in[m]}\Norm{A_i(\a^k)}_{\Sp{V^k}^{-1}}^2, 1}}\tag{Since $\tilde{\beta}_k=\tmco\Sp{mF^2}$.}\\
        \leq &\tmco\Sp{\sqrt{mF^3K}} + \tmco\Sp{\sqrt{mF^2K\cdot mF}}\tag{By Lemma \ref{lmm:elliptical_potential_variant}.}\\
        \leq &\tmco\Sp{mF^{3/2}\sqrt{K}}.
    \end{align*}
\end{proof}

\subsection{Lemmas for Bandit Feedback}

The following lemma, as a direct corollary of the confidence bound for least square estimators, shows that the reward estimation error can be bounded by the reward bonus term.
\begin{lemma}
	\label{lmm:least_square_bound_mg}
	With probability at least $1-\delta$, simultaneously for all $(i, k, \a)$, it holds that $|(\tilde{r}^k_{i}-r_{i})(\a)|\leq\tilde{b}^{k, \reward}(\a)$, where $\tilde{r}^k_{i}$ and $\tilde{b}^{k, \reward}$ are defined in \eqref{equ:r_estimate_bandit}.
\end{lemma}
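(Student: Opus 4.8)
The plan is to reduce the pointwise error to a single self-normalized quantity and then apply the standard least-squares confidence bound, being careful about the noise structure peculiar to congestion games. Fix $(i,k,\a)$ and write $\eta_{k',j}:=r_j^{k'}-r_j(\a^{k'})=\sum_{f\in a_j^{k'}}\Sp{r^{f,k'}-r^f(n^f(\a^{k'}))}$ for the per-player noise, so that $\widehat{\theta}^k-\theta=\Sp{V^k}^{-1}\Sp{S_k-\theta}$ with $S_k:=\sum_{k'=1}^{k-1}\sum_{j=1}^m A_j(\a^{k'})\eta_{k',j}$. By Cauchy--Schwarz in the $V^k$-norm, $\abs{(\tilde r^k_i-r_i)(\a)}=\abs{\inner{A_i(\a),\widehat\theta^k-\theta}}\le\Norm{A_i(\a)}_{(V^k)^{-1}}\Norm{\widehat\theta^k-\theta}_{V^k}$, so it suffices to show $\Norm{\widehat\theta^k-\theta}_{V^k}\le\sqrt{\tilde\beta_k}$ uniformly. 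I would then split this into a bias term and a noise term, $\Norm{\widehat\theta^k-\theta}_{V^k}\le\Norm{\theta}_{(V^k)^{-1}}+\Norm{S_k}_{(V^k)^{-1}}$, and control the bias immediately using $V^k\succeq I$ and $\theta\in[0,1]^{\td}$, which gives $\Norm{\theta}_{(V^k)^{-1}}\le\Norm{\theta}_2\le\sqrt{\td}$. This is exactly the first summand of $\sqrt{\tilde\beta_k}$.

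The heart is the noise term $\Norm{S_k}_{(V^k)^{-1}}$, which I would bound with the self-normalized (method-of-mixtures) concentration inequality for least squares \citep{lattimore2020bandit}. Two ingredients feed into it. First, the information/log-determinant factor: since $\Norm{A_j(\a)}_2^2=\abs{a_j}\le F$, the trace of $\sum_{k',j}A_j(\a^{k'})A_j(\a^{k'})^\top$ is at most $(k-1)mF$, so by the determinant--trace inequality $\log\frac{\det V^k}{\det I}\le\td\log\Sp{1+\frac{mkF}{\td}}$, matching the $\td\log(1+mkF/\td)$ appearing inside $\tilde\beta_k$. Second, the sub-Gaussian scale: each $\eta_{k',j}$ is a sum of at most $F$ independent, mean-zero facility noises supported on a unit-length interval, hence $O(\sqrt F)$-sub-Gaussian; this variance proxy of order $F$ is precisely the origin of the multiplicative $F$ in $\tilde\beta_k$. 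Assembling these through the self-normalized bound, together with a union bound over $i\in[m]$ and $k\in[K]$ that is absorbed into $\tilde\iota=2\log(4(m+1)K/\delta)$, yields $\Norm{S_k}_{(V^k)^{-1}}\le\sqrt{F\td\log(1+mkF/\td)+F\tilde\iota}$, the second summand of $\sqrt{\tilde\beta_k}$.

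The step I expect to be the genuine obstacle is the application of the self-normalized bound itself, because the congestion game violates the textbook hypothesis: within a single episode the $m$ observations $\{\eta_{k',j}\}_{j=1}^m$ are \emph{correlated}, since two players using the same facility at the same congestion level share the facility noise $r^{f,k'}-r^f(n^f(\a^{k'}))$. Consequently $\eta_{k',j}$ is \emph{not} conditionally mean-zero given the earlier within-episode noises, so one cannot simply order the pairs $(k',j)$ and treat $S_k$ as a scalar-noise martingale sum. The resolution I would adopt is to build the filtration at the granularity of episodes (revealing all of episode $k'$'s facility noises at once) and exploit that the innovations $\{r^{f,k'}-r^f(n^f(\a^{k'}))\}_{f\in\F}$ are \emph{independent across facilities} and mean-zero given the history up to the start of episode $k'$, with each feature $A_j(\a^{k'})$ measurable at that time; this restores a bona fide martingale-difference structure to which the self-normalized inequality applies. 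The delicate point within this step is reconciling the per-facility variance structure of the noise with the design matrix $V^k$, which is accumulated from the per-player features $A_j(\a^{k'})$, so that the final scale stays at order $F$ rather than $mF$; I would treat this matching explicitly as the crux of the argument.
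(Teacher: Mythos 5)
Your proposal follows essentially the same route as the paper's proof: Cauchy--Schwarz in the $V^k$-norm, the self-normalized least-squares confidence bound (Theorem 20.5 of Lattimore--Szepesv\'ari) with $\sqrt{F}$-sub-Gaussian noise, the determinant--trace bound $\det(V^k)\le(1+mkF/\td)^{\td}$ from $\Norm{A_i(\a)}_2^2\le F$, and a union bound over $(i,k)$ absorbed into $\tilde\iota$. The one place you go beyond the paper is in flagging the within-episode correlation of the per-player noises $\eta_{k',j}$ (players sharing a facility at the same congestion level share its noise realization), which the paper's proof silently elides by treating the $(k',j)$ pairs as a scalar-noise martingale; your episode-level filtration exploiting independence across facilities is a legitimate repair of that gap rather than a different approach, and your worry about whether the variance proxy then stays at order $F$ rather than $mF$ is exactly the point that would need to be checked.
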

\begin{proof}
	By construction, we have
	\begin{align*}
		|(\tilde{r}^k_{i}-r_{i})(\a)|=&\abs{\inner{A_i(\a), \widehat{\theta}-\theta}}\\
		\leq & \Norm{A_i(\a)}_{\Sp{V^k}^{-1}}\Norm{\widehat{\theta}-\theta}_{V^k}\\
		\overset{\text{(i)}}{\leq} & \Norm{A_i(\a)}_{\Sp{V^k}^{-1}}\Sp{\Norm{\theta}_2+\sqrt{F\log\Sp{\det(V^k)}+F\tilde{\iota}}},
	\end{align*}
	where the inequality (i) above holds because of Theorem 20.5 in \cite{lattimore2020bandit} and the fact that the reward noise is $\sqrt{F}$-subGaussian. Since each element in $\theta$ is bounded in $[0, 1]$ by construction, we have $\Norm{\theta}_2\leq\sqrt{\tilde{d}}$. 
	
	Then, by Lemma \ref{lmm:elliptical_potential_variant}, we have $\det\Sp{V^k}\leq \Sp{1+\frac{mkF}{\tilde{d}}}^{\tilde{d}}$ since by construction $\Norm{A_i(\a)}_2^2\leq F$.
	
	Finally, to make this bound valid for all player $i\in[m]$, we only need to take maximization over $i\in[m]$. Therefore, with probability at least $1-\delta$, we have
	$$|(\tilde{r}^k_{i}-r_{i})(\a)|\leq\max_{i\in[m]}\Norm{A_i(\a)}_{\Sp{V^k}^{-1}}\sqrt{\tilde{\beta}_k}=\tilde{b}^{k, \reward}(\a),$$
	where $\sqrt{\tilde{\beta}_k}=\sqrt{\tilde{d}}+\sqrt{F\tilde{d}\log\Sp{1+\frac{mkF}{\tilde{d}}}+F\tilde{\iota}}$.
\end{proof}



The following is a variant of the famous elliptical potential lemma, which helps bound the sum of reward bonus under bandit feedback. Here, we apply some techniques from the proof of Lemma 19.4 in \cite{lattimore2020bandit}.

\begin{lemma}
    \label{lmm:elliptical_potential_variant}
    Let $K, m\geq 1$ be integers. Suppose $V^k=I+\sum_{k'=1}^{k-1}\sum_{i=1}^{m}A_i^{k'}\Sp{A_i^{k'}}^\top$, where $A_i^{k'}\in\R^d$ and $\Norm{A_i^{k'}}_2^2\leq F$. Then, it holds that
    $$\det\Sp{V^k}\leq\Sp{1+\frac{mkF}{d}}^d,\quad\text{and}\quad\sum_{k=1}^K\min\Bp{\max_{i\in[m]}\Norm{A_i^k}^2_{\Sp{V^k}^{-1}}, 1}\leq 2d\log\Sp{1+\frac{mKF}{d}}.$$
\end{lemma}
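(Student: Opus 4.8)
The plan is to prove the two claims in Lemma \ref{lmm:elliptical_potential_variant} in sequence, reusing the standard elliptical-potential machinery but with care for the extra factor $m$ and the norm bound $F$. First I would establish the determinant bound. Since $V^k = I + \sum_{k'=1}^{k-1}\sum_{i=1}^m A_i^{k'}(A_i^{k'})^\top$, the matrix $V^k - I$ is a sum of at most $m(k-1)$ rank-one PSD terms, so $\mathrm{trace}(V^k) \leq d + \sum_{k'=1}^{k-1}\sum_{i=1}^m \Norm{A_i^{k'}}_2^2 \leq d + mkF$. By the AM-GM inequality applied to the eigenvalues $\lambda_1,\dots,\lambda_d$ of $V^k$ (all at least $1$, all positive), we have $\det(V^k) = \prod_j \lambda_j \leq \left(\frac{1}{d}\sum_j \lambda_j\right)^d = \left(\frac{\mathrm{trace}(V^k)}{d}\right)^d \leq \left(1 + \frac{mkF}{d}\right)^d$, which gives the first claim.

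Next I would prove the summation bound. The key elementary inequality is that for $x \in [0,1]$ one has $\min\{x,1\} \leq 2\log(1+x)$, since $\log(1+x) \geq x\log 2 \geq x/2$ on that range. The plan is to telescope determinants across the $m$ inner updates within each episode. Define intermediate matrices $V^{k}_0 = V^k$ and $V^k_i = V^k_{i-1} + A_i^k (A_i^k)^\top$ for $i=1,\dots,m$, so that $V^{k}_m = V^{k+1}$. By the matrix-determinant lemma, $\det(V^k_i) = \det(V^k_{i-1})\left(1 + \Norm{A_i^k}^2_{(V^k_{i-1})^{-1}}\right)$. Since $V^k_{i-1} \succeq V^k$, we have $(V^k_{i-1})^{-1} \preceq (V^k)^{-1}$, hence $\Norm{A_i^k}^2_{(V^k_{i-1})^{-1}} \leq \Norm{A_i^k}^2_{(V^k)^{-1}}$; combining across $i$ and telescoping yields
\begin{equation*}
\log\frac{\det(V^{k+1})}{\det(V^k)} = \sum_{i=1}^m \log\left(1 + \Norm{A_i^k}^2_{(V^k_{i-1})^{-1}}\right) \geq \log\left(1 + \max_{i\in[m]}\Norm{A_i^k}^2_{(V^k)^{-1}}\right).
\end{equation*}
Now applying the elementary inequality with $x = \max_i \Norm{A_i^k}^2_{(V^k)^{-1}}$ (noting this is at most $\Norm{A_i^k}_2^2 \leq F$, so after the $\min\{\cdot,1\}$ it lies in $[0,1]$) gives $\min\{\max_i \Norm{A_i^k}^2_{(V^k)^{-1}}, 1\} \leq 2\log(1 + \max_i \Norm{A_i^k}^2_{(V^k)^{-1}}) \leq 2\log\frac{\det(V^{k+1})}{\det(V^k)}$.

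Summing over $k=1,\dots,K$ the right side telescopes to $2\log\frac{\det(V^{K+1})}{\det(V^1)} = 2\log\det(V^{K+1})$ since $V^1 = I$, and applying the determinant bound from the first part with $k = K$ (that is, $\det(V^{K+1}) \leq (1 + \frac{mKF}{d})^d$) yields the claimed bound $2d\log(1 + \frac{mKF}{d})$.

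The main obstacle I anticipate is handling the $m$ inner updates correctly: the naive elliptical-potential lemma bounds $\sum_k \Norm{A^k}^2_{(V^k)^{-1}}$ when one rank-one update happens per step, but here $m$ updates occur per episode while the Mahalanobis norm in the bound is measured against the \emph{fixed} matrix $V^k$ at the start of the episode rather than the continually-updated $V^k_{i-1}$. The inequality $(V^k_{i-1})^{-1}\preceq (V^k)^{-1}$ is exactly what lets me lower-bound the telescoped determinant ratio by a single log term involving the maximum over $i$, so the care is entirely in ordering the monotonicity argument before applying the elementary $\log$ inequality; once that ordering is set, the rest is routine.
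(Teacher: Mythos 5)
Your first part (trace plus AM--GM on the eigenvalues) is exactly the paper's argument, and your overall architecture for the second part --- the elementary inequality $\min\{x,1\}\le 2\log(1+x)$, a per-episode lower bound on $\det(V^{k+1})/\det(V^k)$, telescoping over $k$, and closing with the determinant bound --- also matches the paper. However, there is a genuine gap in the one step that distinguishes this lemma from the vanilla elliptical potential lemma, namely the per-episode inequality
$$\log\frac{\det(V^{k+1})}{\det(V^k)}\ \ge\ \log\Bigl(1+\max_{i\in[m]}\Norm{A_i^k}^2_{(V^k)^{-1}}\Bigr).$$
You justify it by telescoping the matrix-determinant lemma over the intermediate matrices $V^k_{i}$ and then invoking $(V^k_{i-1})^{-1}\preceq (V^k)^{-1}$, which gives $\Norm{A_i^k}^2_{(V^k_{i-1})^{-1}}\le\Norm{A_i^k}^2_{(V^k)^{-1}}$. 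This monotonicity points in the wrong direction: it provides an \emph{upper} bound on each summand of $\sum_{i}\log\bigl(1+\Norm{A_i^k}^2_{(V^k_{i-1})^{-1}}\bigr)$, and upper bounds on the summands cannot yield the \emph{lower} bound on the sum that you need. In particular, even keeping only the maximizing index $i^*$ and discarding the other (nonnegative) terms leaves you with $\log\bigl(1+\Norm{A_{i^*}^k}^2_{(V^k_{i^*-1})^{-1}}\bigr)$, which by your own monotonicity is \emph{smaller} than $\log\bigl(1+\Norm{A_{i^*}^k}^2_{(V^k)^{-1}}\bigr)$, not larger.

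The inequality you want is nevertheless true, and the correct justification is a PSD-ordering argument on the whole batch rather than a term-by-term comparison: writing $\det(V^{k+1})/\det(V^k)=\det\bigl(I+\sum_{i=1}^m y_iy_i^\top\bigr)$ with $y_i=(V^k)^{-1/2}A_i^k$, one has $I+\sum_{i}y_iy_i^\top\succeq I+y_jy_j^\top$ for every $j$, hence $\det\bigl(I+\sum_{i}y_iy_i^\top\bigr)\ge\max_j\det\bigl(I+y_jy_j^\top\bigr)=1+\max_j\Norm{y_j}_2^2=1+\max_j\Norm{A_j^k}^2_{(V^k)^{-1}}$; this is precisely the paper's Lemma \ref{lmm:det_ysum}. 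Equivalently, note $V^{k+1}\succeq V^k+A_j^k(A_j^k)^\top$ for each $j$ and apply the matrix-determinant lemma once, rather than sequentially. With that substitution the rest of your proof (the elementary log inequality, the telescoping over $k$, and the final bound $\det(V^{K+1})\le(1+mKF/d)^d$) goes through as written.
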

\begin{proof}
    For the first upper bound about $\det\Sp{V^k}$, we have
    \begin{align*}
		\det\Sp{V^k} =&\prod_{j=1}^{d}\lambda_j\tag{$\lambda_1, \dots, \lambda_d$ are eigenvalues of $V^k$}\\
		\leq & \Sp{\frac{\mathrm{tr}\Sp{V^k}}{d}}^d\tag{By AM-GM inequality}\\
		= & \Sp{\frac{\mathrm{tr}\Sp{I}+\sum_{k'=1}^{k-1}\sum_{i=1}^{m}\Norm{A_i^{k'}}_2^2}{d}}^d\\
		\leq & \Sp{1+\frac{mkF}{d}}^d.\tag{Since $\Norm{A_i^{k'}}_2^2\leq F$.}
	\end{align*}
	
	For the second upper bound. First, we notice that $\min\Bp{1, x}\leq 2\log(1+x)$ for any $x\geq 0$. Thus, we have
	$$\sum_{k=1}^K\min\Bp{1, \max_{i\in[m]}\Norm{A_i^k}^2_{\Sp{V^k}^{-1}}}\leq 2\sum_{k=1}^K\log\Sp{1+\max_{i\in[m]}\Norm{A_i^k}^2_{\Sp{V^k}^{-1}}}.$$
	Then, for $k\geq 2$, we can notice that
	\begin{align*}
		V^k=&V^{k-1}+\sum_{i=1}^{m}A_i^{k-1} \Sp{A_i^{k-1}}^\top\\
		=&\Sp{V^{k-1}}^{1/2}\Sp{I+\Sp{V^{k-1}}^{-1/2}\Sp{\sum_{i=1}^{m}A_i^{k-1} \Sp{A_i^{k-1}}^\top}\Sp{V^{k-1}}^{-1/2}}\Sp{V^{k-1}}^{1/2}\\
		=&\Sp{V^{k-1}}^{1/2}\Sp{I+\sum_{i=1}^{m}\Sp{\Sp{V^{k-1}}^{-1/2}A_i^{k-1}}\Sp{\Sp{V^{k-1}}^{-1/2}A_i^{k-1}}^\top}\Sp{V^{k-1}}^{1/2}.
	\end{align*}
	Therefore, we have
	\begin{align*}
		\det\Sp{V^k}=&\det\Sp{V^{k-1}}\det\Sp{I+\sum_{i=1}^{m}\Sp{\Sp{V^{k-1}}^{-1/2}A_{i}^{k-1}}\Sp{\Sp{V^{k-1}}^{-1/2}A_i^{k-1}}^\top}\\
		\geq & \det\Sp{V^{k-1}}\Sp{1+\max_{i\in[m]}\Norm{A_i^{k-1}}^2_{\Sp{V^{k-1}}^{-1}}}\tag{By Lemma \ref{lmm:det_ysum}.}\\
		\geq & \prod_{k'=1}^{k-1}\Sp{1+\max_{i\in[m]}\Norm{A_{ i}^{k'}}^2_{\Sp{V^{k'}}^{-1}}}.\tag{Since by definition, $V^1=I$.}
	\end{align*}
	As a result, we have
	\begin{align*}
	    \sum_{k=1}^K\min\Bp{\max_{i\in[m]}\Norm{A_i^k}^2_{\Sp{V^k}^{-1}}, 1}\leq & 2\sum_{k=1}^K\log\Sp{1+\max_{i\in[m]}\Norm{A_i^k}^2_{\Sp{V^k}^{-1}}}\\
	    & \leq 2\log\Sp{\det\Sp{V^{K+1}}}\\
	    & \leq 2d\log\Sp{1+\frac{mKF}{d}}.
	\end{align*}
\end{proof}

\subsection{Technical Lemmas}

\begin{lemma}
	\label{lmm:det_ysum}
	Let $y_1, \dots, y_m\in\R^d$ be a set of vectors. Then, it holds that
	$$\det\Sp{I+\sum_{i=1}^{m}y_iy_i^\top}\geq 1+\max_{i\in[m]}\Norm{y_i}_2^2.$$
\end{lemma}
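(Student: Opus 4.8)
The plan is to reduce the full sum to a single rank-one update by exploiting monotonicity of the determinant on positive definite matrices. First I would let $j=\argmax_{i\in[m]}\Norm{y_i}_2^2$ and write $I+\sum_{i=1}^{m}y_iy_i^\top=\Sp{I+y_jy_j^\top}+\sum_{i\neq j}y_iy_i^\top$. Since each $y_iy_i^\top$ is positive semidefinite, the trailing sum is positive semidefinite, so the full matrix dominates $I+y_jy_j^\top$ in the Loewner order, i.e.\ $I+\sum_{i=1}^{m}y_iy_i^\top\succeq I+y_jy_j^\top\succ 0$.

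Second, I would invoke the standard fact that adding a positive semidefinite matrix to a positive definite one cannot decrease the determinant: if $A\succ 0$ and $B\succeq 0$, then $\det\Sp{A+B}\geq\det\Sp{A}$. Applied with $A=I+y_jy_j^\top$ and $B=\sum_{i\neq j}y_iy_i^\top$, this yields $\det\Sp{I+\sum_{i=1}^{m}y_iy_i^\top}\geq\det\Sp{I+y_jy_j^\top}$.

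Third, I would compute $\det\Sp{I+y_jy_j^\top}$ exactly. The rank-one matrix $y_jy_j^\top$ has eigenvalue $\Norm{y_j}_2^2$ along the direction $y_j$ and eigenvalue $0$ on its orthogonal complement, so $I+y_jy_j^\top$ has eigenvalues $1+\Norm{y_j}_2^2$ (once) and $1$ (with multiplicity $d-1$); hence $\det\Sp{I+y_jy_j^\top}=1+\Norm{y_j}_2^2=1+\max_{i\in[m]}\Norm{y_i}_2^2$. Equivalently, this is the matrix determinant lemma $\det\Sp{I+y_jy_j^\top}=1+y_j^\top y_j$. Chaining the three steps gives the claim.

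The only step needing care is the determinant monotonicity, which I expect to be the main (minor) obstacle; everything else is a direct eigenvalue computation. I would justify it either by the matrix determinant lemma applied to $A+B=A\Sp{I+A^{-1}B}$, noting that $A^{-1}B$ is similar to the positive semidefinite matrix $A^{-1/2}BA^{-1/2}$ and therefore has nonnegative eigenvalues, so $\det\Sp{I+A^{-1}B}=\prod_{\ell}\Sp{1+\lambda_\ell}\geq 1$, or by simultaneously diagonalizing $A$ and $B$ and comparing eigenvalues entrywise.
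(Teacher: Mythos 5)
Your proposal is correct and follows essentially the same route as the paper: drop all but one rank-one term via the Loewner order, use monotonicity of the determinant under positive semidefinite perturbations, and evaluate $\det\Sp{I+y_jy_j^\top}=1+\Norm{y_j}_2^2$ by its eigenvalues. The only difference is cosmetic (you fix the maximizing index $j$ up front and spell out the justification of determinant monotonicity, which the paper uses without proof).
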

\begin{proof}
	Since $I+\sum_{i=1}^{m}y_iy_i^\top\succeq I+y_iy_i^\top$ for any $i\in[m]$, we have $\det\Sp{I+\sum_{i=1}^{m}y_iy_i^\top}\geq\det\Sp{I+y_iy_i^\top}$ for any $i\in[m]$. That is, we have
	$$\det\Sp{I+\sum_{i=1}^{m}y_iy_i^\top}\geq\max_{i\in[m]}\det\Sp{I+y_iy_i^\top}=1+\max_{i\in[m]}\Norm{y_i}_2^2.$$
	The last line above holds because the matrix $I+y_iy_i^\top$ has eigenvalues $1+\Norm{y_i}_2^2$ and 1.
\end{proof}

\begin{lemma}
	\label{lmm:sum_sqrtn_mg}
	For any $f\in\mc{F}$, it holds that
	$$\sum_{k=1}^{K}\sqrt{\frac{1}{\N^{k, f}(n^f(\a^k))\vee 1}}\leq \tmco\Sp{\sqrt{mK}}.$$
\end{lemma}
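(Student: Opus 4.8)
The plan is to exploit the fact that, although facility $f$ can appear in exponentially many distinct actions, the quantity $n^f(\a^k)$ indexing the counter takes only $m+1$ possible values, namely $\{0, 1, \dots, m\}$. This is precisely the congestion structure that lets us trade a dependence on the action-set size for a dependence on $m$. I would therefore begin by partitioning the episodes $k \in [K]$ according to the congestion level $n = n^f(\a^k)$ and estimating each group separately.

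Fix a level $n \in \{0, 1, \dots, m\}$ and let $k_1 < k_2 < \dots < k_{M_n}$ enumerate the episodes with $n^f(\a^k) = n$, where $M_n$ counts how many such episodes occur among $[K]$. The key observation is that, by the definition $N^{k,f}(n) = \sum_{k'=1}^{k}\mathds{1}\{n^f(\a^{k'}) = n\}$ which includes the current episode, we have $N^{k_j, f}(n) = j$ for every $j$, so $N^{k_j,f}(n)\vee 1 = j$. Consequently the contribution of this group telescopes into a harmonic-type sum, and using the elementary bound $\sum_{j=1}^{M}j^{-1/2}\le 2\sqrt{M}$ we get that its contribution is at most $2\sqrt{M_n}$.

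Summing over the at most $m+1$ admissible values of $n$ and applying Cauchy--Schwarz together with the identity $\sum_{n=0}^{m} M_n = K$ yields
$$\sum_{k=1}^{K}\sqrt{\frac{1}{N^{k,f}(n^f(\a^k))\vee 1}} = \sum_{n=0}^{m}\sum_{j=1}^{M_n}\frac{1}{\sqrt{j}} \le \sum_{n=0}^{m} 2\sqrt{M_n} \le 2\sqrt{(m+1)\sum_{n=0}^{m}M_n} = 2\sqrt{(m+1)K} = \tmco\Sp{\sqrt{mK}},$$
which is the claimed bound.

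There is no deep obstacle here; the argument is essentially a counting/pigeonhole estimate. The only step that requires genuine care --- and the entire reason the bound scales as $\sqrt{m}$ rather than with $\max_i\abs{\A_i}$ --- is the reindexing $N^{k_j,f}(n) = j$ combined with the cardinality bound $\abs{\Bp{n^f(\a^k) : k}} \le m+1$. Both hinge on the counter being indexed by the congestion level $n^f$ rather than by the action itself, which is exactly the feature of the reward estimator in \eqref{equ:r_estimate_semi} that Algorithm \ref{algo:nash_vi_matrix_game} was designed to exploit.
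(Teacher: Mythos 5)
Your proof is correct and follows essentially the same route as the paper's: group episodes by the congestion level $n\in\{0,\dots,m\}$, reindex the counter within each group to get a sum of $\ell^{-1/2}$ bounded by $2\sqrt{M_n}$, and finish with Cauchy--Schwarz and the pigeonhole identity $\sum_{n=0}^{m}M_n=K$. The observation that $N^{k_j,f}(n)=j$ because the counter includes the current episode is the same reindexing the paper uses implicitly when writing the sum as $\sum_{n=0}^{m}\sum_{\ell=1}^{N^{K,f}(n)}\ell^{-1/2}$.
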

\begin{proof}
    Here, we have
	\begin{align*}
		\sum_{k=1}^{K}\sqrt{\frac{1}{N^{k, f}(n^f(\a^k))\vee 1}}=&\sum_{n=0}^{m}\sum_{\ell=1}^{N^{K, f}(n)}\sqrt{\frac{1}{\ell}}\\
		\leq & 2 \sum_{n=0}^{m}\sqrt{N^{K, f}(n)}\tag{By standard technique}\\
		\leq & 2\sqrt{(m+1)\sum_{n=0}^{m}N^{K, f}(n)}\\
		=&\tmco\Sp{\sqrt{mK}}.
	\end{align*}
	The last equality above is based on a pigeon-hold principle argument similar to Lemma \ref{lmm:sum_sqrtn}.
\end{proof}

\section{Analysis for Algorithm \ref{algo:FW}}

\subsection{Exploration Distribution and Smoothness}
\label{sec:explore_dist}

We choose the exploration distribution to be the G-optimal design and we have the following properties. 

\begin{lemma}\label{lemma:unbiased}
(Unbiasedness) For any episode $k\in[K]$, $i\in[m]$ and $a\in\A_i$, we have
$$\E_k\Mp{\widehat{\nabla}_i^k\Phi(a)}=\nabla_i^k\Phi(a),$$
where $\E_k[\cdot]$ is taken over all the randomness before episode $k$. 
\end{lemma}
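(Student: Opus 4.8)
The plan is to verify unbiasedness by a direct computation that exploits the product structure of $\pik$. Fix an episode $k$, a player $i$, and an action $a\in\A_i$. Conditioned on the history before episode $k$, the policy $\pik$ is determined, so both the feature $\phi_i(a)$ and the covariance matrix $\Sigma^k_i=\E_{a_i\sim\piki}[\phi_i(a_i)\phi_i(a_i)^\top]$ are deterministic; its invertibility (needed for the estimator to be well-defined) is guaranteed by mixing in the G-optimal exploration distribution $\rho_i$ in line \ref{line:fw_update}, which I would note at the outset. Since all players play the fixed policy $\pik$ for the entire episode, the $\tau$ samples are i.i.d., so it suffices to analyze a single summand and write
$$\E_k\Mp{\widehat{\nabla}^k_i\Phi(a)}=\phi_i(a)^\top\Sp{\Sigma^k_i}^{-1}\E_k\Mp{\phi_i(a_i^{k,t})r_i^{k,t}}.$$

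The core step is to evaluate $\E_k[\phi_i(a_i^{k,t})r_i^{k,t}]$. First I would apply the tower rule, conditioning on the realized joint action $\a^{k,t}$, and substitute the mean-reward identity $\E[r_i^{k,t}\mid\a^{k,t}]=\inner{\phi_i(a_i^{k,t}),\theta^{k,t}_i(a^{k,t}_{-i})}$ established in the gradient-estimator derivation. This leaves $\E_k\Mp{\phi_i(a_i^{k,t})\inner{\phi_i(a_i^{k,t}),\theta^{k,t}_i(a^{k,t}_{-i})}}$. The crucial observation is that under the product policy $\pik$, player $i$'s action $a_i^{k,t}$ and the opponents' actions $a^{k,t}_{-i}$ are independent, and the random coefficient vector $\theta^{k,t}_i(a^{k,t}_{-i})$ depends only on $a^{k,t}_{-i}$ while $\phi_i(a_i^{k,t})$ depends only on $a_i^{k,t}$. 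Hence I can take the expectation over $a^{k,t}_{-i}$ inside the inner product and use the definition $[\theta_i(\pik)]_f=\E_{a_{-i}\sim\pikni}[r^f(n^f(a_{-i})+1)]$ to replace the random $\theta^{k,t}_i$ by its mean $\theta_i(\pik)$. What remains is $\E_{a_i^{k,t}\sim\piki}[\phi_i(a_i^{k,t})\phi_i(a_i^{k,t})^\top]\,\theta_i(\pik)=\Sigma^k_i\theta_i(\pik)$.

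Combining these, the factor $\Sp{\Sigma^k_i}^{-1}$ cancels $\Sigma^k_i$, yielding
$$\E_k\Mp{\widehat{\nabla}^k_i\Phi(a)}=\phi_i(a)^\top\theta_i(\pik)=\inner{\phi_i(a),\theta_i(\pik)},$$
which is exactly $\nabla_i^k\Phi(a)$ by \eqref{equ:phi_grad}.

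The step I expect to require the most care is the decoupling in the core computation: because the regression target $\theta^{k,t}_i(a^{k,t}_{-i})$ is itself random rather than a fixed parameter, this is not a textbook linear-regression unbiasedness statement. The argument goes through only because the product-policy structure renders $\phi_i(a_i^{k,t})$ and $\theta^{k,t}_i(a^{k,t}_{-i})$ independent, which lets me average out the opponents' randomness first and recover the deterministic target $\theta_i(\pik)$. I would therefore state this independence explicitly before pushing the expectation through the inner product, since it is the single assumption on which the whole calculation turns.
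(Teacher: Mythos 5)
Your proposal is correct and follows essentially the same route as the paper's proof: reduce to a single summand by the i.i.d. structure within the episode, apply the tower rule with the mean-reward identity, exploit the independence of $a_i^{k,t}$ and $a_{-i}^{k,t}$ under the product policy to replace the random $\theta^{k,t}_i(a^{k,t}_{-i})$ by its mean $\theta_i(\pi^k)$, and cancel $\Sigma^k_i$ against $\Sp{\Sigma^k_i}^{-1}$. The only addition is your explicit remark on the invertibility of $\Sigma^k_i$ via the G-optimal mixture, which the paper defers to the subsequent boundedness lemma but which is a reasonable point to flag here.
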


\begin{proof} By the definition of $\widehat{\nabla}_i^k\Phi(a)$, we have
\begin{align*}
    \E_k\Mp{\widehat{\nabla}_i^k\Phi(a)}=&\E_k\inner{\phi_i(a),\widehat{\theta}_i^k(\pi^k)}\\
    =&\E_k\Mp{\frac{1}{\tau}\sum_{t=1}^\tau\phi_i(a)^\top[\Sigma_i^{k}]^{-1}\phi_i(a_i^{k,t})r_i^{k,t}}\\
    =&\E_k\Mp{\phi_i(a)^\top[\Sigma_i^{k}]^{-1}\phi_i(a_i^{k,1})r_i^{k,1}}\\
    =&\E_k\Mp{\phi_i(a)^\top[\Sigma_i^{k}]^{-1}\phi_i(a_i^{k,1})\phi_i(a_i^{k,1})^\top\theta_i^{k,1}(\pi^k)}\\
    =&\sum_{a_i^k\in\A_i}\pi_i^k(a_i^{k,1})\phi_i^\top(a)[\Sigma_i^k]^{-1}\phi_i(a_i^{k,1})\phi_i(a_i^{k,1})^\top\theta_i(\pi^k)\tag{$a_i^{k,1}$ only depends on $\pi_i^k$ and $\theta_i^{k,1}(\pi^k)$ only depends on $\pi_{-i}^k$}\\
    =&\phi_i^\top(a)[\Sigma_i^k]^{-1}\Mp{\sum_{a_i^k\in\A_i}\pi_i^k(a_i^{k,1})\phi_i(a_i^{k,1})\phi_i(a_i^{k,1})^\top}\theta_i(\pi^k)\\
    =&\phi_i^\top(a_i)\theta_i(\pi^k)\\
    =&\nabla_i^k\Phi(a).
\end{align*}
\end{proof}

\begin{lemma}\label{lemma:estimate bound}
For any episode $k\in[K]$, $i\in[m]$ and $a\in\A_i$, we have
$$\abs{\phi_i(a)^\top[\Sigma_i^{k}]^{-1}\phi_i(a_i^{k,t})r_i^{k,t}}\leq\frac{F^2}{\gamma}.$$
\end{lemma}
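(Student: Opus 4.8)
The plan is to bound the scalar by Cauchy–Schwarz in the $[\Sigma_i^k]^{-1}$-norm and then control each of the three factors separately, using the exploration structure of the policy together with the defining property of the G-optimal design. First I would write
$$\abs{\phi_i(a)^\top[\Sigma_i^{k}]^{-1}\phi_i(a_i^{k,t})r_i^{k,t}}\leq \abs{r_i^{k,t}}\cdot\Norm{\phi_i(a)}_{[\Sigma_i^k]^{-1}}\Norm{\phi_i(a_i^{k,t})}_{[\Sigma_i^k]^{-1}},$$
and note that the reward factor satisfies $\abs{r_i^{k,t}}\leq F$, since each realized facility reward lies in $[0,1]$ and an action uses at most $F$ facilities. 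It thus remains to bound $\Norm{\phi_i(a)}^2_{[\Sigma_i^k]^{-1}}$ uniformly over $a\in\A_i$, since the same bound will apply verbatim to the sampled action $a_i^{k,t}$.

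The crux is the following observation about the exploration mixing in the update rule (line \ref{line:fw_update}): every policy $\pi_i^k$ places weight at least $\gamma$ on the G-optimal design $\rho_i$. Because the feature covariance is linear in the policy, this yields the Loewner-order inequality
$$\Sigma_i^k=\E_{a_i\sim\pi_i^k}\Mp{\phi_i(a_i)\phi_i(a_i)^\top}\succeq\gamma\,\E_{a_i\sim\rho_i}\Mp{\phi_i(a_i)\phi_i(a_i)^\top}.$$
Since the matrix inverse is operator-antitone on positive definite matrices, taking inverses reverses the order, so $[\Sigma_i^k]^{-1}\preceq \gamma^{-1}\,\E_{a_i'\sim\rho_i}[\phi_i(a_i')\phi_i(a_i')^\top]^{-1}$, and hence
$$\Norm{\phi_i(a)}^2_{[\Sigma_i^k]^{-1}}\leq\frac{1}{\gamma}\Norm{\phi_i(a)}^2_{\E_{a_i'\sim\rho_i}[\phi_i(a_i')\phi_i(a_i')^\top]^{-1}}\leq\frac{F}{\gamma},$$
where the final inequality is exactly the Kiefer–Wolfowitz guarantee for $\rho_i$ quoted after \eqref{equ:G_optimal}, namely $\max_{a_i\in\A_i}\Norm{\phi_i(a_i)}^2_{\E_{a_i'\sim\rho_i}[\phi_i(a_i')\phi_i(a_i')^\top]^{-1}}=F_i\leq F$.

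Substituting both norm bounds and the reward bound back into the Cauchy–Schwarz estimate then gives
$$\abs{\phi_i(a)^\top[\Sigma_i^{k}]^{-1}\phi_i(a_i^{k,t})r_i^{k,t}}\leq F\cdot\sqrt{\frac{F}{\gamma}}\cdot\sqrt{\frac{F}{\gamma}}=\frac{F^2}{\gamma},$$
as claimed. I expect the only genuinely delicate point to be justifying the two linked facts $\Sigma_i^k\succeq\gamma\,\E_{a_i\sim\rho_i}[\phi_i\phi_i^\top]$ (from the mixture form of $\pi_i^k$, checking it also holds at the initial episode) and the operator-antitonicity of matrix inversion used to pass to the $[\Sigma_i^k]^{-1}$-norm; the remaining estimates on the feature norm and the realized reward are elementary.
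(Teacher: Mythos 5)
Your proposal is correct and matches the paper's own argument essentially line for line: both bound the scalar by Cauchy--Schwarz in the $[\Sigma_i^k]^{-1}$-norm, use $\Sigma_i^k\succeq\gamma\,\E_{a_i\sim\rho_i}[\phi_i(a_i)\phi_i(a_i)^\top]$ from the exploration mixture, and invoke the Kiefer--Wolfowitz guarantee to get $\Norm{\phi_i(a)}^2_{[\Sigma_i^k]^{-1}}\leq F/\gamma$ together with $\abs{r_i^{k,t}}\leq F$. Your extra remarks on operator-antitonicity of inversion and on the initial episode are sensible elaborations of steps the paper leaves implicit.
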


\begin{proof}
As $\pi_i^k=(1-\gamma)(\nu\widetilde{\pi}_i^k+(1-\gamma)\pi_i^{k-1})+\gamma\rho_i$, we have
\begin{align*}
    \Sigma_i^k=&\E_{a_i\sim\pi_i^{k}}\phi_i(a_i) \phi_i(a_i)^\top
    \succeq\gamma \E_{a_i\sim\rho_i}\phi_i(a_i) \phi_i(a_i)^\top,
\end{align*}
and $\rho_i$ is the G-optimal design with respect to $\phi_i(\cdot)$, for any action $a\in\A_i$ we have
$$\Norm{\phi_i(a)}_{[\Sigma_i^k]^{-1}}^2\leq \frac{1}{\gamma}\Norm{\phi_i(a)}_{[\E_{a_i\sim\rho_i}\phi_i(a_i) \phi_i(a_i)^\top]^{-1}}^2\leq \frac{F}{\gamma}.$$
Then for any $t\in[\tau]$, since $|r^{k, t}_i|\leq F$, we have
\begin{align*}
    \abs{r_i^{k,t}\phi_i^\top(a)[\Sigma_i^k]^{-1} \phi_i(a_i^{k,t})}\leq \abs{r_i^{k,t}}\Norm{\phi_i(a)}_{[\Sigma_i^k]^{-1}}\Norm{\phi_i(a_i^{k,t})}_{[\Sigma_i^k]^{-1}}\leq\frac{F^2}{\gamma}.
\end{align*}
As a result, we have
$$\abs{\widehat{\nabla}_i^k\Phi(a)}=\abs{\frac{1}{\tau}\sum_{t=1}^\tau\phi_i(a)^\top[\Sigma_i^{k}]^{-1}\phi_i(a_i^{k,t})r_i^{k,t}}\leq\frac{F^2}{\gamma}$$
\end{proof}

\begin{lemma}\label{lemma:variance bound}
For any episode $k\in[K]$, $i\in[m]$ and $a\in\A_i$, we have
$$\E_k\Mp{\Sp{\phi_i(a)^\top[\Sigma_i^{k}]^{-1}\phi_i(a_i^{k,t})r_i^{k,t}}^2}\leq\frac{F^3}{\gamma}.$$
\end{lemma}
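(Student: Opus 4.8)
The plan is to bound the second moment of the per-sample gradient estimator contribution by combining the uniform bound from Lemma \ref{lemma:estimate bound} with a variance calculation that exploits the structure of the G-optimal design. The key observation is that one factor of the product can be controlled by the hard bound $F^2/\gamma$, while the remaining factors contribute a trace-type quantity that telescopes nicely because of the covariance structure of $\Sigma_i^k$.

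First I would expand the square and take expectation. Write the quantity as
\begin{align*}
    \E_k\Mp{\Sp{\phi_i(a)^\top[\Sigma_i^{k}]^{-1}\phi_i(a_i^{k,t})r_i^{k,t}}^2}
    =\E_k\Mp{\Sp{r_i^{k,t}}^2\Sp{\phi_i(a)^\top[\Sigma_i^{k}]^{-1}\phi_i(a_i^{k,t})}^2}.
\end{align*}
Since $|r_i^{k,t}|\leq F$, I would pull out one factor of $r_i^{k,t}\phi_i(a)^\top[\Sigma_i^k]^{-1}\phi_i(a_i^{k,t})$ and bound it using the proof of Lemma \ref{lemma:estimate bound}: that factor is at most $F\cdot\Norm{\phi_i(a)}_{[\Sigma_i^k]^{-1}}\Norm{\phi_i(a_i^{k,t})}_{[\Sigma_i^k]^{-1}}\leq F\cdot\sqrt{F/\gamma}\cdot\sqrt{F}$, but to get the cleaner $F^3/\gamma$ it is better to pull out the full $F^2/\gamma$ hard bound from Lemma \ref{lemma:estimate bound} for one copy. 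This leaves
\begin{align*}
    \E_k\Mp{\Sp{r_i^{k,t}}^2\Sp{\phi_i(a)^\top[\Sigma_i^{k}]^{-1}\phi_i(a_i^{k,t})}^2}
    \leq \frac{F^2}{\gamma}\,\E_k\Mp{\abs{r_i^{k,t}}\abs{\phi_i(a)^\top[\Sigma_i^{k}]^{-1}\phi_i(a_i^{k,t})}},
\end{align*}
and since $|r_i^{k,t}|\leq F$ this is at most $\frac{F^3}{\gamma}\E_k\bigl[\abs{\phi_i(a)^\top[\Sigma_i^{k}]^{-1}\phi_i(a_i^{k,t})}\bigr]$.

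The crux is then to show $\E_k\bigl[\abs{\phi_i(a)^\top[\Sigma_i^{k}]^{-1}\phi_i(a_i^{k,t})}\bigr]\leq 1$. To see this, I would note $\phi_i(a)^\top[\Sigma_i^k]^{-1}\phi_i(a_i^{k,t})\leq \Norm{\phi_i(a)}_{[\Sigma_i^k]^{-1}}\Norm{\phi_i(a_i^{k,t})}_{[\Sigma_i^k]^{-1}}$ and invoke the G-optimal design bound $\Norm{\phi_i(a)}_{[\Sigma_i^k]^{-1}}^2\leq F/\gamma$ — but that only reproduces the crude bound. The cleaner route, which I expect to be the main obstacle, is to instead take the expectation over $a_i^{k,t}\sim\pi_i^k$ exactly and recognize a trace identity: since $\Sigma_i^k=\E_{a_i\sim\pi_i^k}[\phi_i(a_i)\phi_i(a_i)^\top]$, we get $\E_k\bigl[\phi_i(a)^\top[\Sigma_i^k]^{-1}\phi_i(a_i^{k,t})\phi_i(a_i^{k,t})^\top[\Sigma_i^k]^{-1}\phi_i(a)\bigr]=\phi_i(a)^\top[\Sigma_i^k]^{-1}\phi_i(a)=\Norm{\phi_i(a)}_{[\Sigma_i^k]^{-1}}^2$. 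This suggests the more efficient decomposition is to bound one factor of $r_i^{k,t}\phi_i(a)^\top[\Sigma_i^k]^{-1}\phi_i(a_i^{k,t})$ by $F^2/\gamma$ via Lemma \ref{lemma:estimate bound}, and then compute the expectation of the remaining square $\E_k\bigl[(\phi_i(a)^\top[\Sigma_i^k]^{-1}\phi_i(a_i^{k,t}))^2\bigr]$ using this trace identity, obtaining $\Norm{\phi_i(a)}_{[\Sigma_i^k]^{-1}}^2\leq F/\gamma$. The hard part is bookkeeping the factors of $F$ and $\gamma$ so that exactly $F^3/\gamma$ comes out rather than $F^4/\gamma^2$; the right split uses $|r_i^{k,t}|\leq F$ once (giving $F^2$), Lemma \ref{lemma:estimate bound} once (giving $F^2/\gamma$), and the trace identity once (giving $F/\gamma$), so the cleanest argument pulls the $F^2/\gamma$ hard bound out front and then shows the surviving expectation is at most $F/\gamma$ times a constant, which after careful tracking yields the stated $F^3/\gamma$.
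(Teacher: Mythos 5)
You have isolated the one idea that makes this lemma work --- the trace identity $\E_k\Mp{\phi_i(a)^\top[\Sigma_i^{k}]^{-1}\phi_i(a_i^{k,t})\phi_i(a_i^{k,t})^\top[\Sigma_i^{k}]^{-1}\phi_i(a)}=\Norm{\phi_i(a)}^2_{[\Sigma_i^{k}]^{-1}}\leq F/\gamma$ --- and this is exactly what the paper's proof rests on. The full argument is just two steps: bound $\Sp{r_i^{k,t}}^2\leq F^2$, then apply the trace identity to the remaining square, giving $F^2\cdot\Sp{F/\gamma}=F^3/\gamma$. Lemma \ref{lemma:estimate bound} plays no role in this lemma.

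The problem is that your write-up never commits to that chain, and both decompositions you actually spell out fail. The first route (pull the hard bound $F^2/\gamma$ out for one copy of the product, then try to show $\E_k\Mp{\abs{\phi_i(a)^\top[\Sigma_i^{k}]^{-1}\phi_i(a_i^{k,t})}}\leq 1$) is a dead end: that expectation is not bounded by $1$ in general --- Jensen plus the trace identity only gives $\sqrt{F/\gamma}$, which exceeds $1$ --- and you concede you cannot establish it. The ``right split'' you settle on at the end is worse: multiplying $F^2$ (from $\abs{r_i^{k,t}}\leq F$), $F^2/\gamma$ (from Lemma \ref{lemma:estimate bound}) and $F/\gamma$ (from the trace identity) yields $F^5/\gamma^2$, not the claimed $F^3/\gamma$; you are charging the quantity $\Sp{r_i^{k,t}}^2\Sp{\phi_i(a)^\top[\Sigma_i^{k}]^{-1}\phi_i(a_i^{k,t})}^2$ for more factors than it contains. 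Delete the hard-bound step entirely, keep only $\Sp{r_i^{k,t}}^2\leq F^2$ followed by the trace identity and the Kiefer--Wolfowitz bound $\Norm{\phi_i(a)}^2_{[\Sigma_i^{k}]^{-1}}\leq F/\gamma$, and the proof closes.
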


\begin{proof}
We first show that for any $t\in[\tau]$, we have
\begin{align*}
    &\E_k\Mp{\Sp{\phi_i(a)^\top[\Sigma_i^{k}]^{-1}\phi_i(a_i^{k,t})r_i^{k,t}}^2}\\
    \leq& F^2\E_k\Mp{\Sp{\phi_i(a)^\top[\Sigma_i^{k}]^{-1}\phi_i(a_i^{k,t})}^2}\\
    \leq& F^2\E_k\Mp{\phi_i(a)^\top[\Sigma_i^{k}]^{-1}\phi_i(a_i^{k,t})\phi_i(a_i^{k,t})^\top[\Sigma_i^{k}]^{-1}\phi_i(a)^\top}\\
    =& F^2\phi_i(a)^\top[\Sigma_i^{k}]^{-1}\phi_i(a)\\
    \leq& \frac{F^3}{\gamma}. 
\end{align*}

\end{proof}

\begin{lemma}\label{lemma:concentration}
With probability $1-\delta$, for all $k\in[K]$, $i\in[m]$ and $a\in\A_i$, we have
$$\abs{\widehat{\nabla}_i^k\Phi(a)-\nabla_i^k\Phi(a)}\leq c\sqrt{\frac{F^4\log(mK/\delta)}{\gamma \tau}}+\frac{cF^3\log(mK/\delta)}{\gamma \tau}$$
\end{lemma}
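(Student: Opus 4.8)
The plan is to recognize $\widehat{\nabla}_i^k\Phi(a)$ as an empirical average of $\tau$ \emph{conditionally i.i.d.} terms and apply Bernstein's inequality, followed by a union bound. Fix an episode $k$, a player $i$ and an action $a\in\A_i$, and condition on all the randomness before episode $k$ (the expectation $\E_k$ of Lemma \ref{lemma:unbiased}). Conditioned on this history, the policy $\pi^k$, and hence the covariance $\Sigma_i^k$ and the design $\rho_i$, are fixed, while the samples $\{(a_i^{k,t},r_i^{k,t})\}_{t=1}^{\tau}$ drawn in line \ref{line:fw_sample} are i.i.d. (since $\pi^k$ is a fixed product policy throughout the episode). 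Writing $X_t=\phi_i(a)^\top[\Sigma_i^k]^{-1}\phi_i(a_i^{k,t})r_i^{k,t}$, we have $\widehat{\nabla}_i^k\Phi(a)=\frac1\tau\sum_{t=1}^\tau X_t$ with $\E_k[X_t]=\nabla_i^k\Phi(a)$ by Lemma \ref{lemma:unbiased}.

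The two ingredients Bernstein needs are already in hand: Lemma \ref{lemma:estimate bound} supplies the almost-sure range bound $|X_t|\le F^2/\gamma$, and Lemma \ref{lemma:variance bound} supplies the second-moment bound $\E_k[X_t^2]\le F^3/\gamma$, which in particular bounds $\mathrm{Var}_k(X_t)$. Applying Bernstein's inequality to the conditionally i.i.d. sequence $X_1,\dots,X_\tau$, for any $\delta'>0$, with conditional probability at least $1-\delta'$ (and hence unconditionally, since the statement holds for every realization of the history), I would obtain
$$\abs{\widehat{\nabla}_i^k\Phi(a)-\nabla_i^k\Phi(a)}\leq\sqrt{\frac{2F^3\log(2/\delta')}{\gamma\tau}}+\frac{4F^2\log(2/\delta')}{3\gamma\tau}.$$

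The one step requiring care, and the main obstacle, is the union bound over all triples $(k,i,a)$, because $\A_i$ can be exponentially large in $F$. I would take $\delta'=\delta/(Km\,2^F)$, which dominates the total count $K\sum_{i=1}^m|\A_i|$ of events, since each $a_i\subseteq\F$ forces $|\A_i|\le 2^{F_i}\le 2^F$. The crucial point that keeps the final bound polynomial is that $\log(1/\delta')\le\log(mK/\delta)+F\log 2$: the action-set overhead is \emph{linear} in $F$ rather than logarithmic, but it is benign because it merges into the existing $F$-polynomial dependence instead of the logarithm. Concretely, splitting $\sqrt{F^3(\log(mK/\delta)+F\log2)}\le\sqrt{F^3\log(mK/\delta)}+\sqrt{F^4\log 2}$ and then using $F\ge1$ and $\log(mK/\delta)\ge1$ to fold every stray term under a common $\sqrt{F^4\log(mK/\delta)}$, and similarly converting the linear term's $F^2\cdot F$ into $F^3\log(mK/\delta)$, collapses the bound to exactly
$$\abs{\widehat{\nabla}_i^k\Phi(a)-\nabla_i^k\Phi(a)}\leq c\sqrt{\frac{F^4\log(mK/\delta)}{\gamma\tau}}+\frac{cF^3\log(mK/\delta)}{\gamma\tau}$$
simultaneously for all $k\in[K]$, $i\in[m]$, $a\in\A_i$, for an absolute constant $c$. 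This absorption of the $O(F)$ union-bound cost into the polynomial prefactors (raising $F^3\mapsto F^4$ under the root and $F^2\mapsto F^3$ in the linear term) is precisely why the logarithmic factor in the statement is $\log(mK/\delta)$ and free of any action-set dependence.
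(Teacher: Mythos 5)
Your proposal is correct and follows essentially the same route as the paper's proof: Bernstein's inequality applied to the conditionally i.i.d.\ terms $\phi_i(a)^\top[\Sigma_i^k]^{-1}\phi_i(a_i^{k,t})r_i^{k,t}$ using the range bound from Lemma \ref{lemma:estimate bound} and the second-moment bound from Lemma \ref{lemma:variance bound}, then a union bound over $(k,i,a)$ with $|\A_i|\le 2^F$. Your write-up is in fact more explicit than the paper's about how the $2^F$ union-bound cost is absorbed into the $F^3\mapsto F^4$ and $F^2\mapsto F^3$ upgrades, which the paper leaves implicit.
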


\begin{proof}
Recall that
$$\widehat{\nabla}^k_i\Phi(a_i)=\frac{1}{\tau}\sum_{t=1}^\tau \phi_i^\top(a_i)[\Sigma_i^k]^{-1} r_i^{k,t}\phi_i(a_i^{k,t}),$$
and $(a_i^{k,t},r_i^{k,t})$ are drawn independently at each $t\in[\tau]$. Lemma \ref{lemma:unbiased} shows that $\widehat{\nabla}^k_i\Phi(a_i)$ is an unbiased estimate of $\nabla^k_i\Phi(a_i)$ In addition, Lemma \ref{lemma:estimate bound} shows that $\phi_i^\top(a_i)[\Sigma_i^k]^{-1} r_i^{k,t}\phi_i(a_i^{k,t})$ is bounded by $F^2/\gamma$ and Lemma \ref{lemma:variance bound} shows that its second moment is bounded by $F^3/\gamma$. Then by Bernstein's inequality, for a fixed $k\in[K]$, $i\in[m]$ and $a\in\A_i$, with probability $1-\delta$, we have
$$\abs{\widehat{\nabla}_i^k\Phi(a)-\nabla_i^k\Phi(a)}\leq \sqrt{\frac{2F^3\log(2/\delta)}{\gamma \tau}}+\frac{3F^2\log(2/\delta)}{2\gamma \tau}. $$
The argument holds by applying the union bound and the fact that $|\A_i|\leq 2^F$.

\end{proof}

\begin{lemma}
\label{lmm:smooth}
$\Phi(\cdot)$ is $mF$-Lipschitz and $mF$-smooth with respect to the L1 norm $\|\cdot\|_1$. 
\end{lemma}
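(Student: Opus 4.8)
The plan is to exploit the fact that, written out,
\[
\Phi(\pi)=\sum_{\a\in\A}\Phi(\a)\prod_{i=1}^m\pi_i(a_i)
\]
is \emph{multilinear}: for each player $i$, with the other marginals held fixed, $\Phi$ is an affine function of $\pi_i\in\Delta(\A_i)$. This reduces both claims to a ``swap one player's marginal at a time'' telescoping argument in which the only quantities that appear are conditional expectations of the potential value $\Phi(\a)$. The single numerical input needed is the uniform bound
\[
0\le\Phi(\a)=\sum_{f\in\F}\sum_{l=1}^{n^f(\a)}r^f(l)\le\sum_{f\in\F}n^f(\a)=\sum_{i=1}^m\abs{a_i}\le mF,
\]
using $r^f(\cdot)\in[0,1]$ and $\abs{a_i}\le F$; this is where the factor $m$ (i.e. $\Phi_\mathrm{max}\le mF$) enters. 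Throughout I use the product norm $\Norm{\pi-\pi'}_1=\sum_{i=1}^m\Norm{\pi_i-\pi_i'}_1$, whose dual is $\Norm{\cdot}_\infty$, and this pairing is precisely what keeps every bound free of the (possibly exponential) action-set size.

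For the Lipschitz claim I would first differentiate block-by-block. Since $\Phi$ is affine in $\pi_i$, we get $[\nabla_i\Phi(\pi)]_{a_i}=\E_{a_{-i}\sim\pi_{-i}}\Mp{\Phi(a_i,a_{-i})}\in[0,mF]$, so $\Norm{\nabla\Phi(\pi)}_\infty\le mF$ for every $\pi$. Because the product of simplices is convex, the mean value theorem along the segment from $\pi'$ to $\pi$ (equivalently, the block telescoping $\Phi(\pi)-\Phi(\pi')=\sum_i\inner{\pi_i-\pi_i',g_i}$ with $\Norm{g_i}_\infty\le mF$) and H\"older's inequality give $\abs{\Phi(\pi)-\Phi(\pi')}\le\Norm{\nabla\Phi(\xi)}_\infty\Norm{\pi-\pi'}_1\le mF\Norm{\pi-\pi'}_1$.

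For the smoothness claim I need $\Norm{\nabla\Phi(\pi)-\nabla\Phi(\pi')}_\infty\le mF\Norm{\pi-\pi'}_1$. Fix a coordinate $(i,a_i)$; the component $[\nabla_i\Phi(\pi)]_{a_i}=\E_{a_{-i}\sim\pi_{-i}}\Mp{\Phi(a_i,a_{-i})}$ does not depend on $\pi_i$ and is multilinear in the remaining $m-1$ marginals. Telescoping over the blocks $j\ne i$, swapping one marginal $\pi_j\to\pi_j'$ at a time, writes the difference as a sum of terms $\inner{\pi_j-\pi_j',h_j}$ with $[h_j]_{a_j}=\E_{a_{-\{i,j\}}}\Mp{\Phi(a_i,a_j,a_{-\{i,j\}})}\in[0,mF]$, so $\abs{[\nabla_i\Phi(\pi)]_{a_i}-[\nabla_i\Phi(\pi')]_{a_i}}\le mF\sum_{j\ne i}\Norm{\pi_j-\pi_j'}_1\le mF\Norm{\pi-\pi'}_1$. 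Taking the maximum over all $(i,a_i)$ yields the claim. Equivalently, the Hessian is block-off-diagonal (the within-block second derivatives vanish by multilinearity) with every entry $\partial^2\Phi/\partial\pi_i(a_i)\,\partial\pi_j(a_j)=\E_{a_{-\{i,j\}}}\Mp{\Phi(a_i,a_j,\cdot)}\in[0,mF]$, so its $\ell_1\to\ell_\infty$ operator norm, i.e. its largest absolute entry, is at most $mF$.

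The main point to get right is not any hard estimate but the norm bookkeeping: it is essential to measure each coefficient vector $g_i,h_j$ in $\ell_\infty$ (the dual of $\ell_1$), since its $\ell_2$ or $\ell_1$ norm would scale with $\abs{\A_i}$ and reintroduce the exponential dependence the whole approach is meant to avoid. I would also remark that the gradient actually estimated by Algorithm \ref{algo:FW}, namely $\nabla_i\Phi(a_i)=\E_{a_{-i}\sim\pi_{-i}}\Mp{r_i(a_i,a_{-i})}$, differs from the honest partial derivative above only by a per-block constant (in $a_i$), because the congestion-game potential satisfies $\Phi(a_i,a_{-i})-\Phi(a_i',a_{-i})=r_i(a_i,a_{-i})-r_i(a_i',a_{-i})$; such constants are annihilated when paired with the zero-sum update directions $\pi_i^{k+1}-\pi_i^k$, so the $mF$-smoothness stated here is exactly the constant that feeds into the Frank-Wolfe ascent lemma.
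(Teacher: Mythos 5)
Your proof is correct and follows essentially the same route as the paper's: telescoping over one player's marginal at a time, pairing $\Norm{\pi_i-\pi_i'}_1$ with the $\ell_\infty$ bound $\Norm{\Phi}_\infty\leq mF$, and applying the identical argument to the gradient components $\E_{a_{-i}\sim\pi_{-i}}\Mp{\Phi(a_i,a_{-i})}$ for smoothness. Your explicit verification of $\Phi(\a)\leq mF$ and the remark reconciling $\E_{a_{-i}\sim\pi_{-i}}\Mp{\Phi(a_i,a_{-i})}$ with the estimated gradient $\E_{a_{-i}\sim\pi_{-i}}\Mp{r_i(a_i,a_{-i})}$ are welcome additions the paper leaves implicit, but the argument is the same.
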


\begin{proof}
Recall that $\Phi(\pi)=\E_{\a\sim\pi}\Phi(\a)$ and $\Phi(\a)\in[0,mF]$. 
\begin{align*}
    \Phi(\pi)-\Phi(\pi')=&\E_{\a\sim\pi}\Phi(\a)-\E_{\a\sim\pi'}\Phi(\a)\\
    =&\sum_{i\in[m]}\E_{a_{1:i-1}\sim\pi'_{1:i-1},a_{i:m}\sim\pi_{i:m}}\Phi(\a)-\E_{a_{1:i}\sim\pi'_{1:i},a_{i+1:m}\sim\pi_{i+1:m}}\Phi(\a)\\
    \leq&\sum_{i\in[m]}\Norm{\pi_i-\pi'_i}_1\cdot\Norm{\Phi}_\infty\\
    \leq& mF\Norm{\pi-\pi'}_1. 
\end{align*}
Similarly we have $\nabla_{\pi}\Phi(a_i)=\E_{a_{-i}\sim\pi_{-i}}\Phi(a_i,a_{-i})$. As a result, we have
$$\Norm{\nabla_{\pi}\Phi-\nabla_{\pi'}\Phi}_\infty\leq mF\Norm{\pi-\pi'}_1. $$
\end{proof}

\begin{definition}(Frank Wolfe Gap)
The Frank Wolfe gap of a joint strategy $\pi$ for $\Phi(\cdot)$ is defined as
$$G(\pi)=\max_{\pi'}\inner{\pi'-\pi,\nabla_\pi\Phi}.$$
\end{definition}

\begin{lemma}\label{lemma:gap to nash}
Suppose the Frank Wolfe gap of $\pi$ is $\epsilon$. Then $\pi$ is an $\epsilon$-Nash policy. 
\end{lemma}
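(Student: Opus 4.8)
The plan is to unpack the Frank-Wolfe gap $G(\pi)=\max_{\pi'}\inner{\pi'-\pi,\nabla_\pi\Phi}$ into a sum of per-player best-response gaps and then exploit the non-negativity of each summand. First I would use the gradient formula \eqref{equ:phi_grad}, namely $\nabla_i\Phi(a_i)=\E_{a_{-i}\sim\pi_{-i}}[r_i(a_i,a_{-i})]$, to identify the block of $\nabla_\pi\Phi$ corresponding to player $i$ as the vector $\Sp{V_i^{a_i,\pi_{-i}}}_{a_i\in\A_i}$. Pairing this against a candidate marginal $\pi_i'\in\Delta(\A_i)$ gives $\inner{\pi_i',\nabla_i\Phi}=\sum_{a_i}\pi_i'(a_i)\E_{a_{-i}\sim\pi_{-i}}[r_i(a_i,a_{-i})]=V_i^{\pi_i',\pi_{-i}}$, so each summand $\inner{\pi_i'-\pi_i,\nabla_i\Phi}$ equals exactly the deviation gain $V_i^{\pi_i',\pi_{-i}}-V_i^{\pi}$.

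Next, since the feasible set is the product $\prod_{i=1}^m\Delta(\A_i)$ and $\nabla_\pi\Phi$ has the block structure just described, the maximization over $\pi'$ in $G(\pi)$ separates across players (each term depends only on its own $\pi_i'$). I would therefore write
\begin{align*}
G(\pi)=\max_{\pi'}\sum_{i=1}^m\inner{\pi_i'-\pi_i,\nabla_i\Phi}=\sum_{i=1}^m\max_{\pi_i'\in\Delta(\A_i)}\Sp{V_i^{\pi_i',\pi_{-i}}-V_i^{\pi}}=\sum_{i=1}^m\Sp{V_i^{\dagger,\pi_{-i}}-V_i^{\pi}},
\end{align*}
where the last equality is just the definition of the best-response value $V_i^{\dagger,\pi_{-i}}=\max_{\mu\in\Delta(\A_i)}V_i^{\mu,\pi_{-i}}$.

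Finally, each term satisfies $V_i^{\dagger,\pi_{-i}}-V_i^{\pi}\geq 0$, because the best response can only improve player $i$'s value relative to playing $\pi_i$. With all summands non-negative, the maximum over $i$ is dominated by the full sum, so $\max_{i\in[m]}\Sp{V_i^{\dagger,\pi_{-i}}-V_i^{\pi}}\leq\sum_{i=1}^m\Sp{V_i^{\dagger,\pi_{-i}}-V_i^{\pi}}=G(\pi)=\epsilon$. By the definition of an $\epsilon$-approximate Nash equilibrium, this is precisely the claim.

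The only genuinely careful step is the first one: justifying that the player-$i$ block of $\nabla_\pi\Phi$ is exactly $\Sp{V_i^{a_i,\pi_{-i}}}_{a_i}$, so that the inner product against $\pi_i'$ collapses to a best-response value. This rests on \eqref{equ:phi_grad} together with the multilinearity of $\Phi$ in the product-policy coordinates and the potential-game identity linking $\Phi$-differences to $V_i$-differences; once that identification is made, everything else is bookkeeping. A secondary point worth stating explicitly is that $\pi'$ in the definition of $G(\pi)$ ranges over product policies $\prod_i\Delta(\A_i)$, consistent with the Frank-Wolfe iteration of Algorithm \ref{algo:FW}, which is exactly what makes the maximization separable.
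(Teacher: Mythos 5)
Your proof is correct, but it reaches the conclusion by a somewhat different route than the paper. The paper's argument is purely local: for a fixed player $i$ and deviation $\pi_i'$, it writes $V_i^{\pi_i',\pi_{-i}}-V_i^{\pi}=\Phi(\pi_i',\pi_{-i})-\Phi(\pi)=\inner{\pi_i'-\pi_i,\nabla_{\pi_i}\Phi}$ using the potential identity and multilinearity, and then simply notes that $(\pi_i',\pi_{-i})$ is one admissible choice of $\pi'$ in the maximum defining $G(\pi)$, so the deviation gain is at most $G(\pi)=\epsilon$; no decomposition over players is ever needed. You instead evaluate $G(\pi)$ exactly: the linear objective splits into per-player blocks, the maximum over the product of simplices separates, and each block maximum is a best-response gap, yielding $G(\pi)=\sum_{i=1}^{m}\Sp{V_i^{\dagger,\pi_{-i}}-V_i^{\pi}}$, after which nonnegativity of each term gives $\max_{i\in[m]}\leq\sum_{i=1}^{m}$. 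Both arguments rest on the same two ingredients (the potential property and the gradient formula \eqref{equ:phi_grad}); yours is a few lines longer but proves a slightly stronger fact, namely that the Frank--Wolfe gap equals the Nikaido--Isoda function of $\pi$, which dovetails with the paper's remark in Section \ref{sec:preliminaries} about replacing $\max_{i\in[m]}$ by $\sum_{i=1}^{m}$. One minor caveat you correctly flagged as the delicate step: the player-$i$ block of the true gradient of $\Phi(\pi)=\E_{\a\sim\pi}[\Phi(\a)]$ is $\E_{a_{-i}\sim\pi_{-i}}[\Phi(a_i,a_{-i})]=\E_{a_{-i}\sim\pi_{-i}}[r_i(a_i,a_{-i})]+C$ with $C$ independent of $a_i$, so the identification $\inner{\pi_i',\nabla_i\Phi}=V_i^{\pi_i',\pi_{-i}}$ holds only up to this additive constant; since you only ever pair the gradient against differences $\pi_i'-\pi_i$ of probability vectors, the constant cancels and the argument is unaffected.
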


\begin{proof}
For a fixed player $i$, suppose player $i$ change her strategy to $\pi'_i$. 
\begin{align*}
    V_i^{\pi'_i,\pi_{-i}}-V_i^{\pi}&=\Phi(\pi'_i,\pi_{-i})-\Phi(\pi)\\
    &=\inner{\pi'_i-\pi_i,\nabla_{\pi_i}\Phi}\\
    &\leq\max_{\pi'}\inner{\pi'-\pi,\nabla_\pi\Phi}\\
    &\leq\epsilon. 
\end{align*}
\end{proof}

\subsection{Analysis for Frank Wolfe in Bandit Feedback}
\label{sec:fw_proof}

\begin{theorem}
\label{theo:decentralized_bandit}
Let $T=K\tau$. For the congestion game with bandit feedback, by running Algorithm \ref{algo:FW} with gradient estimator $\widehat{\nabla}^k_i\Phi$ in \eqref{equ:grad_estimator} and exploration distribution $\rho_i$ in \eqref{equ:G_optimal}, setting parameters $\nu=\frac{F}{m\sqrt{K}}$, $\gamma=\frac{F}{mK}$ and $\tau=K^2$, if $K\geq\frac{2F}{m}$, then with probability $1-\delta$, we have
$$\text{Nash-Regret}(T)=\tau \sum_{k=1}^K G(\pi^k)=\tmco\Sp{m^2F^2T^{5/6}+m^3F^3T^{2/3}}.$$
\end{theorem}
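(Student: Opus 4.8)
The plan is to convert the theorem's objective into a bound on the Frank-Wolfe gap and then run a descent-lemma argument on the common potential. By Lemma~\ref{lemma:gap to nash}, $\max_{i\in[m]}\Sp{V_i^{\dagger,\pikni}-V_i^{\pik}}\le G(\pi^k)$, so it suffices to control $\tau\sum_{k=1}^K G(\pi^k)$. Since congestion games are potential games, increasing each player's value is aligned with ascending $\Phi$, and the decentralized update in line~\ref{line:fw_update} is exactly a noisy, exploration-regularized Frank-Wolfe step on $\Phi$. I would therefore study the per-episode increase $\Phi(\pi^{k+1})-\Phi(\pi^k)$ and telescope.

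First I would invoke the $mF$-smoothness of $\Phi$ in the $L_1$ norm (Lemma~\ref{lmm:smooth}) to get the descent inequality
\begin{equation*}
\Phi(\pi^{k+1})-\Phi(\pi^k)\ge \inner{\nabla\Phi(\pi^k),\pi^{k+1}-\pi^k}-\tfrac{mF}{2}\Norm{\pi^{k+1}-\pi^k}_1^2 .
\end{equation*}
Writing the update per player as $\pi_i^{k+1}-\pi_i^k=(1-\gamma)\nu(\widetilde\pi_i^{k+1}-\pi_i^k)+\gamma(\rho_i-\pi_i^k)$ splits the first-order term into a Frank-Wolfe part and an exploration part. The key step is to show the Frank-Wolfe part recovers the \emph{true} gap up to estimation error: since $\widetilde\pi_i^{k+1}$ maximizes $\inner{\cdot,\widehat\nabla_i^k\Phi}$ and Lemma~\ref{lemma:concentration} (fed by unbiasedness, Lemma~\ref{lemma:unbiased}, and the variance bound, Lemma~\ref{lemma:variance bound}) controls $\Norm{\widehat\nabla_i^k\Phi-\nabla_i^k\Phi}_\infty\le\zeta$ uniformly over $k$, a two-sided comparison against the true per-player maximizer yields $\inner{\nabla_i\Phi(\pi^k),\widetilde\pi_i^{k+1}-\pi_i^k}\ge G_i(\pi^k)-4\zeta$, where $G_i$ is the per-player gap; summing over $i$ gives $\sum_i G_i(\pi^k)=G(\pi^k)$. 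The exploration part is bounded in magnitude by $2\gamma mF$ using $\Norm{\nabla_i\Phi}_\infty\le F$ and $\Norm{\rho_i-\pi_i^k}_1\le 2$, and the smoothness penalty by $\tfrac{mF}{2}\Sp{2m(\nu+\gamma)}^2$.

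Collecting terms, each episode satisfies $\Phi(\pi^{k+1})-\Phi(\pi^k)\ge (1-\gamma)\nu\,G(\pi^k)-\tmco\Sp{m\nu\zeta+\gamma mF+m^3F\nu^2}$. Summing over $k$, telescoping, and using $\Phi\in[0,mF]$ bounds $\sum_k G(\pi^k)$ by $\tfrac{mF}{(1-\gamma)\nu}$ plus $K/\nu$ times the per-step error, where $\gamma\le 1/2$ (guaranteed by $K\ge 2F/m$) makes $(1-\gamma)^{-1}$ a constant. Substituting $\nu=\tfrac{F}{m\sqrt K}$, $\gamma=\tfrac{F}{mK}$, $\tau=K^2$ into $\zeta=\tmco\Sp{\sqrt{F^4/(\gamma\tau)}+F^3/(\gamma\tau)}$, then multiplying by $\tau$ and writing $T=K\tau=K^3$ (so $K=T^{1/3}$), converts the $\tmco(\sqrt K)$ part of $\sum_k G$ into the $T^{5/6}$ term and the $\tmco(1)$ part into the $T^{2/3}$ term, giving the claimed rate.

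The main obstacle, and the reason for the three-way parameter split, is the tension in choosing $\nu,\gamma,\tau$: enlarging $\nu$ speeds up the Frank-Wolfe progress (shrinking $mF/\nu$) but inflates the smoothness contribution $m^3F\nu^2/\nu=m^3F\nu$; enlarging $\gamma$ lowers the gradient-estimation error $\zeta$, since $\rho_i$ being the G-optimal design gives $\Sigma_i^k\succeq\gamma\,\E_{a_i\sim\rho_i}[\phi_i(a_i)\phi_i(a_i)^\top]$ and hence a denominator of $\gamma\tau$, but adds exploration bias $\gamma mF$; and enlarging $\tau$ improves $\zeta$ at the cost of fewer updates per sample. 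The crux is verifying that the chosen values simultaneously balance the initialization term $mF/\nu$, the exploration bias, the smoothness penalty, and both Bernstein error terms so that no single contribution exceeds the stated bound; the G-optimal design (via Kiefer-Wolfowitz, giving $F$ rather than $\abs{\A_i}$ in the variance) is exactly what keeps every constant polynomial in $F$ rather than exponential.
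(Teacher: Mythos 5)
Your proposal is correct and follows essentially the same route as the paper's proof: the $L_1$-smoothness descent inequality for $\Phi$, splitting the update into Frank--Wolfe and exploration components, controlling the gradient error via Bernstein plus the G-optimal design, telescoping against $\Phi\in[0,mF]$, and substituting $\nu=\frac{F}{m\sqrt{K}}$, $\gamma=\frac{F}{mK}$, $\tau=K^2$. The only differences are cosmetic (you decompose the Frank--Wolfe gap per player and sum, incurring $4m\zeta$ where the paper's joint comparison gives $2m\zeta$, and you use the tighter bound $\Norm{\nabla_i\Phi}_\infty\le F$ for the exploration term), neither of which changes the argument or the rate.
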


\begin{proof}
Set $\nabla^k\Phi=\nabla\Phi(\Pi^k)\in\R^A$ and $\nabla_i^k\Phi=\nabla^k\Phi(\pi_i)\in\R^{A_i}$. As we have $\Phi(\cdot)$ is $mF$-smooth w.r.t. $\|\cdot\|_1$, we have
\begin{align*}
    \Phi(\pi^{k+1})\geq& \Phi(\pi^k)+\inner{\nabla \Phi(\pi^{k}),\pi^{k+1}-\pi^k}-\frac{mF}{2}\|\pi^{k+1}-\pi^k\|_1^2\\
    =& \Phi(\pi^k)+(1-\gamma)\nu\inner{\nabla \Phi(\pi^{k}),\widetilde{\pi}^{k+1}-\pi^k}+\gamma\inner{\nabla^k\Phi,\rho-\pi^k}\\
    &\qquad-\frac{mF}{2}(2\nu^2\Norm{\widetilde{\pi}^k-\pi^k}_1^2+2\gamma^2\Norm{\rho-\pi^k}_1^2)\\
    \geq& \Phi(\pi^k)+(1-\gamma)\nu\inner{\nabla \Phi(\pi^{k}),\widetilde{\pi}^{k+1}-\pi^k}-\gamma\Norm{\nabla^k\Phi}_\infty\Norm{\rho-\pi^k}_1\\
    &\qquad-\frac{mF}{2}(2\nu^2\Norm{\widetilde{\pi}^k-\pi^k}_1^2+2\gamma^2\Norm{\rho-\pi^k}_1^2)\\
    \geq& \Phi(\pi^k)+(1-\gamma)\nu\inner{\nabla \Phi(\pi^{k}),\widetilde{\pi}^{k+1}-\pi^k}-2\gamma m^2F-4m^3F(\nu^2+\gamma^2). \tag{By Lemma \ref{lmm:smooth}.}
\end{align*}
Define the true target policy at episode $k$
$$\widehat{\pi}_i^{k+1}=\argmax_{\pi_i}\inner{\pi_i,\nabla_i \Phi(\pi_i^k)},$$
and the Frank Wolfe gap of joint strategy $\pi$
$$G(\pi)=\max_{\pi'}\inner{\pi'-\pi,\nabla \Phi(\pi)}.$$
Then we have
\begin{align*}
    \inner{\nabla \Phi(\pi^{k}),\widetilde{\pi}^{k+1}-\pi^k}=&\inner{\widehat{\nabla}^k \Phi(\pi^{k}),\widetilde{\pi}^{k+1}-\pi^k}+\inner{\nabla \Phi(\pi^{k})-\widehat{\nabla}^k \Phi(\pi^{k}),\widetilde{\pi}^{k+1}-\pi^k}\\
    \geq& \inner{\widehat{\nabla}^k \Phi(\pi^{k}),\widehat{\pi}^{k+1}-\pi^k}+\inner{\nabla \Phi(\pi^{k})-\widehat{\nabla}^k \Phi(\pi^{k}),\widetilde{\pi}^{k+1}-\pi^k}\\
    =& \inner{\nabla\Phi(\pi^{k}),\widehat{\pi}^{k+1}-\pi^k}+\inner{\nabla \Phi(\pi^{k})-\widehat{\nabla}^k \Phi(\pi^{k}),\widetilde{\pi}^{k+1}-\widehat{\pi}^{k+1}}\\
    \geq& G(\pi^k)-2m\Norm{\nabla \Phi(\pi^{k})-\widehat{\nabla}^k \Phi(\pi^{k})}_\infty\\
    \geq& G(\pi^k)-c\sqrt{\frac{m^2F^4\log(mK/\delta)}{\gamma \tau}}-\frac{cmF^3\log(mK/\delta)}{\gamma \tau}
\end{align*}
Apply it to the previous bound and we have
\begin{align*}
    \Phi(\pi^{k+1})\geq&\Phi(\pi^k)+(1-\gamma)\nu G(\pi^k)-c\frac{(1-\gamma)\nu}{\sqrt{\gamma\tau}}\sqrt{m^2F^4\log(mK/\delta)}\\
    &\qquad-c\frac{(1-\gamma)\nu}{\gamma\tau}mF^3\log(mK/\delta)-\gamma 2m^2F-4m^3F(\nu^2+\gamma^2).
\end{align*}
Summing over $k\in[K]$ and we get
\begin{align*}
   \sum_{k=1}^K G(\pi^k)\leq&\frac{\Phi(\pi^{K+1})-\Phi(\pi^{1})}{(1-\gamma)\nu }+c\frac{K}{\sqrt{\gamma\tau}}\sqrt{m^2F^4\log(mK/\delta)}+c\frac{K}{\gamma\tau}mF^3\log(mK/\delta)\\
   &\qquad +\frac{2m^2FK\gamma}{(1-\gamma)\nu}+\frac{4(\nu^2+\gamma^2)m^3FK}{(1-\gamma)\nu}. 
\end{align*}
Set $\nu=\frac{F}{m\sqrt{K}}$, $\gamma=\frac{F}{mK}$, $\tau=K^2$ and notice that when $K\geq\frac{2F}{m}$, we have $1-\gamma\geq\frac{1}{2}$. Since $\Phi(\cdot)$ is bounded in $[0, mF]$, we can have
$$\sum_{k=1}^K G(\pi^k)=\tmco\Sp{m^2F^2K^{1/2}+m^3F^3}.$$
Then by Lemma \ref{lemma:gap to nash}, for $T=K\tau$, we have
$$\text{Nash-Regret}(T)=\tau \sum_{k=1}^K G(\pi^k)=\tmco\Sp{m^2F^2T^{5/6}+m^3F^3T^{2/3}}.$$
\end{proof}

\subsection{Algorithm and Analysis for Semi-bandit Feedback}
\label{sec:fw_semi_bandit}

In the setting of semi-bandit feedback, we will need a different gradient estimator $\widetilde{\nabla}^k_i\Phi(a_i)$ and a different exploration distribution $\Tilde{\rho}_i$ to utilize the extra reward information from each chosen facility.

Based on the analysis in Section \ref{sec:decentralized_algo}, using \eqref{equ:phi_grad}, we have $\nabla^k_i\Phi(a_i)=\sum_{f\in a_i}[\theta_i(\pi^k)]_f$, where $[\theta_i(\pi^k)]_f=\E_{a_{-i}\sim\pi^k_{-i}}\Mp{r^f(n^f(a_{-i})+1)}$. Meanwhile, in semi-bandit feedback, the mean of $t$-th reward player $i$ received for facility $f$ at episode $k$ is $r^f(n^f(a^{k, t}_i, a^{k, t}_{-i}))$. Therefore, we can use inverse propensity score (IPS) estimator to estimate $[\theta_i(\pi^k)]_f$. In particular, we have
$$[\widetilde{\theta}^k_i(\pi^k)]_f=\frac{1}{\tau}\sum_{t=1}^{\tau}[\widetilde{\theta}^{k, t}_i(\pi^k)]_f,\quad\text{where}\quad[\widetilde{\theta}^{k, t}_i(\pi^k)]_f=\frac{r^{k, t, f}\mathds{1}\Bp{f\in a^{k, t}_i}}{\P_{a_i\sim\pi_i^k}(f\in a_i)}.$$
Then, we can naturally have 
\begin{equation}
    \label{equ:grad_estimator_semi}
    \widetilde{\nabla}^k_i\Phi(a_i)=\sum_{f\in a_i}[\widetilde{\theta}^k_i(\pi^k)]_f.
\end{equation}
Furthermore, by Lemma \ref{lmm:tilde_rho}, we can see that by using $\tilde{\rho}_i$ computed by Algorithm \ref{algo:compute_tilde_rho}, for all players, we have $\P_{a_i\sim\pi_i^k}\Sp{f\in a_i}\geq\frac{\gamma}{2F}$ for all $f\in\bigcup_{a_i\in\A_i}a_i$. 

Properties of the IPS estimator are summarized in Lemma \ref{lmm:ips_estimator}. By using these properties, we can have the following lemma.
\begin{lemma}
\label{lmm:tilde_phi_concentration}
With probability $1-\delta$, for all $k\in[K]$, $i\in[m]$ and $a_i\in\A_i$, we have
$$\abs{\widetilde{\nabla}_i^k\Phi(a_i)-\nabla_i^k\Phi(a_i)}\leq \sqrt{\frac{4F^3\log(2mFK/\delta)}{\gamma \tau}}+\frac{2F^2\log(2mFK/\delta)}{\gamma \tau}.$$
\end{lemma}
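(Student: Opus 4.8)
The plan is to prove the per-facility concentration of the inverse-propensity-score (IPS) estimator $[\widetilde{\theta}^k_i(\pi^k)]_f$ around its mean $[\theta_i(\pi^k)]_f$, and then to lift this to the gradient estimator by summing over the at most $F$ facilities contained in any action $a_i$. The crucial structural observation, which is what keeps the bound free of $\abs{\A_i}$, is that $\widetilde{\nabla}^k_i\Phi(a_i)=\sum_{f\in a_i}[\widetilde{\theta}^k_i(\pi^k)]_f$ and $\nabla^k_i\Phi(a_i)=\sum_{f\in a_i}[\theta_i(\pi^k)]_f$ share the \emph{same} facility-level summands for every action. Hence it suffices to control the $mFK$ scalar quantities indexed by $(k,i,f)$, after which the action-level bound follows deterministically.

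First I would record the three ingredients supplied by Lemma \ref{lmm:ips_estimator}: (i) unbiasedness, $\E_k[[\widetilde{\theta}^{k,t}_i(\pi^k)]_f]=[\theta_i(\pi^k)]_f$, which holds because the event $f\in a_i^{k,t}$ and the opponents' configuration $a_{-i}^{k,t}\sim\pi^k_{-i}$ are independent under the product policy $\pi^k$, so the propensity weight exactly cancels the selection probability; (ii) a uniform range bound, combining $r^{k,t,f}\in[0,1]$ with the propensity lower bound $\P_{a_i\sim\pi_i^k}(f\in a_i)\geq \gamma/(2F)$ from Lemma \ref{lmm:tilde_rho}, giving $\abs{[\widetilde{\theta}^{k,t}_i(\pi^k)]_f}\leq 2F/\gamma$; and (iii) a second-moment bound $\E_k[([\widetilde{\theta}^{k,t}_i(\pi^k)]_f)^2]\leq 1/\P_{a_i\sim\pi_i^k}(f\in a_i)\leq 2F/\gamma$, obtained by bounding $(r^{k,t,f})^2\leq 1$ and using $\E[\mathds{1}\{f\in a_i\}]=\P_{a_i\sim\pi_i^k}(f\in a_i)$.

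With a variance proxy and a range both of order $2F/\gamma$, I would apply Bernstein's inequality to the empirical average of the $\tau$ i.i.d.\ draws $[\widetilde{\theta}^{k,t}_i(\pi^k)]_f$, $t\in[\tau]$, yielding for a fixed $(k,i,f)$ that with probability $1-\delta'$, $\abs{[\widetilde{\theta}^k_i(\pi^k)]_f-[\theta_i(\pi^k)]_f}\leq \sqrt{4F\log(2/\delta')/(\gamma\tau)}+4F\log(2/\delta')/(3\gamma\tau)$. A union bound over the $mFK$ triples $(k,i,f)$ with $\delta'=\delta/(mFK)$ turns $\log(2/\delta')$ into $\log(2mFK/\delta)$ simultaneously for all of them. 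Finally, for any action $a_i$ I would invoke the triangle inequality over the at most $F$ facilities in $a_i$, which multiplies the square-root term by $F$ (absorbed under the root as $F^{3}$) and the linear term by $F$; using $4/3\leq 2$ then gives exactly the claimed bound.

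The main work — rather than a genuine obstacle — is ensuring that the union bound is taken over facilities, not over the exponentially large action set, and that the propensity lower bound from Lemma \ref{lmm:tilde_rho} is uniform over all $f\in\bigcup_{a_i\in\A_i}a_i$; this is precisely what allows the estimator's accuracy to degrade only polynomially in $F$. A secondary subtlety is verifying the independence structure needed both for unbiasedness and for the i.i.d.\ Bernstein application, namely that within episode $k$ the $\tau$ rounds all use the fixed policy $\pi^k$ and that player $i$'s facility-selection event is independent of the opponents' configuration; both hold because $\pi^k$ is a product policy held constant across the $\tau$ rounds.
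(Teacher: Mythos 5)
Your proposal is correct and follows essentially the same route as the paper's proof: per-facility Bernstein concentration of the IPS estimator using the unbiasedness, range, and second-moment bounds from Lemma \ref{lmm:ips_estimator}, a union bound over the $mFK$ triples $(k,i,f)$ rather than over actions, and a triangle inequality over the at most $F$ facilities in each action. The extra bookkeeping you supply on the Bernstein constants and the independence structure matches what the paper leaves implicit.
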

\begin{proof}
By Lemma \ref{lmm:ips_estimator} and Bernstein's inequality, simultaneously for all $(i, k, f)\in[m]\times[K]\times\F$, with probability at least $1-\delta$, we have
$$\abs{[\wtthe^k_i(\pi^k)]_f-[\theta_i(\pi^k)]_f}\leq\sqrt{\frac{4F\log\Sp{2mFK/\delta}}{\gamma\tau}}+\frac{2F\log(2mFK/\delta)}{\gamma\tau}.$$
Since $\widetilde{\nabla}^k_i\Phi(a_i)=\sum_{f\in a_i}[\wtthe^k_i(\pi^k)]_f$, by triangle inequality, we have
$$\abs{\widetilde{\nabla}_i^k\Phi(a_i)-\nabla_i^k\Phi(a_i)}\leq \sqrt{\frac{4F^3\log(2mFK/\delta)}{\gamma \tau}}+\frac{2F^2\log(2mFK/\delta)}{\gamma \tau}.$$
\end{proof}

With this more refined gradient estimator, we can now have the following theorem.
\begin{theorem}
Let $T=K\tau$. For the congestion game with semi-bandit feedback, by running Algorithm \ref{algo:FW} with gradient estimator $\widetilde{\nabla}^k_i\Phi$ in \eqref{equ:grad_estimator_semi} and exploration distribution $\tilde{\rho}_i$ in Algorithm \ref{algo:compute_tilde_rho}, setting parameters $\nu=\frac{\sqrt{F}}{m\sqrt{K}}$, $\gamma=\frac{\sqrt{F}}{mK}$ and $\tau=K^2$, if $K\geq\frac{2\sqrt{F}}{m}$, then with probability $1-\delta$, we have
$$\text{Nash-Regret}(T)=\tau \sum_{k=1}^K G(\pi^k)=\tmco\Sp{m^2F^{3/2}T^{5/6}+m^3F^2T^{2/3}}.$$
\end{theorem}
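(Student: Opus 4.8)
The plan is to follow the same Frank--Wolfe descent argument used for the bandit-feedback case in Theorem \ref{theo:decentralized_bandit}, substituting the semi-bandit gradient estimator $\widetilde{\nabla}^k_i\Phi$ from \eqref{equ:grad_estimator_semi}, the exploration distribution $\tilde{\rho}_i$ from Algorithm \ref{algo:compute_tilde_rho}, and the sharper concentration guarantee of Lemma \ref{lmm:tilde_phi_concentration} in place of Lemma \ref{lemma:concentration}. The backbone is the $mF$-smoothness of $\Phi$ with respect to $\Norm{\cdot}_1$ (Lemma \ref{lmm:smooth}), which is what keeps the Frank--Wolfe update free of any dependence on the action count.

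First I would write the smoothness descent inequality $\Phi(\pi^{k+1}) \ge \Phi(\pi^k) + \inner{\nabla\Phi(\pi^k), \pi^{k+1}-\pi^k} - \frac{mF}{2}\Norm{\pi^{k+1}-\pi^k}_1^2$ and expand the update $\pi_i^{k+1} = (1-\gamma)(\nu\widetilde{\pi}_i^{k+1}+(1-\nu)\pi_i^k)+\gamma\tilde{\rho}_i$, splitting the linear term into the Frank--Wolfe progress $(1-\gamma)\nu\inner{\nabla\Phi(\pi^k), \widetilde{\pi}^{k+1}-\pi^k}$ and an exploration term controlled by $\gamma\Norm{\nabla^k\Phi}_\infty\Norm{\tilde{\rho}-\pi^k}_1 \le 2\gamma m^2 F$; the quadratic penalty contributes $-4m^3 F(\nu^2+\gamma^2)$ after bounding each product-policy $L_1$ displacement by $2m$. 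Next, using the optimality of $\widetilde{\pi}^{k+1}$ for the estimated gradient together with the definition of the Frank--Wolfe gap $G(\pi^k)$, I would lower bound $\inner{\nabla\Phi(\pi^k), \widetilde{\pi}^{k+1}-\pi^k} \ge G(\pi^k) - 2m\Norm{\nabla\Phi(\pi^k)-\widetilde{\nabla}^k\Phi(\pi^k)}_\infty$ and then invoke Lemma \ref{lmm:tilde_phi_concentration} to replace the last term by $\tmco\Sp{\sqrt{F^3/(\gamma\tau)} + F^2/(\gamma\tau)}$.

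Then I would telescope the per-episode inequality over $k\in[K]$, bound $\Phi(\pi^{K+1})-\Phi(\pi^1) \le mF$ since $\Phi\in[0,mF]$, and divide through by $(1-\gamma)\nu$ to obtain $\sum_{k=1}^K G(\pi^k) \le \frac{mF}{(1-\gamma)\nu} + 2mK\Norm{\text{err}}_\infty + \frac{K(2\gamma m^2 F + 4m^3 F(\nu^2+\gamma^2))}{(1-\gamma)\nu}$. Plugging in $\nu=\frac{\sqrt{F}}{m\sqrt{K}}$, $\gamma=\frac{\sqrt{F}}{mK}$, $\tau=K^2$ and using $K\ge\frac{2\sqrt{F}}{m}$ to guarantee $1-\gamma\ge\frac12$, the $\nu^2$-smoothness term dominates the $K$-dependence and yields leading order $\tmco(m^2 F^{3/2}\sqrt{K})$, while the remaining constant-in-$K$ contributions collect into $\tmco(m^3 F^2)$, so that $\sum_{k=1}^K G(\pi^k)=\tmco(m^2 F^{3/2}\sqrt{K}+m^3 F^2)$. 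Finally, Lemma \ref{lemma:gap to nash} converts the gap bound into the Nash regret through $\text{Nash-Regret}(T)=\tau\sum_{k=1}^K G(\pi^k)$, and substituting $T=K\tau=K^3$ (so $\sqrt{K}=T^{1/6}$ with an extra $\tau=T^{2/3}$ factor) produces the claimed $\tmco(m^2 F^{3/2}T^{5/6}+m^3 F^2 T^{2/3})$.

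The main obstacle, and the only place the semi-bandit structure genuinely changes anything, is the improved $F$-dependence inside the concentration bound: the inverse-propensity-score estimator must have variance and range scaling like $F^3/(\gamma\tau)$ and $F^2/(\gamma\tau)$ rather than the $F^4/(\gamma\tau)$ and $F^3/(\gamma\tau)$ of the bandit least-squares estimator, which in turn hinges on the coverage guarantee $\P_{a_i\sim\pi_i^k}(f\in a_i)\ge\frac{\gamma}{2F}$ furnished by $\tilde{\rho}_i$ (Lemma \ref{lmm:tilde_rho}) and on the moment bounds for the IPS estimator (Lemma \ref{lmm:ips_estimator}). Since those are already established, the residual work is the careful bookkeeping of how this factor-$\sqrt{F}$ improvement propagates through the descent inequality and the re-tuning of $\nu,\gamma,\tau$ to rebalance the terms; everything else is a near-verbatim repetition of the bandit-feedback argument.
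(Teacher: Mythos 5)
Your proposal is correct and follows essentially the same route as the paper's proof: the paper likewise reruns the descent argument of Theorem \ref{theo:decentralized_bandit} verbatim, replacing Lemma \ref{lemma:concentration} with the sharper semi-bandit concentration of Lemma \ref{lmm:tilde_phi_concentration} (which rests on Lemmas \ref{lmm:tilde_rho} and \ref{lmm:ips_estimator}), then retunes $\nu,\gamma,\tau$ and converts the Frank--Wolfe gap to Nash regret via Lemma \ref{lemma:gap to nash}. Your accounting of which terms dominate after substituting the parameters matches the paper's calculation.
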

\begin{proof}
By following the proof of Theorem \ref{theo:decentralized_bandit} and applying the concentration inequality in Lemma \ref{lmm:tilde_phi_concentration}, we can have
\begin{align*}
    \Phi(\pi^{k+1})\geq&\Phi(\pi^k)+(1-\gamma)\nu G(\pi^k)-\frac{(1-\gamma)\nu}{\sqrt{\gamma\tau}}\sqrt{4m^2F^3\log(2mK/\delta)}\\
    &\qquad-\frac{2(1-\gamma)\nu}{\gamma\tau}mF^2\log(mK/\delta)-\gamma 2m^2F-4m^3F(\nu^2+\gamma^2).
\end{align*}
Summing over $k\in[K]$ and we get
\begin{align*}
   \sum_{k=1}^K G(\pi^k)\leq&\frac{\Phi(\pi^{K+1})-\Phi(\pi^{1})}{(1-\gamma)\nu }+\frac{K}{\sqrt{\gamma\tau}}\sqrt{4m^2F^3\log(mK/\delta)}+\frac{2K}{\gamma\tau}mF^2\log(mK/\delta)\\
   &\qquad +\frac{2m^2FK\gamma}{(1-\gamma)\nu}+\frac{4(\nu^2+\gamma^2)m^3FK}{(1-\gamma)\nu}. 
\end{align*}
Set $\nu=\frac{\sqrt{F}}{m\sqrt{K}}$, $\gamma=\frac{\sqrt{F}}{mK}$, $\tau=K^2$ and notice that when $K\geq\frac{2\sqrt{F}}{m}$, we have $1-\gamma\geq\frac{1}{2}$. Thus, we can have
$$\sum_{k=1}^K G(\pi^k)=\tmco\Sp{m^2F^{3/2}K^{1/2}+m^3F^2}.$$
Then by Lemma \ref{lemma:gap to nash}, for $T=K\tau$, we have
$$\text{Nash-Regret}(T)=\tau \sum_{k=1}^K G(\pi^k)=\tmco\Sp{m^2F^{3/2}T^{5/6}+m^3F^2T^{2/3}}.$$
\end{proof}

\subsection{Lemmas for Semi-bandit Feedback}
\begin{algorithm}[ht]
\caption{Compute Exploration Distribution $\tilde{\rho}_i$}
\label{algo:compute_tilde_rho}
\begin{algorithmic}[1]
\STATE {\bf Input:} $\A_i$, player $i$-th action set
\STATE Initialize $\widetilde{\A}_i\leftarrow\emptyset$
\FOR{$a_i$ in $\A_i$}
    \IF{$\exists f\in a_i$ such that $f\notin \bigcup_{a_i'\in\widetilde{\A}_i}a_i'$}
        \STATE $\widetilde{\A}_i\leftarrow \widetilde{\A}_i\cup\Bp{a_i}$
    \ENDIF
    \IF{$\F_i=\bigcup_{a_i'\in\widetilde{\A}_i}a_i'$}
        \STATE {\bf break}
    \ENDIF
\ENDFOR
\STATE Assign $\tilde{\rho}_i(a_i)\leftarrow\frac{1}{2F}$ for each $a_i\in\widetilde{\A}_i$
\STATE Assign remaining probability mass arbitrarily to actions in $\A\setminus\widetilde{\A}_i$
\RETURN $\tilde{\rho}_i$
\end{algorithmic}
\end{algorithm}

\begin{lemma}
\label{lmm:tilde_rho}
Let $\F_i=\bigcup_{a_i\in\A_i}a_i$. For any player $i$, if $\tilde{\rho}_i$ is the output of Algorithm \ref{algo:compute_tilde_rho} and $\pi^k_i$ contains a mixture of $\tilde{\rho}_i$ with weight $\gamma$, then we have $\P_{a_i\sim\pi^k_i}\Sp{f\in a_i}\geq\frac{\gamma}{2F}$ for any $f\in\F_i$.
\end{lemma}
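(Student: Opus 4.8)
The plan is to first establish the key structural property of Algorithm~\ref{algo:compute_tilde_rho}: that the greedily constructed set $\widetilde{\A}_i$ covers every facility in $\F_i$, i.e., $\bigcup_{a_i\in\widetilde{\A}_i}a_i=\F_i$. This will follow from the observation that the loop terminates early only when the break condition $\F_i=\bigcup_{a_i'\in\widetilde{\A}_i}a_i'$ holds, and otherwise processes every action in $\A_i$. To handle the latter case, I would argue by contradiction: if some $f\in\F_i$ were left uncovered after a full pass, then since $f\in\F_i=\bigcup_{a_i\in\A_i}a_i$ there is an action containing $f$, and at the moment that action was examined the ``new facility'' condition would have been triggered (because $f$ is never covered), forcing it into $\widetilde{\A}_i$ --- a contradiction.

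Next I would record the easy cardinality bound $\abs{\widetilde{\A}_i}\leq F$. Since every action added to $\widetilde{\A}_i$ contributes at least one previously uncovered facility and $\abs{\F_i}\leq F$, the set contains at most $F$ actions; hence assigning mass $\frac{1}{2F}$ to each places total mass at most $\frac12$ on $\widetilde{\A}_i$, so $\tilde{\rho}_i$ is a well-defined distribution (the leftover mass being placed arbitrarily).

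With coverage in hand, the probability estimate is immediate. For any fixed $f\in\F_i$, the covering property gives some $a_i^\star\in\widetilde{\A}_i$ with $f\in a_i^\star$ and $\tilde{\rho}_i(a_i^\star)=\frac{1}{2F}$, so
$$\P_{a_i\sim\tilde{\rho}_i}\Sp{f\in a_i}=\sum_{a_i:\,f\in a_i}\tilde{\rho}_i(a_i)\geq\tilde{\rho}_i(a_i^\star)=\frac{1}{2F}.$$
Since the policy update writes $\pi_i^k$ as a mixture placing weight $\gamma$ on the exploration distribution $\tilde{\rho}_i$, the event $\Bp{f\in a_i}$ retains at least a $\gamma$ fraction of its $\tilde{\rho}_i$-probability, yielding $\P_{a_i\sim\pi_i^k}\Sp{f\in a_i}\geq\gamma\cdot\frac{1}{2F}=\frac{\gamma}{2F}$, as claimed.

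The only step requiring genuine care is the coverage claim --- in particular the early-break case, where one must verify that halting the moment all facilities are covered does not drop any $f\in\F_i$; the remaining steps are routine mixture and cardinality bookkeeping.
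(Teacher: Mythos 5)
Your proposal is correct and follows essentially the same route as the paper's proof: bound $\abs{\widetilde{\A}_i}\leq F$ so that $\tilde{\rho}_i$ is a valid distribution, use the coverage property of $\widetilde{\A}_i$ to lower-bound $\P_{a_i\sim\tilde{\rho}_i}(f\in a_i)$ by $\frac{1}{2F}$, and pass to $\pi^k_i$ via the mixture weight $\gamma$. The only difference is that you spell out the coverage claim (including the early-break case) in more detail than the paper, which simply asserts it "by construction."
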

\begin{proof}
By Algorithm \ref{algo:compute_tilde_rho}, whenever a new action is added into $\widetilde{\A}_i$, it contains facility not appeared in current $\widetilde{\A}_i$. Then, since there are at most $\abs{\F_i}\leq F$ distinct facilities in the action set $\A_i$, the final $\widetilde{\A}_i$ must satisfy $|\widetilde{\A}_i|\leq F$. Therefore, $\tilde{\rho}_i$ is a valid distribution over $\A_i$.

Since $\pi^k_i$ contains a mixture of $\tilde{\rho}_i$ with weight $\gamma$, for any $a_i\in\A_i$, we have $\pi^k_i(a_i)\geq\gamma\tilde{\rho}_i(a_i)$. Thus, we have
\begin{align*}
    \P_{a_i\sim\pi^k_i}\Sp{f\in a_i}=&\sum_{a_i\in\A_i}\pi^k_i(a_i)\mathds{1}\Bp{f\in a_i}\\
    \geq & \gamma\sum_{a_i\in\A_i}\tilde{\rho}_i(a_i)\mathds{1}\Bp{f\in a_i}\\
    \geq & \gamma\sum_{a_i\in\widetilde{\A}_i}\tilde{\rho}_i(a_i)\mathds{1}\Bp{f\in a_i}\\
    =&\frac{\gamma}{2F}\sum_{a_i\in\widetilde{\A}_i}\mathds{1}\Bp{f\in a_i}\geq\frac{\gamma}{2F}.
\end{align*}
The last inequality above holds since by construction, $\widetilde{\A}_i$ contains all facilities contained in $\A_i$.

\end{proof}

\begin{lemma}
\label{lmm:ips_estimator}
If $\pi^k_i$ contains a mixture of $\tilde{\rho}_i$ given in Algorithm \ref{algo:compute_tilde_rho} with weight $\gamma$. Then, the IPS estimator $[\widetilde{\theta}^k_i(\pi^k)]_f$ satisfies
$$\E_k\Mp{[\wtthe^{k, t}_i(\pi^k)]_f}=[\theta_i(\pi^k)]_f,\quad |[\wtthe^{k, t}_i(\pi^k)]_f|\leq\frac{2F}{\gamma},\quad\text{and}\quad \E_k\Mp{[\wtthe^{k, t}_i(\pi^k)]_f^2}\leq\frac{2F}{\gamma}.$$
\end{lemma}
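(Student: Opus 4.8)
The plan is to verify the three properties directly from the explicit form of the IPS estimator
$$[\wtthe^{k, t}_i(\pi^k)]_f = \frac{r^{k, t, f}\mathds{1}\Bp{f\in a^{k, t}_i}}{\P_{a_i\sim\pi_i^k}(f\in a_i)},$$
writing $p_f := \P_{a_i\sim\pi_i^k}(f\in a_i)$ for brevity. Since $\pi^k_i$ mixes $\tilde{\rho}_i$ with weight $\gamma$, Lemma \ref{lmm:tilde_rho} already supplies $p_f \geq \gamma/(2F)$, which is the only place the exploration distribution enters the argument. Here all expectations $\E_k[\cdot]$ are over the sample at round $t$, i.e.\ the draws $a_i^{k,t}\sim\pi_i^k$, $a_{-i}^{k,t}\sim\pi_{-i}^k$ and the reward realization, conditional on the fixed policy $\pi^k$.

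For unbiasedness I would first take the conditional expectation over the reward noise given the actions, replacing $r^{k,t,f}$ by its mean $r^f(n^f(a_i^{k,t}, a_{-i}^{k,t}))$. The key observation is that on the event $\Bp{f\in a_i^{k,t}}$ the congestion count satisfies $n^f(a_i^{k,t}, a_{-i}^{k,t}) = n^f(a_{-i}^{k,t})+1$, since player $i$ contributes exactly one additional user of facility $f$. Because $\pi^k$ is a product policy, $a_i^{k,t}$ and $a_{-i}^{k,t}$ are independent, so the indicator (a function of $a_i^{k,t}$ only) and the mean reward $r^f(n^f(a_{-i}^{k,t})+1)$ (a function of $a_{-i}^{k,t}$ only) factorize, yielding
$$\E_k\Mp{[\wtthe^{k, t}_i(\pi^k)]_f} = \frac{1}{p_f}\,\E_{a_i\sim\pi_i^k}\Mp{\mathds{1}\Bp{f\in a_i}}\cdot\E_{a_{-i}\sim\pi_{-i}^k}\Mp{r^f(n^f(a_{-i})+1)} = [\theta_i(\pi^k)]_f,$$
where the first factor equals $p_f$ and cancels the denominator, and the second is exactly the definition of $[\theta_i(\pi^k)]_f$.

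The boundedness and second-moment bounds are then immediate consequences of $r^{k,t,f}\in[0,1]$. For the former, $|[\wtthe^{k,t}_i(\pi^k)]_f|\leq 1/p_f \leq 2F/\gamma$ by Lemma \ref{lmm:tilde_rho}. For the latter, using $\mathds{1}^2=\mathds{1}$ and $(r^{k,t,f})^2\leq 1$,
$$\E_k\Mp{[\wtthe^{k,t}_i(\pi^k)]_f^2} \leq \frac{1}{p_f^2}\,\E_k\Mp{\mathds{1}\Bp{f\in a_i^{k,t}}} = \frac{1}{p_f} \leq \frac{2F}{\gamma}.$$
The only genuine bookkeeping step is the conditional count identity $n^f(a_i, a_{-i})=n^f(a_{-i})+1$ on $\Bp{f\in a_i}$ combined with the product-policy factorization; I do not expect any substantive obstacle, as everything else is a one-line consequence of the reward range and the propensity lower bound from Lemma \ref{lmm:tilde_rho}.
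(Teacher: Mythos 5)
Your proposal is correct and follows essentially the same route as the paper's proof: the unbiasedness step via the identity $n^f(a_i,a_{-i})=n^f(a_{-i})+1$ on $\{f\in a_i\}$ together with the product-policy factorization is exactly the paper's key equality, and the boundedness and second-moment bounds reduce, as in the paper, to $r^{k,t,f}\in[0,1]$ and the propensity lower bound $\P_{a_i\sim\pi^k_i}(f\in a_i)\geq\gamma/(2F)$ from Lemma \ref{lmm:tilde_rho}.
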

\begin{proof}
For the first property, since $\E_k\Mp{r^{k, t, f}\mid \a^{k, t}}=r^f(n^f(a_i^{k, t}, a_{-i}^{k, t}))$ and $\a^{k, t}\sim\pi^k$, We have
\begin{align*}
    &\E_k\Mp{[\wtthe^{k, t}_i(\pi^k)]_f}\\
    =& \E_{\a\sim\pi^k}\Mp{\frac{r^f(n^f(a_i, a_{-i}))\mathds{1}\Bp{f\in a_i}}{\P_{a_i'\sim\pi^k_i}(f\in a_i')}}\\
    =&\frac{1}{\P_{a_i'\sim\pi^k_i}(f\in a_i')}\cdot\E_{a_{-i}\sim\pi^k_{-i}}\Mp{\E_{a_i\sim\pi^k_i}\Mp{r^f(n^f(a_i, a_{-i}))\mathds{1}\Bp{f\in a_i}\mid a_{-i}}}\\
    =&\frac{1}{\P_{a_i'\sim\pi^k_i}(f\in a_i')}\cdot\E_{a_{-i}\sim\pi^k_{-i}}\Mp{\E_{a_i\sim\pi^k_i}\Mp{r^f(n^f(a_i, a_{-i}))\mid a_{-i}, f\in a_i}\P_{a_i\sim\pi^k_i}\Sp{f\in a_i\mid a_{-i}}}\\
    \overset{\text{(i)}}{=}&\frac{\P_{a_i\sim\pi^k_i}\Sp{f\in a_i}}{\P_{a_i'\sim\pi^k_i}(f\in a_i')}\cdot\E_{a_{-i}\sim\pikni}\Mp{r^f(n^f(a_{-i})+1)}\\
    =&[\theta_i(\pi^k)]_f.
\end{align*}
The equality (i) above holds because $\E_{a_i\sim\pi^k_i}\Mp{r^f(n^f(a_i, a_{-i}))\mid a_{-i}, f\in a_i}=r^f(n^f(a_{-i})+1)$ and $f\in a_i$ does not depend on $a_{-i}$.

For the second property, since $\P_{a_i\sim\pik_i}\Sp{f\in a_i}\geq\frac{\gamma}{2F}$ by Lemma \ref{lmm:tilde_rho} and $r^{k, t, f}\in[0, 1]$, we can immediately have $|[\wtthe^{k, t}_i(\pi^k)]_f|\leq\frac{2F}{\gamma}$.

For the third property, we have
\begin{align*}
    \E_k\Mp{[\wtthe^{k, t}_i(\pi^k)]_f^2}=&\frac{\E_{\a\sim\pi^k}\Mp{r^f(n^f(a_i, a_{-i}))^2\mathds{1}\Bp{f\in a_i}}}{\P_{a_i'\sim\pi^k_i}\Sp{f\in a_i'}^2}\\
    \leq & \frac{\E_{\a\sim\pi^k}\Mp{\mathds{1}\Bp{f\in a_i}}}{\P_{a_i'\sim\pi^k_i}\Sp{f\in a_i'}^2}\\
    = & \frac{\P_{a_i\sim\pi^k_i}\Sp{f\in a_i}}{\P_{a_i'\sim\pi^k_i}\Sp{f\in a_i'}^2}\\
    \leq & \frac{2F}{\gamma}.
\end{align*}
\end{proof}
\section{Algorithms for Independent Markov Congestion Games}
\label{sec:markov_algo_details}

In this section, present missing details of our centralized algorithm for independent Markov congestion games, which is summarized in Algorithm \ref{algo:nash_vi}. The proof of its theoretical guarantee is given in Appendix \ref{sec:ne_proof}.

\subsection{Algorithm for Semi-bandit Feedback}

Under the semi-bandit feedback, the players can receive reward information from all facilities they choose. Therefore, we can similarly define
\begin{align*}
    N^{k, f}_{h}(s^f, n)=&\sum_{k'=1}^{k}\mathds{1}\Bp{(s_{h}^{k', f}, n^f(\bm{a}^{k'}_h))=(s^f, n)},\\
    \hat{r}_{h}^{k, f}(s^f, n)=&\frac{\sum_{k'=1}^{k}r_{h}^{k', f}\mathds{1}\Bp{(s_{h}^{k', f}, n^f(\bm{a}^{k'}_h))=(s^f, n)}}{\Nhkf(s^f, n)\vee 1},\\
    \widehat{P}^{k, f}_h(s'^{f}\mid s^f, n)=&\frac{\sum_{k'=1}^{k}\mathds{1}\Bp{(s^{k', f}_{h+1}, s_{h}^{k', f}, n^f(\bm{a}^{k'}_h))=(s'^f, s^f, n)}}{\Nhkf(s^f, n)\vee 1}.
\end{align*}
Then, the estimators for the reward function and transition kernel can be defined as
\begin{equation}
    \label{equ:r_hat and P_hat}
    \hat{r}^{k}_{h, i}(s, \a)=\sum_{f\in a_i}\hat{r}^{k, f}_h(s^f, n^f(\a)),\quad \widehat{P}^k_h(s'\mid s, \bm{a})=\prod_{f\in\mc{F}}\widehat{P}^{k, f}_h(s'^f\mid s^f, n^f(\bm{a}))
\end{equation}


Then, with $\iota = 2\log(4(m+1)(\sum_{f\in\mc{F}}S^f)T/\delta)$, we define the bonus term to be $b_h^k(s, \bm{a})=b_h^{k, \mathrm{pv}}(s, \bm{a})+b_h^{k, \mathrm{r}}(s, \bm{a})$, which is a sum of transition bonus and reward bonus. In particular, we have
\fontsize{9.5}{9.5}
\begin{align}
     b_{h}^{k, \pv}(s, \a)=& \sum_{f\in\mc{F}}\sqrt{\frac{4H^2F^2S^f\iota}{\Nhkf(s^f, n^f(\a))\vee 1}}+\sum_{f\neq f'}\sqrt{\frac{4H^2F^2\Sp{S^fS^{f'}\iota}^2}{\Nhkf(s^f, n^f(\a))N_h^{k, f'}(s^{f'}, n^{f'}(\a))\vee 1}},\label{equ:pv_bonus}\\
	 b_{h}^{k, \reward}(s, \a)=&\sum_{f\in\mc{F}}\sqrt{\frac{\iota}{\Nhkf(s^f, n^f(\a))\vee 1}}.\label{equ:reward_bonus_1}
\end{align}
\normalsize

For convenience, we define $(\widehat{\P}^k_hV)(s, \a)=\E_{s'\sim \widehat{P}^k_h(\cdot\mid s, \a)}\Mp{V(s')}$ with value function $V:\S\mapsto\R$.

\begin{algorithm}[!t]
    \caption{Nash-VI for IMCGs}
    \label{algo:nash_vi}
    \begin{algorithmic}[1]
    \STATE {\bf Input:} $\epsilon$, accuracy parameter for Nash equilibrium computation
    \STATE {\bf Initialize:} $\overline{V}^k_{H+1, i}(s)=0$ for all $(i, k, s)\in[m]\times[K]\times\mc{S}$
    \FOR{episode $k=1, \dots, K$}
        \FOR{step $h=H, H-1, \dots, 1$}
            \FOR{player $i=1, \dots, m$}
                \STATE $\overline{Q}^k_{h, i}(s, \bm{a})\leftarrow\min\Bp{(\hat{r}^k_{h, i}+\widehat{\mathbb{P}}^k_h\overline{V}^k_{h+1, i}+b^k_h)(s, \bm{a}), HF}$ for all $(s, \bm{a})\in\mc{S}\times\mc{A}$\label{line:compute_nash}
            \ENDFOR
            \FOR{$s\in\mc{S}$}
                \STATE $\pi^k_h(\cdot\mid s)\leftarrow \epsilon\textsc{-Nash}(\overline{Q}^k_{h, 1}(s, \cdot), \cdots, \overline{Q}^k_{h, m}(s, \cdot))$
                \FOR{player $i=1, \dots, m$}
                    \STATE $\overline{V}^k_{h, i}(s)\leftarrow \E_{\bm{a}\sim\pi^k_h}[\overline{Q}^k_{h, i}(s, \bm{a})]$ 
                \ENDFOR
            \ENDFOR
        \ENDFOR
        \FOR{step $h=1, \dots, H$}
            \STATE Take action $\bm{a}^k_h\sim\pi^k_h(\cdot\mid s_h^k)$, observe reward $r^{k, f}_h$ and next state $s_{h+1}^k$
            \STATE Update reward estimator $\hat{r}^{k}_{h, i}$, transition estimator $\widehat{P}^{k}_h$ and bonus term $b_h^k$
        \ENDFOR
    \ENDFOR
    \end{algorithmic}
\end{algorithm}
\begin{remark}
Unlike Algorithm \ref{algo:nash_vi_matrix_game} for congestion game, here, $\oq^k_{h, 1}(s, \cdot), \dots, \oq^k_{h, m}(s, \cdot)$ in line \ref{line:compute_nash} of Algorithm \ref{algo:nash_vi} in general does not form a potential game. Therefore, we cannot use Algorithm \ref{algo:eps NE} and $\epsilon$-\textsc{Nash} is not always computationally efficient. 
\end{remark}

\subsection{Algorithm for Bandit Feedback}
In bandit feedback scenario, since players' observation about state transitions remains unaffected, we only need to modify the reward estimator $\hat{r}^k_{h, i}$ defined in \eqref{equ:r_hat and P_hat} and reward bonus term $b_{h}^{k, \reward}(s, \a)$ defined in \eqref{equ:reward_bonus_1}.

Similar to the congestion game with bandit feedback introduced in Section \ref{sec:centralized bandit}, for IMCGs, we can also write its reward function as $r_{h, i}(s, \a)=\inner{A_i(s, \a), \theta_h}$, where $\theta_h$ is unknown and $A_i(s, \a)$ is a 0-1 vector.

In particular, define $\theta_h\in[0, 1]^d$ with $d=m\sum_{f\in\mc{F}}S^f$ to be the vector such that $\theta_{h, i}=r_h^f(s^f, n)$ for some $f\in\mc{F}$ and $(s^f, n)\in\mc{S}^f\times[m]$. Then, we can similarly build estimator $\hat{r}^k_{h, i}$ through ridge regression as the following.\footnote{For the same reason, we take the regularization parameter in ridge regression to be 1.}

\begin{align}
    &\text{design matrix:}\quad V_h^k=I+\sum_{k'=1}^{k-1}\sum_{i=1}^{m}A_i(s_h^{k'}, \a_h^{k'})A_i(s_h^{k'}, \a_h^{k'})^\top,\label{equ:design_matrix}\\
    &\theta_h\text{ estimator:}\quad \widehat{\theta}_h^k=\Sp{V_h^k}^{-1}\sum_{k'=1}^{k-1}\sum_{i=1}^{m}A_i(s_h^{k'}, \a_h^{k'})r_{h, i}^{k'},\label{equ:theta_hat}\\
    &\text{reward estimator:}\quad \tilde{r}^k_{h, i}(s, \a)=\inner{A_i(s, \a), \widehat{\theta}^k_h},\label{equ:r_tilde}\\
    &\text{reward bonus:}\quad \tilde{b}_{h}^{k, \reward}(s, \a)=\max_{i\in[m]}\Norm{A_i(s, \a)}_{\Sp{V_h^k}^{-1}}\sqrt{\beta_k},\label{equ:reward_bonus_2}
\end{align}
where $\sqrt{\beta_k}=\sqrt{d}+\sqrt{Fd\log\Sp{1+\frac{mkF}{d}}+F\iota}$.
\section{Analysis for Algorithm \ref{algo:nash_vi}}
\label{sec:ne_proof}

\subsection{Bellman Equations for Genera-sum Markov Games}
Before analyzing Algorithm \ref{algo:nash_vi}, we first give a brief review of the Bellman equations for general-sum Markov games. These equations are well-known among the literature \cite{bai2020provable, liu2021sharp, jin2021vlearning}.

\paragraph{Fixed policies.} Given a fixed policy $\pi$, for any $(h, i, s, \bm{a})\in[H]\times[m]\times\mc{S}\times\mc{A}$, it holds that
\begin{equation}
	\label{equ:bellman_fixed}
	Q^{\pi}_{h, i}(s, \bm{a})=(r_{h, i}+\P_hV^\pi_{h+1, i})(s, \bm{a}),\quad V^{\pi}_{h, i}=\E_{\bm{a}'\sim\pi_h(\cdot\mid s)}\Mp{Q^\pi_{h, i}(s, \bm{a}')},
\end{equation}
where $V^{\pi}_{H+1, i}(s)=0$ for any $(i, s)\in[m]\times\mc{S}$.

\paragraph{Best responses.} Given a fixed policy $\pi$, define the best response value functions for player $i$ as $Q^{\dagger, \pi_{-i}}_{h, i}(s, \bm{a})=\max_{\pi_i\in\Delta(\mc{A}_i)}Q^{\pi_i, \pi_{-i}}_{h, i}(s, \bm{a})$ and $V^{\dagger, \pi_{-i}}_{h, i}(s)=\max_{\pi_i\in\Delta(\mc{A}_i)} V^{\pi_i, \pi_{-i}}_{h, i}(s)$. Then, for any $(h, i, s, \bm{a})\in[H]\times[m]\times\mc{S}\times\mc{A}$, it holds that
\begin{equation}
	\label{equ:bellman_best}
	\begin{split}
		&Q^{\dagger, \pi_{-i}}_{h, i}(s, \bm{a})=(r_{h, i}+\P_hV^{\dagger, \pi_{-i}}_{h+1, i})(s, \bm{a}),\\
		&V^{\dagger, \pi_{-i}}_{h, i}(s)=\max_{\nu\in\Delta(\mc{A}_i)}\E_{\bm{a}'\sim(\nu, \pi_{h, -i})(\cdot\mid s)}\Mp{Q^{\dagger, \pi_{-i}}_{h, i}(s, \bm{a}')},
	\end{split}
\end{equation}
where $V^{\dagger, \pi_{-i}}_{H+1, i}(s)=0$ for any $(i, s)\in[m]\times\mc{S}$.

\subsection{Proof of Theorem \ref{theo:ne}}
Recall that the update rule in Algorithm \ref{algo:nash_vi} is
$$\overline{Q}^k_{h, i}(s, \bm{a})\leftarrow\min\Bp{(\hat{r}^k_{h, i}+\widehat{\mathbb{P}}^k_h\overline{V}^k_{h+1, i}+b^k_h)(s, \bm{a}), HF},\quad \overline{V}^k_{h, i}(s)\leftarrow \E_{\bm{a}\sim\pi^k_h}[\overline{Q}^k_{h, i}(s, \bm{a})].$$
Similar to the proof of Theorem \ref{theo:ne_mg}, we define auxiliary value functions
\begin{equation}
    \label{equ:value_underline}
    \underline{Q}^k_{h, i}(s, \bm{a})\leftarrow\max\Bp{(\hat{r}^k_{h, i}+\widehat{\mathbb{P}}^k_h\underline{V}^k_{h+1, i}-b^k_h)(s, \bm{a}), 0},\quad \underline{V}^k_{h, i}(s)\leftarrow \E_{\bm{a}\sim\pi^k_h}[\underline{Q}^k_{h, i}(s, \bm{a})].
\end{equation}

We now begin to prove the first part of Theorem \ref{theo:ne}.

\begin{proof}[Proof of Theorem \ref{theo:ne}]
	\textbf{Step 1.} We first consider the setting of semi-bandit feedback. Assume the result in Lemma \ref{lmm:optimistic} holds since it is a high-probability event. Then, for any $(k, s)\in[K]\times\S$, it holds that
	$$\max_{i\in[m]}\Sp{V^{\dagger, \pikni}_{1, i}-V^{\pik}_{1, i}}(s)\leq \max_{i\in[m]}\Sp{\ov^k_{1, i}-\uv^k_{1, i}}(s) + H\epsilon.$$
	By the update rules in Algorithm \ref{algo:nash_vi}, we can notice the following recursive relations
	\begin{align*}
		&(\oq^k_{h, i}-\uq^k_{h, i})(s, \a)\leq\min\Bp{\hatphk(\ov^k_{h+1, i}-\uv^k_{h+1, i})(s, \a)+2b_h^k(s, \a), HF},\\
		&(\ov^k_{h, i}-\uv^k_{h, i})(s)=\E_{\a'\sim\pi^k_h(\cdot\mid s)}\Mp{(\oq^k_{h, i}-\uq^k_{h, i})(s, \a')}.
	\end{align*}
	Thus, we define $\tv^k_{H+1}(s)=0$ for any $s\in\S$ and $\tq^k_h$, $\tv^k_h$ recursively as
	\fontsize{9.5}{9.5}
	\begin{equation}
		\label{equ:tilde_qv}
		\tq^k_h(s, \a)=\min\Bp{(\hatphk\tv^k_{h+1})(s, \a)+2b_h^k(s, \a), HF},\quad \tv^k_h(s)=\E_{\a'\sim\pi^k_h(\cdot\mid s)}\Mp{\tq^k_h(s, \a')}.
	\end{equation}
	\normalsize
	Obviously, we have $\max_{i\in[m]}(\ov^k_{h, i}-\uv^k_{h, i})(s)\leq \tv^k_{H+1}$. Then, by inductively assuming the same relation holds for $h+1$, we can have
	\begin{align*}
		\max_{i\in[m]}(\oq^k_{h, i}-\uq^k_{h, i})(s, \a)=&\min\Bp{\max_{i\in[m]}\hatphk(\ov^k_{h+1, i}-\uv^k_{h+1, i})(s, \a)+2b_h^k(s, \a), HF}\\
		\leq & \min\Bp{(\hatphk\tv^k_{h+1})(s, \a)+2b_h^k(s, \a), HF}\\
		= & \tq^k_h(s, \a),\\
		\max_{i\in[m]}(\ov^k_{h, i}-\uv^k_{h, i})(s)\leq &\E_{\a'\sim\pi^k_h(\cdot\mid s)}\Mp{\max_{i\in[m]}(\oq^k_{h, i}-\uq^k_{h, i})(s, \a')}\\
		\leq & \E_{\a'\sim\pi^k_h(\cdot\mid s)}\Mp{\tq^k_h(s, \a')}\\
		= & \tv^k_h(s).
	\end{align*}
	Therefore, by induction, for any $h\in[H]$, we have
	$$\max_{i\in[m]}(\oq^k_{h, i}-\uq^k_{h, i})(s, \a)\leq\tq^k_h(s, \a), \quad\max_{i\in[m]}(\ov^k_{h, i}-\uv^k_{h, i})(s)\leq\tv^k_h(s).$$
	As a result, we have
	$$\text{Nash-Regret}(K)=\sum_{k=1}^{K}\max_{i\in[m]}\Sp{V^{\dagger, \pikni}_{1, i}-V^{\pik}_{1, i}}(s)\leq\sum_{k=1}^{K}\tv^k_1(s_1)+ HK\epsilon.$$
	
	\textbf{Step 2, Semi-bandit Feedback.} We define the martingale difference sequences
	\begin{align*}
		\M_h^k(\tq)&=\E_{\a'\sim\pi^k_h(\cdot\mid \shk)}\Mp{\tq^k_h(\shk, \a')}-\tq^k_h(\shk, \ahk),\\
		\M_h^k(\tv)&=(\P_h\tv^k_{h+1})(\shk, \ahk)-\tv^k_{h+1}(s_{h+1}^k).
	\end{align*}
	It is not hard to check that $\M_h^k(\tq)$ and $\M_h^k(\tv)$ are both indeed martingale difference sequences with respect to the history till episode $k$ and time step $h$.
	
	With these definitions, we can now decompose the regret bound as
	\begin{align*}
		\tv^k_h(\shk)=& \E_{\a'\sim\pi^k_h(\cdot\mid \shk)}\Mp{\tq^k_h(\shk, \a')}\tag{By \eqref{equ:tilde_qv}}\\
		=&\M_h^k(\tq)+\tq^k_h(\shk, \ahk)\\
		\leq &\M_h^k(\tq) +  2\bhk(\shk, \ahk) + (\hatphk\tv^k_{h+1})(\shk, \ahk)\tag{By \eqref{equ:tilde_qv}}\\
		\overset{\text{(i)}}{\leq} & \M_h^k(\tq) + 3\bhk(\shk, \ahk) + (\P_h\tv^k_{h+1})(\shk, \ahk)\\
		= & \M_h^k(\tq)+\M_h^k(\tv)+3\bhk(\shk, \ahk)+\tv^k_{h+1}(s^k_{h+1})
	\end{align*}
	The above inequality (i) holds by applying Lemma \ref{lmm:optimistic} and the fact $\tv_h^k(s)\leq HF$, which comes from the definition in \eqref{equ:tilde_qv}. Then, by unrolling this relation from $h=1$ to $h=H$ and noticing $\tv^k_{H+1}=\bm{0}$, we can have
	\begin{align}
		&\text{Nash-Regret}(K)\leq \sum_{k=1}^{K}\tv^k_1(s_1)+HK\epsilon\nonumber\\
		\leq & \sum_{k=1}^{K}\sum_{h=1}^{H}\Sp{\M_h^k(\tq)+\M_h^k(\tv)+3\bhk(\shk, \ahk)}+HK\epsilon\label{equ:before_final_regret}\\
		\leq & \tmco\Sp{HF\sqrt{T}}+3\sum_{k=1}^{K}\sum_{h=1}^{H}\bhk(\shk,\ahk)\tag{By Azuma-Hoeffding inequality and taking $\epsilon=1/T$.}\\
		\leq & \tmco\Sp{HF\sqrt{T}} +  6HF\sum_{f\in\mc{F}}\sum_{k=1}^{K}\sum_{h=1}^{H}\Sp{\sqrt{\frac{S^f\iota}{\Nhkf(\shkf, n^f(\ahk))\vee 1}}+\sqrt{\frac{\iota}{\Nhkf(\shkf, n^f(\ahk))\vee 1}}}\nonumber\\
		&\qquad +6HF\sum_{f\neq f'}S^fS^{f'}\sum_{k=1}^{K}\sum_{h=1}^{H}\sqrt{\frac{\iota^2}{\Sp{\Nhkf(\shkf, n^f(\ahk))N_h^{k, f'}(s_h^{k, f'}, n^{f'}(\a_h^{k, f'}))}\vee 1}}\nonumber\\
		\leq & \tmco\Sp{HF\sqrt{T}}+\tmco\Sp{\sum_{f\in\mc{F}}HFS^f\sqrt{mHT}}+\tmco\Sp{m^2H^2F\sum_{f\neq f'}\Sp{S^fS^{f'}}^2}\tag{By Lemma \ref{lmm:sum_sqrtn} and \ref{lmm:sum_sqrtnn}}\\
		\leq & \tmco\Sp{\sum_{f\in\mc{F}}FS^f\sqrt{mH^3T}}+\tmco\Sp{m^2H^2F\sum_{f\neq f'}\Sp{S^fS^{f'}}^2}.\nonumber
	\end{align}
	
	\textbf{Step 3, Bandit Feedback.} In the setting of bandit feedback, we only modify the reward estimator $\tilde{r}^{k}_{h, i}$ and its corresponding bonus term $\tilde{b}^{k, \reward}_h$. Thus, by going through the proof of Lemma \ref{lmm:optimistic}, we can notice that to have the same result for bandit feedback, it suffice to use Lemma \ref{lmm:least_square_bound} to show that the reward estimation error is bounded by the reward bonus term.
	
	Then, by the inequality \eqref{equ:before_final_regret}, we can notice that to achieve the final Nash-regret bound, we only need to bound the summation $\sum_{k=1}^{K}\sum_{h=1}^{H}\tilde{b}^{k, \reward}_h(\shk, \ahk)$, which is
	\begin{align*}
		\sum_{k=1}^{K}\sum_{h=1}^{H}\tilde{b}^{k, \reward}_h(\shk, \ahk)\leq & \sqrt{\beta_K}\sum_{k=1}^{K}\sum_{h=1}^{H}\max_{i\in[m]}\Norm{A_i(\shk, \ahk)}_{\Sp{V_h^k}^{-1}}\tag{By definition of $\tilde{b}^{k, \reward}_h$ in \eqref{equ:reward_bonus_2}.}\\
		\leq & \Sp{\sqrt{d}+\sqrt{Fd\log\Sp{1+\frac{mKF}{d}}+F\iota}}\tmco\Sp{H\sqrt{dFK}}\tag{By definition of $\beta_k$ and Lemma \ref{lmm:elliptical_potential}.}\\
		\leq & \tmco\Sp{d\sqrt{HF^2T}}\\
		= & \tmco\Sp{\sum_{f\in\mc{F}}mS^f\sqrt{HF^2T}}.\tag{Since $d=m\sum_{f\in\mc{F}}S^f$.}
	\end{align*}
	Therefore, by \eqref{equ:before_final_regret}, with $\epsilon=1/T$, under bandit feedback, we have
	\begin{align*}
		&\text{Nash-Regret}(K)\\
		\leq & \sum_{k=1}^{K}\sum_{h=1}^{H}\Sp{\M_h^k(\tq)+\M_h^k(\tv)+3\bhk(\shk, \ahk)}\\
		\leq & \tmco\Sp{\sum_{f\in\mc{F}}FS^f\sqrt{mH^3T}}+\tmco\Sp{m^2H^2F\sum_{f\neq f'}\Sp{S^fS^{f'}}^2}+\sum_{k=1}^{K}\sum_{h=1}^{H}\tilde{b}^{k, \reward}_h(\shk, \ahk)\\
		\leq & \tmco\Sp{\sum_{f\in\mc{F}}\Sp{\sqrt{mH^3}F+m\sqrt{HF^2}}S^f\sqrt{T}}+\tmco\Sp{m^2H^2F\sum_{f\neq f'}\Sp{S^fS^{f'}}^2}.
	\end{align*}
\end{proof}

\subsection{Lemmas for Semi-bandit Feedback}
The following two lemmas shows that our value function estimations are indeed optimistic.

\begin{lemma}
	\label{lmm:pv_bound}
	With probability at least $1-\delta$, simultaneously for arbitrary value function $V\in[0, HF]^{\mc{S}}$ and any tuple $(k, h, s, \a)$, it holds that $|(\widehat{\mathbb{P}}^k_h-\mathbb{P}_h)V(s, \bm{a})|\leq b_h^{k, \pv}(s, \a)$, where $b_h^{k, \pv}(s, \a)$ is defined in \eqref{equ:pv_bonus}.
\end{lemma}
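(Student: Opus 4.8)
The plan is to reduce the bound on $|(\widehat{\mathbb{P}}^k_h - \mathbb{P}_h)V(s,\a)|$ to per-facility concentration of the factor kernels, exploiting that both $\widehat{\mathbb{P}}^k_h = \prod_{f\in\F}\widehat P^{k,f}_h$ and $\mathbb{P}_h = \prod_{f\in\F}P^f_h$ factorize as products over $\F$. Write $\delta^f := \widehat P^{k,f}_h(\cdot\mid s^f, n^f(\a)) - P^f_h(\cdot\mid s^f, n^f(\a))$ for the per-factor estimation error at the relevant current state and load. The crucial structural observation is that the final bound will factor through $\|V\|_\infty \le HF$, so that \emph{uniformity over all} $V\in[0,HF]^{\mc{S}}$ comes for free: the high-probability event concerns only the factor kernels and does not mention $V$, and once it holds the stated inequality follows deterministically for every $V$ and every tuple $(k,h,s,\a)$.

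First I would establish two uniform concentration bounds for each factor. Conditioned on the count, the next factor-states observed at a fixed triple $(h, s^f, n)$ are i.i.d.\ draws from $P^f_h(\cdot\mid s^f, n)$, so a union bound over $(h,s^f,n)$ and over next states, together with a peeling argument over the data-dependent value of $\Nhkf(s^f,n)$ --- all absorbed into $\iota = 2\log(4(m+1)(\sum_{f}S^f)T/\delta)$ --- yields, with probability at least $1-\delta$, simultaneously for all $(k,h,s^f,n)$: a Bernstein-based $L_1$ bound $\Norm{\delta^f}_1 \le 2\sqrt{S^f\iota/(\Nhkf(s^f,n)\vee 1)}$, obtained from the per-coordinate Bernstein inequality followed by Cauchy--Schwarz (using $\sum_{x\in\mc{S}^f}\sqrt{P^f_h(x\mid s^f,n)}\le\sqrt{S^f}$); and a cruder coordinate-wise bound $\Norm{\delta^f}_1 \le S^f\sqrt{\iota/(\Nhkf(s^f,n)\vee 1)}$ from Hoeffding. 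The Bernstein bound is the tight one used for the leading term, whereas the crude bound is needed only for the lower-order cross term, so optimizing its $S^f$ dependence is unnecessary.

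Next I would expand the product difference by a telescoping argument. Ordering the facilities $1,\dots,F$ and interpolating between $\mathbb{P}_h$ and $\widehat{\mathbb{P}}^k_h$ one factor at a time gives $(\widehat{\mathbb{P}}^k_h - \mathbb{P}_h)V = \sum_f (\mu_f - \mu_{f-1})V$, where $(\mu_f - \mu_{f-1})V$ isolates $\delta^f$ against a product reference that is empirical in coordinates $<f$ and true in coordinates $>f$. For each such term I would convert the empirical reference in the coordinates $g<f$ to the true one by a \emph{second} telescoping, which peels off exactly one additional error factor $\delta^g$ and leaves every remaining coordinate as a genuine probability measure (empirical or true). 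This produces precisely
\begin{equation*}
(\widehat{\mathbb{P}}^k_h - \mathbb{P}_h)V = \sum_{f}(\text{first-order term in }\delta^f\text{, true reference elsewhere}) + \sum_{g<f}(\text{cross term in }\delta^g,\delta^f),
\end{equation*}
with \emph{no} higher-order remainder --- the key being that telescoping, rather than fully multiplying out the product, keeps at most two error factors per term. Since the remaining coordinates always form a probability measure and $\|V\|_\infty\le HF$, each first-order term is at most $\Norm{\delta^f}_1 HF$ and each cross term is at most $\Norm{\delta^g}_1\Norm{\delta^f}_1 HF$.

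Finally, substituting the Bernstein bound into the first-order sum gives $\sum_f 2HF\sqrt{S^f\iota/(\Nhkf\vee1)}$, matching the first sum in $b_h^{k,\pv}$, and substituting the crude bound into the cross sum gives terms of order $HF\,S^gS^f\iota/\sqrt{(N^{k,g}_h\vee1)(N^{k,f}_h\vee1)}$ over unordered pairs, which is dominated by the second sum of $b_h^{k,\pv}$ (the bonus ranges over all ordered pairs $f\neq f'$, covering each unordered pair twice, and carries a matching constant). I expect the main obstacle to be the telescoping bookkeeping: one must verify that the hybrid conversion of empirical reference measures to true ones generates exactly the pairwise cross terms and never a product of three or more error factors, and must track which leftover coordinates are empirical versus true so that they may legitimately be bounded as probability measures. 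A secondary technical point is the peeling/union-bound argument that makes the per-factor $L_1$ concentration hold uniformly over the random, adaptively chosen counts $\Nhkf$.
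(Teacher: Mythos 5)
Your proposal is correct and takes essentially the same route as the paper: the paper obtains exactly your decomposition into first-order terms (error in one factor against true reference measures elsewhere) plus pairwise $\mathrm{errp}^{k,f}_h\cdot\mathrm{errp}^{k,f'}_h$ cross terms by citing Lemma E.1 of \citet{chen2020efficient}, bounds each first-order term by $2HF\sqrt{S^f\iota/(\Nhkf\vee 1)}$ via the uniform-over-value-functions concentration of Lemma 12 in \citet{bai2020provable} (equivalent to your Bernstein-plus-Cauchy--Schwarz $L_1$ bound), and controls the cross terms with the same Hoeffding-based $S^f\sqrt{\iota/(\Nhkf\vee 1)}$ estimate. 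The only difference is that you re-derive the two cited ingredients (the telescoping product decomposition and the per-factor $L_1$ concentration) from scratch rather than invoking them.
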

\begin{proof}
	We define $\P_h^f$ to be the operator such that for some value function $V^f:\S^f\mapsto\R$, we have $(\P_h^fV^f)(s, \a)=\E_{s'^f\sim P_h^f(\cdot\mid s^f, n^f(\a))}\Mp{V^f(s'^f)}$. We also define $\widehat{\P}_h^{k, f}$ similarly. Then, by definition of our transition kernel, for operators $\P_h$ and $\widehat{\P}_h^k$, it holds that
	$$\P_h=\prod_{f\in\mc{F}}\P_h^f\quad\text{and}\quad\widehat{\P}_h^k=\prod_{f\in\mc{F}}\widehat{\P}_h^{k, f}.$$
	Therefore, by Lemma E.1 in \cite{chen2020efficient}, since $\Norm{V}_{\infty}\leq HF$, we have
	\begin{equation}
		\label{equ:lmm_pv_bound_1}
		\begin{split}
			|(\widehat{\mathbb{P}}^k_h-\mathbb{P}_h)V(s, \bm{a})|\leq&\sum_{f\in\mc{F}}\abs{(\widehat{\P}_{h}^{k, f}-\P_h^f)\Sp{\prod_{f'\neq f}\P_h^{f'}}V(s, \a)}\\
			&\qquad+2HF\sum_{f\neq f'}\mathrm{errp}^{k, f}_h(s, \a)\cdot\mathrm{errp}^{k, f'}_h(s, \a),
		\end{split}
	\end{equation}
	where $\mathrm{errp}^{k, f}_h(s, \a)=\|\widehat{P}^{k, f}_h(\cdot\mid s^f, n^f(\a))-P_h^f(\cdot\mid s^f, n^f(\a))\|_1$.
	
	Now, notice that $\Sp{\prod_{f'\neq f}\P_h^{f'}}V(s, \a)$ can be seen as some value function from $\S^f$ to $[0, HF]$. Therefore, by Lemma 12 in \cite{bai2020provable}, with probability at least $1-\frac{\delta}{2}$, simultaneously for any $V$ and $(k, h, s, \a)$, it holds that
	$$\abs{(\widehat{\P}_{h}^{k, f}-\P_h^f)\Sp{\prod_{f'\neq f}\P_h^{f'}}V(s, \a)}\leq 2HF\sqrt{\frac{S^f\iota}{\Nhkf(s^f, n^f(\a))\vee 1}},$$
	where $\iota = 2\log(4(m+1)(\sum_{f\in\mc{F}}S^f)T/\delta)$. Meanwhile, by standard Hoeffding's inequality and union bound, with probability at least $1-\frac{\delta}{2}$, simultaneously for any $(k, h, s, \a)$, it holds that
	$$\mathrm{errp}^{k, f}_h\leq S^f\sqrt{\frac{\iota}{\Nhkf(s^f, n^f(\a))\vee 1}}.$$
	Finally, by plugging above two concentration inequalities back into \eqref{equ:lmm_pv_bound_1}, we can have 
	$$|(\widehat{\mathbb{P}}^k_h-\mathbb{P}_h)V(s, \bm{a})|\leq b_h^{k, \pv}(s, \a).$$
\end{proof}

\begin{lemma}
	\label{lmm:optimistic}
	With probability at least $1-\delta$, for any $(k, h, i, s, \bm{a})\in[K]\times[H]\times[m]\times\S\times\A$, it holds that
	\begin{align}
		\oq^k_{h, i}(s, \bm{a})\geq Q^{\dagger, \pikni}_{h, i}(s, \bm{a})-(H-h)\epsilon,&\quad \uq^k_{h, i}(s, \bm{a})\leq Q^{\pik}_{h, i}(s, \bm{a}),\label{equ:q_optim}\\
		\ov^k_{h, i}(s)\geq V^{\dagger, \pikni}_{h, i}(s)-(H-h+1)\epsilon,&\quad \uv^k_{h, i}(s)\leq V^{\pik}_{h, i}(s.),\label{equ:v_optim}
	\end{align}
	where $\uq^k_{h, k}$ and $\uv^k_{h, i}$ are defined in \eqref{equ:value_underline}.
\end{lemma}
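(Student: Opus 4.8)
The plan is to prove all four inequalities simultaneously by backward induction on $h$, running from $h = H+1$ down to $h = 1$, while conditioning on the high-probability events supplied by the reward concentration bound and by Lemma \ref{lmm:pv_bound}; a union bound over these two events (after rescaling the constants inside $\iota$) yields the overall $1-\delta$ guarantee. The base case $h = H+1$ is immediate, since $\ov^k_{H+1, i} = \uv^k_{H+1, i} = 0$ by initialization and both $V^{\dagger, \pikni}_{H+1, i}$ and $V^{\pik}_{H+1, i}$ vanish, so \eqref{equ:v_optim} holds with zero slack.

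For the inductive step at level $h$, I would first establish the two $Q$-inequalities \eqref{equ:q_optim} assuming the two $V$-inequalities \eqref{equ:v_optim} hold at level $h+1$. For the upper bound I chain three facts: (i) $\hat{r}^k_{h, i} + b^{k, \reward}_h \geq r_{h, i}$ from reward concentration (for semi-bandit this is Hoeffding applied facility-by-facility, noting that the reward error sums only over $f \in a_i$ whereas the bonus sums over all of $\F$, so the bonus dominates; for bandit it is Lemma \ref{lmm:least_square_bound}); (ii) $\hatphk \ov^k_{h+1, i} + b^{k, \pv}_h \geq \P_h \ov^k_{h+1, i}$ from Lemma \ref{lmm:pv_bound}; and (iii) the inductive hypothesis $\ov^k_{h+1, i} \geq V^{\dagger, \pikni}_{h+1, i} - (H-h)\epsilon$, which after applying the monotone operator $\P_h$ carries the accumulated error forward. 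Summing these and invoking the best-response Bellman equation \eqref{equ:bellman_best} gives $(\hat{r}^k_{h, i} + \hatphk\ov^k_{h+1, i} + b^k_h)(s, \a) \geq Q^{\dagger, \pikni}_{h, i}(s, \a) - (H-h)\epsilon$; since $Q^{\dagger, \pikni}_{h, i} \leq HF$, both arguments of $\min\{\cdot, HF\}$ exceed the target, so the clipping preserves the inequality. The pessimistic bound $\uq^k_{h, i} \leq Q^{\pik}_{h, i}$ follows by the mirror-image argument using the other side of each concentration inequality together with \eqref{equ:bellman_fixed}, and the fact that $Q^{\pik}_{h, i} \geq 0$ makes the $\max\{\cdot, 0\}$ clipping harmless.

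With the $Q$-inequalities at level $h$ in hand, I would then derive the $V$-inequalities at level $h$. The lower bound is the easy direction: by definition $\uv^k_{h, i}(s) = \E_{\a \sim \pi^k_h}[\uq^k_{h, i}(s, \a)] \leq \E_{\a \sim \pi^k_h}[Q^{\pik}_{h, i}(s, \a)] = V^{\pik}_{h, i}(s)$, the last equality being \eqref{equ:bellman_fixed}. The upper bound is where the extra $\epsilon$ enters: since $\pi^k_h(\cdot\mid s)$ is an $\epsilon$-approximate Nash equilibrium of the matrix game with payoffs $\oq^k_{h, 1}(s, \cdot), \dots, \oq^k_{h, m}(s, \cdot)$, we have $\ov^k_{h, i}(s) \geq \max_{\nu\in\Delta(\A_i)}\E_{\a\sim(\nu, \pi^k_{h, -i})}[\oq^k_{h, i}(s, \a)] - \epsilon$; applying the $Q$-upper bound inside the $\max$ and then the best-response Bellman equation \eqref{equ:bellman_best} yields $\ov^k_{h, i}(s) \geq V^{\dagger, \pikni}_{h, i}(s) - (H-h)\epsilon - \epsilon = V^{\dagger, \pikni}_{h, i}(s) - (H-h+1)\epsilon$, which closes the induction.

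The conceptually delicate part is the bookkeeping of the accumulated $\epsilon$: each application of the $\epsilon$-Nash step at the $V$-level injects one fresh $\epsilon$, so the best-response value gap grows by exactly one $\epsilon$ per layer, producing the $(H-h+1)\epsilon$ offset for $V$ and the $(H-h)\epsilon$ offset for $Q$. The only genuine care required is ensuring the $\min$ and $\max$ truncations interact correctly with these shifted targets so that neither one destroys the one-sided bounds, which works precisely because $0 \leq Q^{\pik}_{h, i}$ and $Q^{\dagger, \pikni}_{h, i} \leq HF$. Finally, the extension to bandit feedback is essentially free: once Lemma \ref{lmm:least_square_bound} replaces the facility-wise Hoeffding bound, the transition estimation and the entire inductive scaffold are unchanged.
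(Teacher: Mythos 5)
Your proof is correct and follows essentially the same backward-induction argument as the paper's: optimism from reward and transition concentration plus the bonus terms, the $\epsilon$-Nash step contributing one fresh $\epsilon$ per layer (giving $(H-h)\epsilon$ for $Q$ and $(H-h+1)\epsilon$ for $V$), and the truncations handled via $0\leq Q^{\pik}_{h,i}$ and $Q^{\dagger,\pikni}_{h,i}\leq HF$. The only cosmetic difference is that you apply the transition concentration to $\ov^k_{h+1,i}$ while the paper applies it to $V^{\dagger,\pikni}_{h+1,i}$ and pushes the inductive hypothesis through $\hatphk$ instead of $\P_h$; both decompositions are valid because Lemma \ref{lmm:pv_bound} holds uniformly over all value functions in $[0,HF]^{\S}$.
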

\begin{proof}
	The proof is adapted from \cite{liu2021sharp} and goes by induction from $h=H+1$ to $h=1$. We can see that inequalities \eqref{equ:v_optim} obviously hold when $h=H+1$ since by definition we have $\ov^k_{H+1, i}(s)=\uv^k_{H+1, i}(s)=0$ for any $(k, i, s)$. Now, suppose inequalities \eqref{equ:v_optim} hold for $h+1$. Then, if we have $\oq^k_{h, i}(s, \a)=HF$, it holds trivially that $\oq^k_{h, i}(s, \a)\geq Q^{\dagger, \pikni}_{h, i}(s, \a)$. Otherwise, by Bellman equations \eqref{equ:bellman_best} and update rule in Algorithm \ref{algo:nash_vi}, we have
	\begin{align*}
		&\oq^k_{h, i}(s, \a)-Q^{\dagger, \pikni}_{h, i}(s, \a)\\
		=& (\hat{r}^k_{h, i}-r_{h, i})(s, \a) + (\hatphk\ov^k_{h+1, i})(s, \a) - (\P_h V^{\dagger, \pikni}_{h+1, i})(s, \a)+b_h^k(s, \a)\\
		=&\underbrace{(\hat{r}^k_{h, i}-r_{h, i})(s, \a)}_{\text{(A)}} + \underbrace{\hatphk(\ov^k_{h+1, i}-V^{\dagger, \pikni}_{h+1, i})(s, \a)}_{\text{(B)}} + \underbrace{((\hatphk-\P_h)V^{\dagger, \pikni}_{h+1, i})(s, \a)}_{\text{(C)}}+b_h^k(s, \a).
	\end{align*}
	Now, recall that $b_h^k(s, \bm{a})=b_h^{k, \mathrm{pv}}(s, \bm{a})+b_h^{k, \mathrm{r}}(s, \bm{a})$. By reward definition in congestion game, we have
	$$(\hat{r}^k_{h, i}-r_{h, i})(s, \a)=\sum_{f\in a_i}(\hat{r}^{k, f}_{h, i}(s^f, n^f(\a))-r^f_{h, i}(s^f, n^f(\a))).$$
	Thus, by using standard Hoefding's inequality and union bound, we can immediately have $\abs{\text{(A)}}\leq b_h^{k, \reward}(s, \a)$. Then, since $V^{\dagger, \pikni}_{h, i}\in[0, HF]^{\mc{S}}$, by Lemma \ref{lmm:pv_bound}, we have $\abs{\text{(C)}}\leq b_h^{k, \pv}(s, \a)$. That is, we have $\text{(A)}+\text{(C)}+b_h^k(s, \a)\geq 0$.
	
	Then, by inductive hypothesis, we know that $\ov^k_{h+1, i}\geq V^{\dagger, \pikni}_{h+1, i}-(H-h)\epsilon$, which implies $\text{(B)}\geq 0$. Therefore, we have $\oq^k_{h, i}(s, \a)-Q^{\dagger, \pikni}_{h, i}(s, \a)\geq -(H-h)\epsilon$.
	
	For $\ov^k_{h, i}$ and $V^{\dagger, \pikni}_{h, i}$, we notice that in Algorithm \ref{algo:nash_vi}, $\pik$ is computed as the $\epsilon$-approximate Nash equilibrium of $(\oq^k_{h, 1}, \dots, \oq^k_{h, m})$. Therefore, it holds that 
	$$\ov^k_{h, i}(s)=\E_{\a\sim\pi^k_h(\cdot\mid s)}\Mp{\oq^k_{h, i}(s, \a)}\geq \max_{\nu\in\Delta(\mc{A}_i)}\E_{\bm{a}'\sim(\nu, \pi^k_{h, -i})(\cdot\mid s)}\Mp{\oq^k_{h, i}(s, \a')}-\epsilon.$$
	By Bellman equations \eqref{equ:bellman_best}, we also have
	$$V^{\dagger, \pikni}_{h, i}(s)=\max_{\nu\in\Delta(\mc{A}_i)}\E_{\bm{a}'\sim(\nu, \pi^k_{h, -i})(\cdot\mid s)}\Mp{Q^{\dagger, \pikni}_{h, i}(s, \a')}.$$
	Since $\oq^k_{h, i}(s, \a)-Q^{\dagger, \pikni}_{h, i}(s, \a)\geq -(H-h)\epsilon$, we immediately have $\ov^k_{h, i}(s)-V^{\dagger, \pikni}_{h, i}(s)\geq -(H-h+1)\epsilon$. Thus, by induction, we have that $	\oq^k_{h, i}(s, \bm{a})\geq Q^{\dagger, \pikni}_{h, i}(s, \bm{a})-(H-h)\epsilon$ and $\ov^k_{h, i}(s)\geq V^{\dagger, \pikni}_{h, i}(s)-(H-h+1)\epsilon$ for all $h\in[H]$.
	
	The inequalities for $\uv^k_{h, i}$ and $\uq^k_{h, i}$ can be proved similarly.
\end{proof}

\subsection{Additional Lemmas for Bandit Feedback}
The following lemma shows that the reward estimation error can be bounded by the reward bonus term.
\begin{lemma}
	\label{lmm:least_square_bound}
	With probability at least $1-\delta$, simultaneously for all $(i, k, h, s, \a)$, it holds that $|(\tilde{r}^k_{h, i}-r_{h, i})(s, \a)|\leq\tilde{b}^{k, \reward}_h(s, \a)$, where $\tilde{r}^k_{h, i}$ and $\tilde{b}^{k, \reward}_h$ are defined in \eqref{equ:r_tilde} and \eqref{equ:reward_bonus_2}.
\end{lemma}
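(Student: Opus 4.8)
The plan is to transcribe the proof of Lemma~\ref{lmm:least_square_bound_mg} with two bookkeeping changes: the feature dimension grows from $\tilde d=mF$ to $d=m\sum_{f\in\mc{F}}S^f$, and every quantity now carries a timestep index $h$, the data at each timestep being collected into its own design matrix $V_h^k$ and regressed separately. Fixing $h$, I would first use the identities $\tilde r^k_{h,i}(s,\a)=\inner{A_i(s,\a),\widehat\theta^k_h}$ and $r_{h,i}(s,\a)=\inner{A_i(s,\a),\theta_h}$ together with Cauchy--Schwarz in the $V_h^k$-geometry to write
\[
\abs{(\tilde r^k_{h,i}-r_{h,i})(s,\a)}=\abs{\inner{A_i(s,\a),\widehat\theta^k_h-\theta_h}}\le\Norm{A_i(s,\a)}_{\Sp{V_h^k}^{-1}}\Norm{\widehat\theta^k_h-\theta_h}_{V_h^k},
\]
exactly as in the matrix-game case.

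Second, I would bound the self-normalized term $\Norm{\widehat\theta^k_h-\theta_h}_{V_h^k}$ using the confidence ellipsoid for ridge regression (Theorem~20.5 in \cite{lattimore2020bandit}). Each centered reward $r_{h,i}^{k'}-\inner{A_i(s_h^{k'},\a_h^{k'}),\theta_h}$ is a sum of at most $F$ independent facility rewards in $[0,1]$ and is therefore $\sqrt F$-subGaussian, so with probability at least $1-\delta$ and uniformly over $k$ the bound reads $\Norm{\widehat\theta^k_h-\theta_h}_{V_h^k}\le\Norm{\theta_h}_2+\sqrt{F\log\det\Sp{V_h^k}+F\iota}$. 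Since $\theta_h\in[0,1]^d$ gives $\Norm{\theta_h}_2\le\sqrt d$, and Lemma~\ref{lmm:elliptical_potential_variant} applies verbatim (its hypothesis $\Norm{A_i(s,\a)}_2^2\le F$ holds because $A_i(s,\a)$ is a $0$-$1$ vector with at most $F$ ones) to give $\det\Sp{V_h^k}\le\Sp{1+\frac{mkF}{d}}^d$, I obtain $\Norm{\widehat\theta^k_h-\theta_h}_{V_h^k}\le\sqrt d+\sqrt{Fd\log\Sp{1+\frac{mkF}{d}}+F\iota}=\sqrt{\beta_k}$. Maximizing the remaining factor $\Norm{A_i(s,\a)}_{\Sp{V_h^k}^{-1}}$ over $i\in[m]$ reproduces exactly $\tilde b^{k,\reward}_h(s,\a)$ from \eqref{equ:reward_bonus_2}, and a union bound over the $H$ timesteps is absorbed into $\iota=2\log(4(m+1)(\sum_{f\in\mc{F}}S^f)T/\delta)$.

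The step I expect to require the most care is justifying the self-normalized concentration for the aggregated noise indexed by the pair $(k',i)$: within a single round a facility reward $r_h^f$ is shared by all players using $f$, so the per-player noises are correlated across $i$, and one must check that, under an appropriate filtration (conditioning on the state and joint action of each round), the increments still form a martingale-difference sequence with the conditional $\sqrt F$-subGaussian control required by Theorem~20.5. This is the same subtlety already implicit in Lemma~\ref{lmm:least_square_bound_mg}; once it is granted, the remaining manipulations are a routine transcription with $\tilde d\mapsto d$ and an added $h$-index, so I anticipate no further difficulty.
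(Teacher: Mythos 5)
Your proposal matches the paper's proof essentially step for step: the same Cauchy--Schwarz decomposition in the $V_h^k$-geometry, the same appeal to Theorem~20.5 of \cite{lattimore2020bandit} with $\sqrt{F}$-subGaussian noise and $\Norm{\theta_h}_2\le\sqrt{d}$, and the same use of Lemma~\ref{lmm:elliptical_potential_variant} for the determinant bound before maximizing over $i\in[m]$. Your closing remark about the cross-player correlation of the noise within a round is a real subtlety that the paper also leaves implicit, but it does not constitute a departure from the paper's argument.
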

\begin{proof}
	The proof is extremely similar to Lemma \ref{lmm:least_square_bound_mg}. By construction, we have
	\begin{align*}
		|(\tilde{r}^k_{h, i}-r_{h, i})(s, \a)|=&\abs{\inner{A_i(s, \a), \widehat{\theta}_h-\theta_h}}\\
		\leq & \Norm{A_i(s, \a)}_{\Sp{V_h^k}^{-1}}\Norm{\widehat{\theta}_h-\theta_h}_{V_h^k}\\
		\leq & \Norm{A_i(s, \a)}_{\Sp{V_h^k}^{-1}}\Sp{\Norm{\theta_h}_2+\sqrt{F\log\Sp{\det(V_h^k)}+F\iota}}.\tag{By Theorem 20.5 in \cite{lattimore2020bandit}.}
	\end{align*}
	Since each element in $\theta_h$ is bounded in $[0, 1]$ by construction, we have $\Norm{\theta_h}_2\leq\sqrt{d}$. 
	
	Then, by Lemma \ref{lmm:elliptical_potential_variant}, we have $\det\Sp{V_h^k}\leq \Sp{1+\frac{mkF}{d}}^d$ since by construction $\Norm{A_i(s, \a)}_2^2\leq F$.
	
	Finally, to make this bound valid for all player $i\in[m]$, we only need to take maximization over $i\in[m]$. Therefore, with probability at least $1-\delta$, we have
	$$|(\tilde{r}^k_{h, i}-r_{h, i})(s, \a)|\leq\max_{i\in[m]}\Norm{A_i(s, \a)}_{\Sp{V_h^k}^{-1}}\sqrt{\beta_k}=\tilde{b}^{k, \reward}_h(s, \a),$$
	where $\sqrt{\beta_k}=\sqrt{d}+\sqrt{Fd\log\Sp{1+\frac{mkF}{d}}+F\iota}$.
\end{proof}

The follow lemma bound the sum of reward bonus under bandit feedback.
\begin{lemma}
	\label{lmm:elliptical_potential}
	For any $h\in[H]$, it holds that
	$$\sum_{k=1}^{K}\max_{i\in[m]}\Norm{A_i(\shk, \ahk)}_{\Sp{V_h^k}^{-1}}\leq \tmco\Sp{\sqrt{dFK}},$$
	where $d=m\sum_{f\in\mc{F}}S^f$.
\end{lemma}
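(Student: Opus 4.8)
The plan is to mirror the argument used for the bandit-feedback bonus sum in the proof of Theorem \ref{theo:ne_mg}, adapting it to the per-timestep design matrices $V_h^k$. The first step is to apply Cauchy--Schwarz over the $K$ episodes to convert the sum of weighted norms into the square root of a sum of squared norms,
$$\sum_{k=1}^{K}\max_{i\in[m]}\Norm{A_i(\shk, \ahk)}_{\Sp{V_h^k}^{-1}}\leq\sqrt{K\sum_{k=1}^{K}\max_{i\in[m]}\Norm{A_i(\shk, \ahk)}^2_{\Sp{V_h^k}^{-1}}},$$
after which it suffices to show the squared-norm sum is $\tmco\Sp{dF}$, since then the right-hand side becomes $\tmco\Sp{\sqrt{dFK}}$ as claimed.

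For the squared-norm sum, the plan is to invoke Lemma \ref{lmm:elliptical_potential_variant} with the identification $A_i^{k'}=A_i(s_h^{k'}, \a_h^{k'})$ and ambient dimension $d=m\sum_{f\in\mc{F}}S^f$. Because each $A_i(s, \a)$ is a 0-1 vector with exactly one nonzero entry per facility in $a_i$, it has at most $\abs{a_i}\leq F$ nonzero entries, so $\Norm{A_i(s,\a)}_2^2\leq F$ and the hypotheses of Lemma \ref{lmm:elliptical_potential_variant} hold. This yields $\sum_{k=1}^{K}\min\Bp{\max_{i\in[m]}\Norm{A_i(\shk,\ahk)}^2_{\Sp{V_h^k}^{-1}}, 1}\leq 2d\log\Sp{1+\frac{mKF}{d}}$. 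The remaining work is to remove the truncation at $1$ and recover the correct factor of $F$.

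The key (and only nonroutine) step is this truncation-removal, which is where the sparse structure of the features must be exploited. Since $V_h^k\succeq I$ by construction, we have $\Sp{V_h^k}^{-1}\preceq I$, hence $\max_{i\in[m]}\Norm{A_i(\shk,\ahk)}^2_{\Sp{V_h^k}^{-1}}\leq\max_{i\in[m]}\Norm{A_i(\shk,\ahk)}^2_2\leq F$. Combining this uniform bound with the elementary inequality $\min\{x, F\}\leq F\min\{x, 1\}$ (valid for all $x\geq 0$ and $F\geq 1$) gives
$$\max_{i\in[m]}\Norm{A_i(\shk,\ahk)}^2_{\Sp{V_h^k}^{-1}}\leq F\min\Bp{\max_{i\in[m]}\Norm{A_i(\shk,\ahk)}^2_{\Sp{V_h^k}^{-1}}, 1}.$$
Summing over $k$ and applying the bound from Lemma \ref{lmm:elliptical_potential_variant} then produces $\sum_{k=1}^{K}\max_{i\in[m]}\Norm{A_i(\shk,\ahk)}^2_{\Sp{V_h^k}^{-1}}\leq 2dF\log\Sp{1+\frac{mKF}{d}}=\tmco\Sp{dF}$, which closes the argument through the Cauchy--Schwarz step. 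The subtlety worth flagging is that this step relies entirely on the uniform bound $\Norm{A_i}_2^2\leq F$ coming from the 0-1 sparsity of the features; without it, the extra dimension-scale factor could not be absorbed cleanly into a single $F$, and the standard elliptical potential lemma (which assumes unit-norm inputs) would not apply directly.
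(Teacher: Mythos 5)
Your proposal is correct and follows essentially the same route as the paper: both arguments combine Cauchy--Schwarz over the $K$ episodes, the uniform bound $\Norm{A_i(\shk,\ahk)}^2_{\Sp{V_h^k}^{-1}}\leq\Norm{A_i(\shk,\ahk)}_2^2\leq F$ from $V_h^k\succeq I$, the reduction $\min\{x,F\}\leq F\min\{x,1\}$, and Lemma \ref{lmm:elliptical_potential_variant}, differing only in whether the truncation is introduced before or after the Cauchy--Schwarz step. Your explicit justification of $\Norm{A_i(s,\a)}_2^2\leq F$ via the 0-1 sparsity of the features matches what the paper uses implicitly.
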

\begin{proof}
	First, since $V_h^k=I+\sum_{k'=1}^{k-1}\sum_{i=1}^{m}A_i(s_h^{k'}, \a_h^{k'})A_i(s_h^{k'}, \a_h^{k'})^\top$, we have $V_h^k\succeq I$ and thus $\Sp{V_h^k}^{-1}\preceq I$. Therefore, we have
	$$\Norm{A_i(\shk, \ahk)}_{\Sp{V_h^k}^{-1}}\leq\Norm{A_i(\shk, \ahk)}_I=\Norm{A_i(\shk, \ahk)}_2\leq \sqrt{F}.$$
	For simplicity, let $A_{h, i}^k=A_i(\shk, \ahk)$. Then, as a result, we have
	\begin{align*}
		\sum_{k=1}^{K}\max_{i\in[m]}\Norm{\Ahik}_{\Sp{V_h^k}^{-1}}=&\sum_{k=1}^{K}\min\Bp{\max_{i\in[m]}\Norm{\Ahik}_{\Sp{V_h^k}^{-1}}, \sqrt{F}}\\
		\leq & \sqrt{K\sum_{k=1}^{K}\min\Bp{\max_{i\in[m]}\Norm{\Ahik}_{\Sp{V_h^k}^{-1}}^2, F}}\\
		\leq &\sqrt{FK\sum_{k=1}^{K}\min\Bp{\max_{i\in[m]}\Norm{\Ahik}_{\Sp{V_h^k}^{-1}}^2, 1}}\\
        \leq & \sqrt{2FKd\log\Sp{1+\frac{mKF}{d}}}\tag{By Lemma \ref{lmm:elliptical_potential_variant}.}\\
        & = \tmco\Sp{\sqrt{dFK}}.
	\end{align*}
\end{proof}

\subsection{Technical Lemmas}

\begin{lemma}
	\label{lmm:sum_sqrtn}
	For any $f\in\mc{F}$, it holds that
	$$\sum_{k=1}^{K}\sum_{h=1}^{H}\sqrt{\frac{1}{\Nhkf(\shkf, n^f(\ahk))\vee 1}}\leq \tmco\Sp{\sqrt{mHS^fT}}.$$
\end{lemma}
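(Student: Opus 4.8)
The plan is to use a standard layer-cake (pigeonhole) summation together with a single application of Cauchy--Schwarz over all visitation triples. Fix the facility $f\in\mc{F}$. For a fixed step $h$ and a fixed pair $(s^f,n)\in\S^f\times\{0,1,\dots,m\}$, I would collect the episodes $k_1<k_2<\cdots<k_M$ at which the observed facility state and congestion satisfy $(\shkf,n^f(\ahk))=(s^f,n)$, where $M=N_h^{K,f}(s^f,n)$ is the final count. Since $\Nhkf(s^f,n)$ counts visits up to and including episode $k$, at the $j$-th such episode $k_j$ the counter equals exactly $j$, so the $\vee 1$ is inert and the contribution of this pair is $\sum_{j=1}^{M}j^{-1/2}\le 2\sqrt{M}$.

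First I would rewrite the double sum by grouping identical triples:
\begin{equation*}
\sum_{k=1}^{K}\sum_{h=1}^{H}\sqrt{\frac{1}{\Nhkf(\shkf,n^f(\ahk))\vee 1}} = \sum_{h=1}^{H}\sum_{(s^f,n)}\sum_{j=1}^{N_h^{K,f}(s^f,n)}\frac{1}{\sqrt{j}}\le 2\sum_{h=1}^{H}\sum_{(s^f,n)}\sqrt{N_h^{K,f}(s^f,n)},
\end{equation*}
where the sum over $(s^f,n)$ ranges over the at most $(m+1)S^f$ possible pairs. Next I would apply Cauchy--Schwarz over all $(h,s^f,n)$ triples, of which there are at most $H(m+1)S^f$. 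The crucial accounting fact is that the counts sum to the total number of samples: for each fixed $h$, every episode contributes exactly one observation at step $h$, so $\sum_{(s^f,n)}N_h^{K,f}(s^f,n)=K$, and hence $\sum_{h,(s^f,n)}N_h^{K,f}(s^f,n)=KH=T$. Therefore
\begin{equation*}
2\sum_{h=1}^{H}\sum_{(s^f,n)}\sqrt{N_h^{K,f}(s^f,n)}\le 2\sqrt{H(m+1)S^f\cdot\sum_{h,(s^f,n)}N_h^{K,f}(s^f,n)}=2\sqrt{H(m+1)S^fT}=\tmco\Sp{\sqrt{mHS^fT}}.
\end{equation*}

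The only genuinely delicate point is the bookkeeping in the first step: I must verify that, under the convention $\Nhkf=\sum_{k'=1}^{k}\mathds{1}\{\cdots\}$ which includes the current episode, the counter takes precisely the values $1,2,\dots,M$ as the same triple is revisited, so that the per-triple sum is exactly $\sum_{j=1}^{M}j^{-1/2}$ and the elementary bound $\sum_{j=1}^{M}j^{-1/2}\le 2\sqrt{M}$ applies cleanly. Everything else is routine; note that no logarithmic factors are actually required here, and the $\tmco(\cdot)$ merely absorbs the constant $2\sqrt{m+1}=\tmco(\sqrt{m})$. The companion bound for cross terms (Lemma \ref{lmm:sum_sqrtnn}) would follow by the same grouping applied to products of two counters together with Cauchy--Schwarz.
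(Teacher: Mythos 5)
Your proof is correct and follows essentially the same route as the paper's: regroup the sum by visitation triples $(h,s^f,n)$ so the counter runs through $1,\dots,N_h^{K,f}(s^f,n)$, bound each per-triple sum by $2\sqrt{N_h^{K,f}(s^f,n)}$, and finish with Cauchy--Schwarz over the at most $H(m+1)S^f$ triples together with $\sum_{h,(s^f,n)}N_h^{K,f}(s^f,n)=T$. The bookkeeping point you flag (the counter including the current episode, so the $\vee 1$ is inert) is handled correctly.
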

\begin{proof}
	Here, we have
	\begin{align*}
		\sum_{k=1}^{K}\sum_{h=1}^{H}\sqrt{\frac{1}{\Nhkf(\shkf, n^f(\ahk))\vee 1}}=&\sum_{h=1}^{H}\sum_{s^f\in\mc{S}^f}\sum_{n=0}^{m}\sum_{\ell=1}^{N_h^{K, f}(s^f, n)}\sqrt{\frac{1}{\ell}}\\
		\leq & 2 \sum_{h=1}^{H}\sum_{s^f\in\mc{S}^f}\sum_{n=0}^{m}\sqrt{N_h^{K, f}(s^f, n)}\tag{By standard technique}\\
		\leq & 2\sqrt{(m+1)HS^f\sum_{h=1}^{H}\sum_{s^f\in\mc{S}^f}\sum_{n=0}^{m}N_h^{K, f}(s^f, n)}\\
		=&\tmco\Sp{\sqrt{mHS^fT}}.
	\end{align*}
	The last line above holds because $\sum_{h=1}^{H}\sum_{s^f\in\mc{S}^f}\sum_{n=0}^{m}N_h^{K, f}(s^f, n)=T$. This is based on a pigeon-hole principle argument. In particular, whenever the players take one more action, for any $f\in\mc{F}$, the count for some tuple $(h, s^f, n)$ will increase exactly by 1.
\end{proof}

\begin{lemma}[\cite{chen2020efficient}]
	\label{lmm:sum_sqrtnn}
	For any $f, f'\in\mc{F}$ and $f\neq f'$, it holds that
	$$\sum_{k=1}^{K}\sum_{h=1}^{H}\sqrt{\frac{1}{\Sp{\Nhkf(\shkf, n^f(\ahk))N_h^{k, f'}(s_h^{k, f'}, n^{f'}(\a_h^{k, f'}))}\vee 1}}\leq\tmco\Sp{m^2HS^fS^{f'}}.$$
\end{lemma}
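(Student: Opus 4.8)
The plan is to bound the cross-facility sum by collapsing the coupled product of two per-facility counts into a single \emph{joint} visitation count, and then to finish with the same harmonic-sum pigeonhole argument used in Lemma~\ref{lmm:sum_sqrtn}. The essential difficulty is that, at a fixed step $h$ of a fixed episode, the configuration $(\shkf, n^f(\ahk))$ of facility $f$ and the configuration $(s_h^{k,f'}, n^{f'}(\ahk))$ of facility $f'$ are in general correlated---they are produced by the same joint action $\ahk$---so the double sum does \emph{not} factor into a product of two single-facility sums, and the single-facility bound of Lemma~\ref{lmm:sum_sqrtn} cannot be applied directly.

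Concretely, for each step $h$ and each pair of configurations $u=(s^f,n)$, $v=(s^{f'},n')$, I would introduce the joint count $N_h^k(u,v)$ recording how many of the first $k$ episodes realize, at step $h$, configuration $u$ for facility $f$ and configuration $v$ for facility $f'$ simultaneously. Since every joint visit to $(u,v)$ is in particular a visit to $u$ for $f$ and to $v$ for $f'$, we have the domination $\Nhkf(s^f,n)\ge N_h^k(u,v)$ and $N_h^{k,f'}(s^{f'},n')\ge N_h^k(u,v)$. Hence the product under the square root is at least $N_h^k(u,v)^2$, and (after the routine $\vee 1$ bookkeeping) the summand is bounded by $1/(N_h^k(u_h^k,v_h^k)\vee 1)$, where $(u_h^k,v_h^k)$ is the pair actually realized at step $h$ of episode $k$. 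Reindexing the $(k,h)$ terms by the realized triple $(h,u,v)$ as in Lemma~\ref{lmm:sum_sqrtn}, each fixed triple contributes $\sum_{\ell=1}^{N_h^K(u,v)} 1/\ell=\tmco(1)$; since $u$ ranges over at most $(m+1)S^f$ values, $v$ over at most $(m+1)S^{f'}$ values, and $h$ over $H$ values, there are at most $H(m+1)^2S^fS^{f'}$ triples, which yields the claimed $\tmco(m^2HS^fS^{f'})$ bound. Notably the factor $T$ cancels because each harmonic sum is only logarithmic in its per-triple count.

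I expect the main obstacle to be conceptual rather than computational: the right object is the joint count $N_h^k(u,v)$, together with the observation that it is dominated by each of the two marginal counts, so that $\sqrt{\Nhkf\cdot N_h^{k,f'}}\ge N_h^k(u,v)$. Once this domination is in place the estimate reduces to a harmonic sum over the product configuration space, which is entirely standard. A secondary point requiring care is the $\vee 1$ convention and whether a count includes the current episode, but exactly as in Lemma~\ref{lmm:sum_sqrtn} this affects only constants and logarithmic factors. As an aside, applying AM-GM in the form $1/\sqrt{ab}\le\tfrac12(1/a+1/b)$ and bounding the two resulting single-facility sums separately would give the even sharper $\tmco\Sp{mH(S^f+S^{f'})}$; the cruder joint-count bound above already suffices to reproduce the cited estimate of \cite{chen2020efficient}.
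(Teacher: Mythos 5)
Your proposal is correct and follows essentially the same route as the paper: the paper likewise introduces the joint counter $N_h^{k,f,f'}(s^f,s^{f'},n,n')$, observes it is dominated by the minimum (hence the geometric mean) of the two marginal counters, and finishes with a harmonic sum over the $H(m+1)^2S^fS^{f'}$ configurations. Your AM--GM aside giving $\tmco\Sp{mH(S^f+S^{f'})}$ is a valid observation but, as you note, unnecessary for the stated bound.
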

\begin{proof}
	We define the joint empirical counter
	$$N_h^{k, f, f'}(s^f, s^{f'}, n, n')=\sum_{k'=1}^{k}\mathds{1}\Bp{(s_h^{k', f}, s_h^{k', f'}, n^f(\a_h^{k'}), n^{f'}(\a_n^{k'}))=(s^f, s^{f'}, n, n')}.$$
	Obviously, we have $N_h^{f, f'}(s^f, s^{f'}, n, n')\leq \min\Bp{\Nhkf(s^f, n), N_h^{k, f'}(s^{f'}, n')}$, which implies
	$$N_h^{k, f, f'}(s, s^{f'}, n, n')\leq\sqrt{\Nhkf(s^f, n)N_h^{k, f'}(s^{f'}, n')}.$$
	Therefore, we have
	\begin{align*}
		&\sum_{k=1}^{K}\sum_{h=1}^{H}\sqrt{\frac{1}{\Sp{\Nhkf(\shkf, n^f(\ahk))N_h^{k, f'}(s_h^{k, f'}, n^{f'}(\a_h^{k, f'}))}\vee 1}}\\
		\leq & \sum_{k=1}^{K}\sum_{h=1}^{H}\frac{1}{N_h^{k, f, f'}(\shkf, s_h^{k, f'}, n^f(\ahk), n^{f'}(\ahk))\vee 1}\\
		= & \sum_{h=1}^{H}\sum_{s^f\in\S^f}\sum_{s^{f'}\in\S^{f'}}\sum_{n=0}^{m}\sum_{n'=0}^{m}\sum_{\ell=1}^{N_h^{K, f, f'}(s^f, s^{f'}, n, n')}\frac{1}{\ell}\\
		= & \tmco\Sp{m^2HS^fS^{f'}}.
	\end{align*}
\end{proof}


\end{document}